\newtheorem{thm}{Theorem}
\newtheorem{prop}[thm]{Proposition}
\newtheorem{lem}[thm]{Lemma}
\newtheorem{cor}[thm]{Corollary}
\theoremstyle{definition}
\theoremstyle{remark}
\newtheorem{rem}[thm]{Remark}
\numberwithin{equation}{section}
\numberwithin{thm}{section}
\newcommand{\ZZ}{\mathbf{Z}}
\newcommand{\CC}{\mathbf{C}}
\newcommand{\RR}{\mathbf{R}}
\newcommand{\norm}[1]{\Vert #1 \Vert}
\newcommand{\bignorm}[1]{\bigg\Vert #1 \bigg\Vert}
\newcommand{\assign}{\coloneqq}
\renewcommand{\leq}{\leqslant}
\renewcommand{\geq}{\geqslant}
\newcommand{\longto}{\longrightarrow}
\newcommand{\stepcomment}[1]{(\textsl{#1})}
\DeclareMathOperator{\Mint}{\mathsf{M}}
\DeclareMathOperator{\Mlo}{\mathsf{M}_{\mathrm{lo}}}
\DeclareMathOperator{\Mhi}{\mathsf{M}_{\mathrm{hi}}}
\DeclareMathOperator{\convcost}{\mathsf{C}}
\DeclareMathOperator{\round}{round}
\DeclareMathSymbol{\wtildesym}{\mathord}{largesymbols}{"65}
\newcommand\lowerwtildesym{%
  \text{\smash{\raisebox{-1.35ex}{%
    $\wtildesym$}}}}
\newcommand\wtilde[1]{%
  \mathchoice
    {\accentset{\displaystyle\lowerwtildesym}{#1}}
    {\accentset{\textstyle\lowerwtildesym}{#1}}
    {\accentset{\scriptstyle\lowerwtildesym}{#1}}
    {\accentset{\scriptscriptstyle\lowerwtildesym}{#1}}
}
\DeclareFontFamily{U}{mathx}{\hyphenchar\font45}
\DeclareFontShape{U}{mathx}{m}{n}{<-> mathx10}{}
\DeclareSymbolFont{mathx}{U}{mathx}{m}{n}
\DeclareMathSymbol{\wbarsym}{\mathord}{mathx}{"73}
\newcommand\lowerwbarsym{%
  \text{\smash{\raisebox{-1.25ex}{%
    $\wbarsym$}}}}
\newcommand\wbar[1]{%
  \mathchoice
    {\accentset{\displaystyle\lowerwbarsym}{#1}}
    {\accentset{\textstyle\lowerwbarsym}{#1}}
    {\accentset{\scriptstyle\lowerwbarsym}{#1}}
    {\accentset{\scriptscriptstyle\lowerwbarsym}{#1}}
}
\begin{document}   

\title{Faster truncated integer multiplication}
\author{David Harvey}
\email{d.harvey@unsw.edu.au}
\address{School of Mathematics and Statistics,
   University of New South Wales, Sydney NSW 2052, Australia}

\begin{abstract}
   We present new algorithms for computing the low $n$ bits or
   the high~$n$ bits of the product of two $n$-bit integers.
   We show that these problems may be solved in asymptotically
   $75\%$ of the time required to compute the full $2n$-bit product,
   assuming that the underlying integer multiplication algorithm relies on
   computing cyclic convolutions of sequences of real numbers.
\end{abstract}

\maketitle

\section{Introduction}

Let $n \geq 1$ and let $u$ and $v$ be integers
in the interval $0 \leq u, v < 2^n$.
We write $\Mint(n)$ for the cost of computing
the \emph{full product} of $u$ and $v$,
which is just the usual $2n$-bit product $uv$.
Unless otherwise specified, by `cost' we mean the number of bit operations,
under a model such as the multitape Turing machine \cite{Pap-complexity}.

In this paper we are interested in two types of \emph{truncated product}.
The \emph{low product} of $u$ and~$v$ is the unique integer $w$
in the interval $0 \leq w < 2^n$ such that $w \equiv uv \pmod{2^n}$,
or in other words, the low $n$ bits of $uv$.
We denote the cost of computing the low product by $\Mlo(n)$.

The \emph{high product} of $u$ and $v$ consists of the high $n$ bits of $uv$,
except that we allow a small error in the lowest bit.
More precisely, the high product is defined to be
any integer $w$ in the range $0 \leq w \leq 2^n$
such that $|uv - 2^n w| < 2^n$.
Thus there are at most two possible values for the high product,
and an algorithm that computes it is permitted to return either one.
We denote the cost of computing the high product by $\Mhi(n)$.

There are many applications of truncated products in computer arithmetic.
The most obvious example is high-precision arithmetic on real numbers:
to compute an $n$-bit approximation to the product of
two real numbers with $n$-bit mantissae,
we may scale by an appropriate power of two
to convert the inputs to $n$-bit integers,
and then compute the high product of those integers.
Further examples include Barrett's \cite{Bar-RSA}
and Montgomery's \cite{Mon-modular} algorithms for modular arithmetic.

It is natural to ask whether a truncated product can be computed
more quickly than a full product.
This is indeed the case for small $n$:
in the classical quadratic-time regime,
we can compute a truncated product in about half the time of a full product,
because essentially only half of the $n^2$ bit-by-bit products
contribute to the desired output.

However, as $n$ grows,
and more sophisticated multiplication algorithms are deployed,
these savings begin to dissipate.
Consider for instance Karatsuba's algorithm,
which has complexity $\Mint(n) = O(n^\alpha)$ for
$\alpha = \log3 / \log 2 \approx 1.58$.
Mulders showed \cite{Mul-short} that Karatsuba's algorithm may be adapted
to obtain bounds for $\Mhi(n)$ and $\Mlo(n)$ around $0.81 \Mint(n)$.
However, it is not known how to reach $0.5 \Mint(n)$ in this regime.

For much larger values of $n$,
the most efficient integer multiplication algorithms known are based on FFTs
(fast Fourier transforms).
Currently, the asymptotically fastest such algorithm has complexity
$\Mint(n) = O(n \log n)$ \cite{HvdH-nlogn},
and it is widely believed that this bound is optimal up to a constant factor.

It has long been thought that the best way to compute a truncated product
using FFT-based algorithms is to simply compute the full product
and then discard the unwanted part of the output.
One might be able to save~$O(n)$ bit operations compared to the full product
by skipping computations that do not contribute
to the desired half of the output,
but no bounds of the type $\Mlo(n) < c \Mint(n)$ or $\Mhi(n) < c \Mint(n)$
have been proved for any constant~$c < 1$.

For some closely related problems,
one can actually \emph{prove} that it is not possible to do better
than computing the full product.
For example, in a suitable algebraic model,
the multiplicative complexity of any algorithm that computes
the low $n$ coefficients of the product of two polynomials
of degree less than $n$ is at least $2n-1$ \cite[Thm.~17.14]{BCS-complexity},
which is the same as the multiplicative complexity of the full product.
By analogy, one might expect the same sort of lower bound
to apply to truncated integer multiplication.

In this paper we show that this belief is mistaken:
we present algorithms that compute high and low products of integers
in asymptotically $75\%$ of the time required for a full product.
The new algorithms require that the underlying integer multiplication
is carried out via a cyclic convolution of sequences of real numbers.
This includes any real convolution algorithm based on FFTs,
and in particular the $O(n \log n)$ multiplication algorithm
of \cite{HvdH-nlogn}.

Unfortunately,
because the new methods rely heavily on the archimedean property of $\RR$,
we do not yet know how to obtain this 25\% reduction in complexity
for arbitrary integer multiplication algorithms.
In particular, we are currently unable to establish analogous results
for integer multiplication algorithms based on FFTs over other rings,
such as finite fields \cite{Pol-ntt}.

Although we focus on time complexity in this paper,
the new techniques also have implications for space complexity.
For example, to multiply two floating-point numbers with $n$-bit mantissae
using standard FFT methods,
the transform of each multiplicand occupies roughly $4n$ bits of storage.
This is true regardless of the type of FFT algorithm used;
it holds for FFTs over finite fields just as well as for real or complex FFTs.
Using the new methods, the storage required drops to roughly $3n$ bits.
This improvement in space complexity may be significant in applications
where storage is the bottleneck,
such as extremely high-precision evaluation of
numerical constants such as $\pi$.
Furthermore, if the computation is I/O-bound,
then this may lead directly to a corresponding improvement in computation time.
We will not explore this issue further in this paper.

The remainder of the paper is structured as follows.
In Section~\ref{sec:setup} we state our main results precisely,
after giving some preliminary definitions.
Section~\ref{sec:low} presents the new algorithm for the low product,
including the proof of correctness and complexity analysis.
Section \ref{sec:high} does the same for the high product.
Section~\ref{sec:code} gives some performance data
for an implementation of the new algorithms.

\smallskip
\emph{Historical note.}
An earlier version of this paper pointed out that the new methods for truncated
multiplication may be used to design an integer multiplication algorithm
having complexity $\Mint(n) = O(n \log n \, K^{\log^* n})$ with $K = 6$.
At the time, this was the best known asymptotic bound for $\Mint(n)$.
This result was subsequently superseded by 
\cite{HvdH-cyclotomic1} ($K = 4 \sqrt 2 \approx 5.66$),
\cite{HvdH-lattice} ($K = 4$),
and then \cite{HvdH-nlogn} ($\Mint(n) = O(n \log n)$).

\section{Setup and statement of results}
\label{sec:setup}

\subsection{Fixed point arithmetic and real convolutions}
\label{sec:arithmetic}

We write $\lg x$ for $\lceil \log_2 x \rceil$.
To simplify analysis of numerical error,
all algorithms are assumed to work with the following
fixed-point representation for real numbers.
(See \cite[\S3]{HvdHL-mul} for a more detailed treatment.)
Let $p \geq 1$ be a precision parameter.
We write $\RR_p$ for the set of real numbers of the form $a/2^p$
where $a$ is an integer in the interval $-2^p \leq a \leq 2^p$.
Thus $\RR_p$ models the unit interval $[-1, 1]$,
and elements of $\RR_p$ are represented using $p + O(1)$ bits of storage.
For $e \in \ZZ$, we write $2^e \RR_p$ for the set of real numbers
of the form $2^e x$ where $x \in \RR_p$.
An element of $2^e \RR_p$ is represented simply by its mantissa in $\RR_p$;
the exponent $e$ is always known from context, and is not explicitly stored.

We will frequently work with quotient rings of the form $\RR[X]/P(X)$
where $P(X)$ is some fixed monic polynomial of positive degree,
such as $X^N - 1$.
If $F \in \RR[X]/P(X)$ and $\deg P = N$,
we write $F_0, \ldots, F_{N-1}$ for the coefficients of~$F$
with respect to the standard monomial basis;
that is, $F = F_0 + \cdots + F_{N-1} X^{N-1} \pmod{P(X)}$.
For such $F$ we define a norm
\begin{equation}
\label{eq:norm-definition}
   \norm{F} \coloneqq \max_{0 \leq i < N} |F_i|.
\end{equation}
We write $2^e \RR_p[X]/P(X)$ for the set of polynomials $F \in \RR[X]/P(X)$
whose coefficients $F_0, \ldots, F_{N-1}$ lie in $2^e \RR_p$;
this is a slight abuse of notation, as $2^e \RR_p$ is not really a ring.
Algorithms always represent such a polynomial by its coefficient vector
$(F_0, \ldots, F_{N-1}) \in (2^e \RR_p)^N$.

We assume that we have available a subroutine \textsc{Convolution}
with the following properties.
It takes as input two parameters $N \geq 2$ and $p \geq 1$, and polynomials
\[ F, G \in 2^e \RR_p[X]/(X^N - 1). \]
Let $H \coloneqq FG \in \RR[X]/(X^N - 1)$; more explicitly,
\[ H_k \coloneqq \sum_{i+j \equiv k \bmod N} F_i G_j \quad
   \in \mathopen{\big[}-2^{2e} N, 2^{2e} N\big], \qquad 0 \leq k < N. \]
Then \textsc{Convolution} is required to output a polynomial
\[ \wtilde H \in 2^{2e+ \lg N}\RR_p[X]/(X^N - 1) \]
such that
\[ \norm{\wtilde H - H} < 2^{2e + \lg N - p}. \]
In other words, \textsc{Convolution} computes a $p$-bit approximation
to the cyclic convolution of two real input sequences of length $N$.

We write $\convcost(N, p)$ for the bit complexity of \textsc{Convolution}.
We treat this routine as a black box;
its precise implementation is not important for our purposes.
A typical implementation would execute a
real-to-complex FFT for each input sequence,
multiply the Fourier coefficients pointwise,
and then compute an inverse complex-to-real transform to recover the result.
Internally, it should work to precision slightly higher than $p$
to control rounding errors during intermediate computations.
(For an explicit error bound, see for example \cite[Theorem 3.6]{BZ-mca}.)
The routine may completely ignore the exponent parameter $e$.

\subsection{The full product}

For completeness,
we recall the well-known algorithm that uses \textsc{Convolution} to compute
the full product of two $n$-bit integers (Algorithm~\ref{algo:full} below).
It depends on two parameters:
a chunk size~$b$, and a transform length~$N$, where $Nb \geq n$.
The idea is to cut the integers into $N$ chunks of $b$ bits,
thereby reducing the integer multiplication problem to the problem of
multiplying two polynomials in $\ZZ[X]$ modulo $X^{2N} - 1$.

We will not discuss in this paper the question of
optimising the choice of $b$ and~$N$.
The optimal choice of~$N$ will involve some balance between
making~$N$ as close to~$n/b$ as possible,
but also ensuring that~$N$ is sufficiently smooth
(has only small prime factors)
so that FFTs of length~$N$ are as efficient as possible.
(An alternative approach is to use ``truncated FFTs''~\cite{vdH-TFT-apps},
which eliminates the need to choose a smooth transform length.
However, this makes no difference asymptotically.
Despite the overlapping terminology,
it is not clear whether the new truncated multiplication algorithms
can be adapted to the case of truncated FFTs.
This is an interesting question for future research.)

\begin{algorithm}
   \SetAlgoLined
   \DontPrintSemicolon
   \KwIn{Parameters $n \geq 1$, $b \geq 1$, $N \geq 2$ with $N b \geq n$,
      \newline integers $0 \leq u, v < 2^n$.}
   \KwOut{$uv$ (the full product of $u$ and $v$).}
   $p \assign 2b + \lg N + 2$.\;
   \stepcomment{Split inputs}
   Compute $u_0, \ldots, u_{N-1}$ and $v_0, \ldots, v_{N-1}$
   with $0 \leq u_i, v_i < 2^b$ such that
   \[ u = \sum_{i=0}^{N-1} u_i 2^{ib},
      \qquad v = \sum_{i=0}^{N-1} v_i 2^{ib}, \]
   and let
   \[ \wbar U(X) \assign \sum_{i=0}^{N-1} u_i X^i,
      \qquad \wbar V(X) \assign \sum_{i=0}^{N-1} v_i X^i,
      \qquad \wbar U, \wbar V \in 2^b\RR_p[X]/(X^{2N} - 1). \]
   \; \vspace{-10pt}
   \stepcomment{Perform convolution}
   Use \textsc{Convolution} to compute
   \[ \wbar W \in 2^{2b+\lg 2N}\RR_p[X]/(X^{2N} - 1) \]
   such that
   \[\norm{\wbar W - \wbar U \wbar V} < 2^{2b+\lg 2N-p}. \]
   \; \vspace{-10pt}
   \stepcomment{Overlap-add}
   \KwRet{$\sum_{i=0}^{2N-2} \round(\wbar{W_i}) \, 2^{ib}$}. \;
\caption{Full product}
\label{algo:full}
\end{algorithm}

\begin{thm}[Full product]
\label{thm:full}
   Let $n \geq 1$, and let $u$ and $v$ be $n$-bit integers.
   Let $b \geq 1$ and $N \geq 2$ be integers such that $N b \geq n$.
   Then Algorithm~\ref{algo:full} correctly computes
   the full product of $u$ and $v$.
   Assuming that $\lg N = O(b)$, its complexity is
   \[ \Mint(n) = \convcost(2N, 2b + \lg N + 2) + O(Nb). \]
\end{thm}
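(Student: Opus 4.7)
The plan is to verify correctness first and then read off the complexity from the structure of the algorithm. For correctness, the key observation is that because $\bar U$ and $\bar V$ each have degree at most $N-1$, the reduction modulo $X^{2N}-1$ does not introduce any wraparound in their product: for $0 \leq k \leq 2N-2$, the coefficient $(\bar U \bar V)_k$ is literally the integer $\sum_{i+j=k} u_i v_j$, which lies in $[0, (k+1)2^{2b}] \subseteq [0, 2N \cdot 2^{2b}]$. So in fact $\bar U \bar V$ has nonnegative integer coefficients, and $\sum_{k=0}^{2N-2} (\bar U \bar V)_k \, 2^{kb} = uv$ by the definitions of $u$, $v$ and the fact that $Nb \geq n$.

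Next I would bound the rounding error. The convolution spec gives $\norm{\bar W - \bar U \bar V} < 2^{2b + \lg 2N - p}$. Using $\lg 2N = \lg N + 1$ (immediate from $\lceil 1 + \log_2 N\rceil = 1 + \lceil \log_2 N \rceil$) and the choice $p = 2b + \lg N + 2$, the exponent simplifies to $-1$, so each coefficient of $\bar W$ lies strictly within $1/2$ of the integer $(\bar U \bar V)_k$. Therefore $\round(\bar W_k) = (\bar U \bar V)_k$ exactly, and the returned sum equals $uv$. Indeed this is precisely the reason for the particular choice of $p$ in the first line of the algorithm.

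For the complexity, the work breaks into three pieces: splitting $u$ and $v$ into $N$ chunks of $b$ bits each costs $O(Nb)$; the call to \textsc{Convolution} contributes $\convcost(2N, 2b + \lg N + 2)$; and reconstructing the answer consists of rounding $2N-1$ fixed-point numbers of bit size $O(b + \lg N) = O(b)$ (using the hypothesis $\lg N = O(b)$) and accumulating the shifted integers $\round(\bar W_k) 2^{kb}$ into an $O(Nb)$-bit result, which can be done in $O(Nb)$ bit operations. Adding these contributions yields the stated bound. I do not anticipate any serious obstacle; the only point that needs even minor care is the clean bookkeeping of the exponent in the error bound, and that falls out of the definition of $p$.
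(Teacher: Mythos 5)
Your proposal is correct and follows essentially the same route as the paper's proof: observing that $\deg(\bar U \bar V) \leq 2N-2$ so reduction mod $X^{2N}-1$ is lossless, verifying via $\lg 2N = \lg N + 1$ and the choice of $p$ that the convolution error is strictly below $1/2$ so rounding recovers the exact integer coefficients, and then reading off the complexity as the convolution cost plus $O(Nb)$ for splitting and the overlap-add recombination.
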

\begin{proof}
   The condition $Nb \geq n$ ensures that the decompositions of $u$ and $v$
   into $u_0, \ldots, u_{N-1}$ and $v_0, \ldots, v_{N-1}$ in line 2 are legal.
   Let
   \[ U(X) \coloneqq \sum_{i=0}^{N-1} u_i X^i \in \ZZ[X],
      \qquad V(X) \coloneqq \sum_{i=0}^{N-1} v_i X^i \in \ZZ[X], \]
   and 
   \[ W(X) \coloneqq U(X) V(X) = \sum_{i=0}^{2N-2} w_i X^i \in \ZZ[X]. \]
   Note that $\wbar U$ and $\wbar V$ are the images of $U$ and $V$ in
   $2^b \RR_p[X]/(X^{2N} - 1)$,
   and by construction $u = U(2^b)$ and $v = V(2^b)$.
   Since $W(X)$ has degree at most $2N-2$,
   it is determined by its remainder modulo $X^{2N} - 1$.
   Line~3 computes an approximation~$\wbar W$ to this remainder with
   \[ |\wbar{W_i} - w_i| < 2^{2b+\lg2N-p} = \tfrac12 \]
   for each~$i$.
   The function $\round(\mspace{2mu}\cdot\mspace{2mu})$ in line 4
   rounds its argument to the nearest integer,
   with ties broken in either direction as convenient.
   Since $w_i \in \ZZ$, we deduce that $\round(\wbar{W_i}) = w_i$ for each $i$;
   hence line 4 returns $W(2^b) = U(2^b) V(2^b) = uv$.

   The main term in the complexity bound arises from the
   \textsc{Convolution} call in line 3.
   The secondary term consists of the splitting step in line 2,
   which costs $O(b)$ bit operations per coefficient,
   and the overlap-add procedure in line 4,
   which requires $O(b + \lg N) = O(b)$ bit operations per coefficient.
\end{proof}

\subsection{Statement of results}
\label{sec:results}

The main results of this paper are the following analogues
of Theorem \ref{thm:full} for the low product and high product.
These are proved in Section \ref{sec:low} and
Section \ref{sec:high} respectively.
\begin{thm}[Low product]
\label{thm:low}
   Let $n \geq 1$, and let $u$ and $v$ be $n$-bit integers.
   Let $b \geq 4$ and $N \geq 3$ be integers such that $N b \geq n$.
   Then Algorithm~\ref{algo:low} (see \S\ref{sec:low})
   correctly computes the low product of $u$ and $v$.
   Assuming that $\lg N = O(b)$, its complexity is
   \begin{equation}
   \label{eq:low-bound}
      \Mlo(n) = \convcost(N, 3b + \lg N + 6) + O(N \Mint(b)).
   \end{equation}
\end{thm}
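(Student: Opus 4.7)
The plan is to mirror the three-part structure of the proof of Theorem~\ref{thm:full}: reduce to a polynomial statement via the chunking, verify correctness using the \textsc{Convolution} error bound, and then assemble the complexity estimate. Let $U(X), V(X) \in \ZZ[X]$ be the polynomials of degree less than $N$ whose coefficient vectors are the $b$-bit chunks of $u$ and $v$, set $W = UV = \sum_{k=0}^{2N-2} w_k X^k$, and split $W = L + X^N Q$ with $L = \sum_{k=0}^{N-1} w_k X^k$ and $Q = \sum_{k=0}^{N-2} w_{k+N} X^k$. Since each $|w_k| \leq N 2^{2b}$, the contribution of $w_k 2^{kb}$ for $k \geq N$ to $uv$ lies at bit positions $\geq Nb \geq n$, so the low product equals $L(2^b) \bmod 2^n$.

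First I would establish what I expect to be the core algebraic identity underlying Algorithm~\ref{algo:low}: replacing the length-$2N$ transform of Theorem~\ref{thm:full} with a length-$N$ cyclic convolution of the chunk vectors produces an approximation to $W \bmod (X^N - 1) = L + Q$, and since $2^{Nb} \equiv 0 \pmod{2^n}$ we have
\[ L(2^b) \equiv (L+Q)(2^b) - Q(2^b) \pmod{2^n}. \]
The crucial step, and what I expect to be the main obstacle, is verifying that the algorithm can evaluate $Q(2^b) \bmod 2^n$ using only $O(N)$ multiplications of $b$-bit integers---a nontrivial reduction, since the polynomial $Q$ has $\binom{N}{2}$ chunk products in its coefficients when expanded naively. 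I would carefully follow the algorithm's identity for this correction and confirm that every contribution to $Q(2^b) \bmod 2^n$ is accounted for exactly; the integer arithmetic involved should be carried out without any rounding error and costs only $O(Nb)$ bit operations beyond the $O(N)$ products.

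For the numerical error analysis, applying the specification of \textsc{Convolution} to the length-$N$ transform with chunk inputs in $2^b \RR_p$ gives $\norm{\bar W - (L+Q)} < 2^{2b + \lg N - p}$; with $p = 3b + \lg N + 6$ this is at most $2^{-b-6} < 1/2$, so rounding each $\bar W_k$ to the nearest integer recovers $L_k + w_{k+N}$ exactly. The extra $b$ bits of slack compared with the $p = 2b + \lg N + 2$ of Theorem~\ref{thm:full} presumably appear because the correction step implicitly manipulates quantities of magnitude $\approx N 2^{3b}$ rather than $\approx N 2^{2b}$ (for instance if the algorithm at some stage operates on partial sums scaled by an additional factor of $2^b$); pinning down this factor is the part of the error bookkeeping I would approach most carefully. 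After the rounded coefficients are combined with the $O(N)$ exact chunk products that realise $Q(2^b) \bmod 2^n$, the final integer output equals $L(2^b) \bmod 2^n$, and the bound \eqref{eq:low-bound} follows by adding the single \textsc{Convolution} call of cost $\convcost(N, 3b + \lg N + 6)$ to the $O(N \Mint(b))$ correction and absorbing the $O(Nb)$ chunking and overlap-add overhead into the correction term using $\Mint(b) \geq b$.
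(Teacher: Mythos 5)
Your proposal is based on a fundamentally different — and unfortunately infeasible — reduction than the one the paper actually uses, and the gap you flag as ``the main obstacle'' cannot be closed.

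You propose reducing $W = UV$ modulo $X^N - 1$ to get $L + Q$, and then recovering $L(2^b)$ by subtracting a correction $Q(2^b) \bmod 2^n$, which you hope can be realised with $O(N)$ chunk products. But $Q = \sum_{k=0}^{N-2} w_{k+N} X^k$ has as its coefficients precisely the \emph{high} coefficients of the full product $UV$; computing them exactly requires $\Theta(N^2)$ chunk products (the coefficient $w_{N+k}$ alone is a sum of $N-1-k$ products $u_i v_j$), and reducing $Q(2^b)$ modulo $2^n$ kills only a bounded number of the top terms. There is no identity in the algorithm that collapses this to $O(N)$ work; indeed, computing $Q(2^b)$ is essentially the high-product problem, so this route is circular. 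Relatedly, your guess that the extra $b$ bits of precision come from ``partial sums scaled by an additional factor of $2^b$'' is not what is happening.

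The paper's actual mechanism is Proposition~\ref{prop:low-cancel}: it replaces the modulus $X^N - 1$ by $A(X) = X^N + 2^{-b} X - 1$, which is engineered so that $A(2^b) = 2^{Nb}$. Reducing $W$ modulo $A(X)$ to get $L(X)$ then gives $L(2^b) \equiv W(2^b) \pmod{2^{Nb}}$ \emph{automatically} — no correction term at all. The price is that the coefficients of $L$ now live in $2^{-b}\ZZ$ rather than $\ZZ$ (the $2^{-b}X$ term introduces denominators), which is the true source of the extra $b$ bits of working precision. The second ingredient you are missing is Section~\ref{sec:low-iso}: since \textsc{Convolution} only computes products modulo $X^N - 1$, the paper constructs mutually inverse ring isomorphisms $\alpha^*: \RR[X]/A(X) \to \RR[X]/(X^N-1)$ and $\beta^*$ (built from the power-series inverse pair $\alpha$, $\beta$ that biject the roots of $A$ and of $X^N - 1$), and these are what convert multiplication modulo $A(X)$ into an ordinary cyclic convolution. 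The $O(N \Mint(b))$ term in \eqref{eq:low-bound} comes from applying truncated versions of $\alpha^*$ and $\beta^*$ (Propositions~\ref{prop:approx-alpha} and~\ref{prop:approx-beta}), each costing $O(1)$ real operations per coefficient because the series coefficients $\alpha_{r,k}$, $\beta_{r,k}$ decay like $2^{-r(b-2)}$ — not from any exact correction of $Q(2^b)$.
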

\begin{thm}[High product]
\label{thm:high}
   Let $n \geq 1$, and let $u$ and $v$ be $n$-bit integers.
   Let $b \geq 4$ and $N \geq 3$ be integers such that
   $(N + 1)b \geq n + \lg N + 2$.
   Then Algorithm~\ref{algo:high} (see \S\ref{sec:high})
   correctly computes the high product of $u$ and $v$.
   Assuming that $\lg N = O(b)$, its complexity is
   \begin{equation}
   \label{eq:high-bound}
      \Mhi(n) = \convcost(N, 3b + \lg N + 9) + O(N \Mint(b)).
   \end{equation}
\end{thm}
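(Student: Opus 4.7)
My plan is to follow the template of the proof of Theorem~\ref{thm:full}, adapted to two new features of Algorithm~\ref{algo:high}: a real-valued weighting applied to the chunks of $u$ and $v$ before the convolution, which enables a length-$N$ (rather than length-$2N$) cyclic convolution to isolate the high half of the coefficient vector via aliasing; and the more generous tolerance of the high product ($|uv - 2^n w| < 2^n$), which allows the moderate precision $p = 3b + \lg N + 9$ to suffice where a coefficient-by-coefficient reconstruction of $uv$ would require much more.

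First I would verify that the chunking is legitimate under the hypothesis $(N+1)b \geq n + \lg N + 2$, and set up integer polynomials $U, V \in \ZZ[X]$ with $u = U(2^b)$ and $v = V(2^b)$, so that the high product is recovered from the high-degree coefficients of $W := UV = \sum_k w_k X^k$ evaluated at $X = 2^b$, up to the high-product tolerance. The slack encoded in the chunking inequality (versus the bare $Nb \geq n$ used in Theorem~\ref{thm:full}) is what will accommodate the allowed error in the high product. I would then describe what the weighted length-$N$ cyclic convolution computes: by the \textsc{Convolution} specification the output $\tilde H$ satisfies $\norm{\tilde H - H} < 2^{2e + \lg N - p}$, where $e$ is the input exponent and $H \in \RR[X]/(X^N - 1)$ is the exact weighted cyclic product; each coefficient of $H$ is an aliased combination $\alpha_k w_k + \beta_k w_{k+N}$, with $\alpha_k, \beta_k$ coming from the weighting, so that after dividing by $\beta_k$ one obtains an approximation to $w_{k+N}$ with a small aliasing leak from $w_k$.

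Second, I would bound the total error in the final high-product estimate by summing, for $k = 0, \ldots, N-2$, the per-position errors scaled by the appropriate powers of~$2^b$. There are two contributions to aggregate: the \textsc{Convolution} rounding error, amplified by the inverse weights $1/\beta_k$; and the deterministic aliasing leak $(\alpha_k/\beta_k) w_k$, which the weighting is designed to suppress. Showing that the aggregate stays strictly below $2^{n-1}$ (so that, after scaling by $2^{-n}$, the approximation to $uv/2^n$ is within $1/2$ of the truth and rounds to a valid high product) is the heart of the proof, and the constants $+9$ in the precision and $+2$ in the chunking inequality are calibrated exactly to make this work for all $b \geq 4$, $N \geq 3$.

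For the complexity, the dominant cost is the single \textsc{Convolution} call at length~$N$ and precision $3b + \lg N + 9$, giving the main term in~\eqref{eq:high-bound}. The secondary $O(N \Mint(b))$ term absorbs (i) multiplying each of the $O(N)$ input chunks by a weight of bitsize $O(b)$, each multiplication costing $O(\Mint(b))$; (ii) the $O(Nb)$ splitting and overlap-add overhead, dominated by $O(N \Mint(b))$ since $\Mint(b) \geq b$; and (iii) any $O(1)$-per-position exact correction products. The assumption $\lg N = O(b)$ keeps the $O(N \lg N)$ carry-propagation costs inside this bound. The main obstacle I anticipate is the error certification of step two: unlike Theorem~\ref{thm:full}, where each $w_k$ is an integer and an error below $1/2$ gives exact recovery, here errors accumulate additively across $N-1$ positions weighted by powers of $2^b$, and one has to verify that the precision bump from $2b$ to $3b$ and the chunking slack together leave just enough room to fit both the convolution and aliasing errors under $2^{n-1}$. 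This is delicate bookkeeping, but it is bookkeeping that the constants in the theorem statement are set up precisely to make succeed.
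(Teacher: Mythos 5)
Your outline has the right macroscopic shape (length-$N$ convolution, exploit the high-product tolerance), but the mechanism you describe is not the one in the paper and would not work as stated. You posit a diagonal weighting of the chunks so that the length-$N$ cyclic convolution produces, in each output position, an aliased combination $\alpha_k w_k + \beta_k w_{k+N}$, which you then unwind by dividing out $\beta_k$. That is the classical discrete weighted transform, and it applies only when the target modulus has the form $X^N - a$. The actual modulus here is $B(X) = X^{N+1} - 2^b X^N + 2^b$, chosen because $X^N(1 - 2^{-b}X) \equiv 1 \pmod{B(X)}$ makes the low half of $W(X)$ cancel when one evaluates at $X = 2^b$ (Proposition~\ref{prop:high-cancel}). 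Converting multiplication modulo $B(X)$ to a cyclic convolution requires the pair $\gamma^\dagger, \delta^\dagger$, which are \emph{ring isomorphisms} between $\RR[X]/B(X)$ and $\RR[X]/(X^N-1) \oplus \RR$; the polynomial part is built from a truncated power-series precomposition by the conformal map $\gamma(z)$ (not a per-coefficient weight: each output coefficient mixes $O(\lambda)$ inputs), and the scalar $\RR$ summand is a Chinese-remainder channel that tracks the auxiliary real root $\rho \approx 2^b$ of $B(X)$. Your proposal has no analogue of $\rho$ or of the CRT channel, and without them the degrees do not match: $B(X)$ has degree $N+1$, not $N$.

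Your error analysis diverges at the same point. The paper does not aggregate per-coefficient floating-point errors, weighted by powers of $2^b$, into a bound below $2^{n-1}$. Instead, it works to precision $p = 3b + \lg N + 9$ (the jump from $2b$ to $3b$ compensates for the coefficients of $H := (1 - 2^{-b}X)\bar U \bar V \bmod B(X)$ lying in $2^{-b}\ZZ$ rather than $\ZZ$, not for the high-product tolerance), shows $\norm{\bar W - H} < 2^{-b}/2$, and then rounds each $2^b \bar W_i$ to the exact integer $2^b H_i$. This eliminates \emph{all} numerical error. What remains is the purely deterministic residue $|W(2^b) - 2^{Nb}H(2^b)| \leq 2^{(N-1)b+1}\max|w_i|$ from Proposition~\ref{prop:high-cancel}, together with the final rounding of $t$; the hypothesis $(N+1)b \geq n + \lg N + 2$ is calibrated to make that residue at most $2^n/2$, not to absorb accumulated floating-point noise. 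Without the exact-integer-recovery step and without knowledge of the factor $(1-2^{-b}X)$ and the root $\rho$, the proof as you sketch it cannot be carried through.
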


Comparing these complexity bounds to Theorem \ref{thm:full} (the full product),
we observe that the convolution length has dropped from $2N$ to $N$,
but the working precision has increased from roughly $2b$ to roughly~$3b$.
To understand the implications for the overall complexity,
we need to make further assumptions on the behaviour of $\convcost(N, p)$.
We consider two scenarios.

\medskip
\textsl{Scenario \#1: asymptotic behaviour as $n \to \infty$.}
Assume that the transform length $N$ is restricted to suitably smooth values,
such as the \emph{ultrasmooth numbers} defined by
Bernstein \cite{Ber-redundancy},
so that asymptotically 100\% of the FFT work is performed
using radix-2 transforms.
Assume also that the working precision $p$ is exponentially smaller than $N$,
but somewhat larger than $\log N$,
say $p = \Theta(\log^2 N)$.
Under these assumptions it is reasonable to expect that the complexity of
the underlying real convolution is quasi-linear with respect to the
the total bit size $Np$.
In particular, for any absolute constants $k \geq 1$ and $c > 0$,
where $k$ is an integer, we should have
\[ \convcost(kN, cp) = (kc + o(1)) \convcost(N, p). \]
This is the case for all FFT-based convolution algorithms known to the author.

Now, given some large $n$,
assume that we choose $N$ and $b$ such that
$Nb = (1 + o(1)) n$ and $b = \Theta(\log^2 n)$,
as is done for example in \cite[\S6]{HvdHL-mul}.
Then
\[ 2b + \lg N + 2 = (2 + o(1)) b, \]
so according to Theorem \ref{thm:full} we have
\[ \Mint(n) = \convcost(2N, 2b + \lg N + 2) + O(Nb)
            = (4 + o(1)) \convcost(N, b). \]
On the other hand, for the low product, Theorem \ref{thm:low} yields
\[ \Mlo(n) = \convcost(N, 3b + \lg N + 6) + O(N \Mint(b))
            = (3 + o(1)) \convcost(N, b). \]
We conclude that
\[ \frac{\Mlo(n)}{\Mint(n)} = \frac{3}{4} + o(1), \]
justifying our assertion that asymptotically the new low product algorithm
saves 25\% of the total work compared to the full product.
Similar remarks apply to the high product.

\medskip
\textsl{Scenario \#2: fixed word size.}
Now let us consider the situation faced by a programmer working on
a modern microprocessor with hardware support for a fixed word size,
such as the 53-bit double-precision floating point type
provided by the IEEE 754 standard.
In this setting,
the \textsc{Convolution} subroutine takes as input two vectors of coefficients
represented by this data type,
and computes their cyclic convolution using some sort of FFT,
taking direct advantage of the hardware arithmetic.
We assume that $b$ is chosen as large as possible so that
the FFTs can be performed in this way;
for example, under IEEE 754 we would require that
$3b + \lg N + \beta_N \leq 53$ for the low product,
where $\beta_N$ is an allowance for numerical error.
Obviously in this scenario it does not make sense to allow $n \to \infty$,
and it also does not quite make sense to measure complexity by the
number of ``bit operations''.
Instead, $n$ should be restricted to lie in some finite
(possibly quite large) range,
and a more natural measure of complexity is the number of word operations
(ignoring issues such as locality and parallelism).

We claim that it is still reasonable to expect
a reduction in complexity close to 25\%.
To see this, consider a full product computation for a given $n$,
with splitting parameters $N$ and $b$.
Let $N'$ and $b'$ be the splitting parameters for the
corresponding truncated product,
for the same value of $n$.
We should choose $b'$ around $2b/3$ to ensure that we still take
maximum advantage of the available floating-point type.
Then we should choose $N'$ around $3N/2$ to compensate for the smaller chunks.
Now observe that (for large $n$) the bulk of the work for the full product
consists of FFTs of length $2N$,
but for the truncated products the FFT length is reduced to around $3N/2$.
Since the FFTs run in quasilinear time (i.e., word operations),
we expect to see roughly 25\% savings.

In practice the expected 25\% speedup will be tempered somewhat by the
additional linear-time work inherent in the truncated product algorithms,
such as the evaluation of $\alpha^*$ and $\beta^*$ in Algorithm \ref{algo:low}.
The situation is also complicated by the fact that we are constrained
to choose smooth transform lengths.
Section \ref{sec:code} gives some timings,
showing the speedup actually achieved by an implementation.

\section{The low product}
\label{sec:low}

The aim of this section is to prove Theorem \ref{thm:low}.
Throughout the section we fix integers 
\begin{equation}
\label{eq:b-N-bound}
   b \geq 4, \qquad N \geq 3
\end{equation}
as in the statement of the theorem.

\subsection{The cancellation trick}

The key to the new low product algorithm is the following simple observation.
\begin{prop}
\label{prop:low-cancel}
   Let $W(X) \in \ZZ[X]$ with $\deg W \leq 2N - 2$, say
   \[ W(X) = \sum_{i=0}^{2N-2} w_i X^i. \]
   Let $L(X) \in \RR[X]$ be the remainder on dividing $W(X)$ by
   \[ A(X) \coloneqq X^N + 2^{-b} X - 1 \in \RR[X], \]
   with $\deg L < N$.
   Then $2^b L(X) \in \ZZ[X]$ and
   \[ L(2^b) = \sum_{i=0}^{N-1} w_i 2^{ib}. \]
\end{prop}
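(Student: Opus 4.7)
The plan is to carry out polynomial long division of $W$ by $A$ explicitly, then evaluate at $X = 2^b$ and exploit the identity $A(2^b) = 2^{Nb} + 2^{-b}\cdot 2^b - 1 = 2^{Nb}$. Since $A$ is monic of degree $N$, division in $\RR[X]$ gives a unique $Q, L \in \RR[X]$ with $W = QA + L$, $\deg L < N$, and (because $\deg W \leq 2N-2$) $\deg Q \leq N - 2$.

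The first nontrivial step is to show that $Q \in \ZZ[X]$, so that its only nonzero coefficients come from the top half of $W$. I would trace through long division from the top down: to cancel the degree $j+N$ term (for $j = N-2, N-3, \ldots, 0$), one subtracts $q_j X^j A(X) = q_j X^{j+N} + 2^{-b} q_j X^{j+1} - q_j X^j$. This zeros out degree $j+N$ and perturbs only the much lower degrees $j+1$ and $j$, both of which are below $N$ and hence never revisited by any further reduction. Consequently, at each stage the coefficient of $X^{j+N}$ is the untouched original value $w_{j+N}$, so $q_j = w_{j+N}$ and $Q(X) = \sum_{j=0}^{N-2} w_{j+N} X^j \in \ZZ[X]$. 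Reading off the low-degree coefficients of $W - QA$ then shows $L_i \in 2^{-b} \ZZ$ for every $i$, hence $2^b L(X) \in \ZZ[X]$.

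With this in hand, evaluating $W(X) = Q(X)A(X) + L(X)$ at $X = 2^b$ and using $A(2^b) = 2^{Nb}$ gives
\[ W(2^b) = Q(2^b)\cdot 2^{Nb} + L(2^b). \]
Both $W(2^b)$ and $Q(2^b)$ are integers (the latter because $Q \in \ZZ[X]$), so $L(2^b) \in \ZZ$ and $L(2^b) \equiv W(2^b) \pmod{2^{Nb}}$.

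The only delicate point is the integrality of $Q$: one must check that the stray $2^{-b}$ coefficient of $A$ does not propagate and inflate denominators through successive reduction steps. The reason it does not is exactly that each reduction step deposits its ``garbage'' terms strictly below degree $N$, where no further reductions take place, so denominators never compound beyond $2^{-b}$.
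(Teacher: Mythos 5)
Your proof is correct and is essentially the same argument as the paper's, just phrased via explicit long division rather than via the one-step substitution $X^N \equiv 1 - 2^{-b}X \pmod{A(X)}$: your observation that each reduction step deposits its garbage strictly below degree $N$ is exactly what lets the paper write down $L(X) = \sum_{i<N} w_i X^i + \sum_{i<N-1} w_{N+i}(1-2^{-b}X)X^i$ directly. The only cosmetic difference is that you derive $Q(X) = \sum_{j=0}^{N-2} w_{N+j} X^j$ explicitly and get $L(2^b) \in \ZZ$ from $L(2^b) = W(2^b) - Q(2^b)\cdot 2^{Nb}$, whereas the paper reads $L(2^b) = \sum_{i<N} w_i 2^{ib}$ off its closed form for $L$.
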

\begin{proof}
   Write
   \[ W(X) = \sum_{i=0}^{2N-2} w_i X^i
         = \sum_{i=0}^{N-1} w_i X^i + \sum_{i=0}^{N-2} w_{N+i} X^{N+i}. \]
   Since $X^N \equiv 1 - 2^{-b} X \pmod{A(X)}$, we have
   \[ W(X) \equiv \sum_{i=0}^{N-1} w_i X^i +
      \sum_{i=0}^{N-2} w_{N+i} (1 - 2^{-b} X) X^i \pmod{A(X)}. \]
   The polynomial on the right hand side has degree at most $N-1$,
   so we deduce that
   \[ L(X) = \sum_{i=0}^{N-1} w_i X^i +
      \sum_{i=0}^{N-2} w_{N+i} (1 - 2^{-b} X) X^i. \]
   This shows that $2^b L(X) \in \ZZ[X]$,
   and the result follows on substituting $X = 2^b$.
\end{proof}

\begin{figure}
\begin{tikzpicture}
   \newcommand{\ystep}{0.5}
   \newcommand{\xstep}{0.7}
   \newcommand{\width}{1.6}
   \draw[right] (0.2*\xstep,0.5*\ystep) node {$1$};
   \draw (        0,0*\ystep) rectangle ++(-\width,\ystep) node[pos=0.5] {$w_0$};
   \draw[right] (0.2*\xstep,2.5*\ystep) node {$X$};
   \draw (  -\xstep,2*\ystep) rectangle ++(-\width,\ystep) node[pos=0.5] {$w_1$};
   \draw[right] (0.2*\xstep,4.5*\ystep) node {$X^2$};
   \draw (-2*\xstep,4*\ystep) rectangle ++(-\width,\ystep) node[pos=0.5] {$w_2$};
   \draw[right] (0.2*\xstep,6.5*\ystep) node {$X^3$};
   \draw (-3*\xstep,6*\ystep) rectangle ++(-\width,\ystep) node[pos=0.5] {$w_3$};
   \draw[right] (0.2*\xstep,8.5*\ystep) node {$X^4$};
   \draw (-4*\xstep,8*\ystep) rectangle ++(-\width,\ystep) node[pos=0.5] {$w_4$};
   \draw[dotted] (        0,9.5*\ystep) -- (        0,-0.5*\ystep) node[below] {$0$};
   \draw[dotted] (  -\xstep,9.5*\ystep) -- (  -\xstep,-0.5*\ystep) node[below] {$b$};
   \draw[dotted] (-2*\xstep,9.5*\ystep) -- (-2*\xstep,-0.5*\ystep) node[below] {$2b$};
   \draw[dotted] (-3*\xstep,9.5*\ystep) -- (-3*\xstep,-0.5*\ystep) node[below] {$3b$};
   \draw[dotted] (-4*\xstep,9.5*\ystep) -- (-4*\xstep,-0.5*\ystep) node[below] {$4b$};
   \draw (-3*\xstep,-1.3) node{(A)};
\end{tikzpicture}
\qquad\qquad
\begin{tikzpicture}
   \newcommand{\ystep}{0.5}
   \newcommand{\xstep}{0.7}
   \newcommand{\width}{1.6}
   \draw[right]        (0.2*\xstep,1.0*\ystep) node {$1$};
   \draw               (         0,0.0*\ystep) rectangle ++(-\width,\ystep) node[pos=0.5] {$w_0$};
   \draw[fill=black!10](         0,1.0*\ystep) rectangle ++(-\width,\ystep) node[pos=0.5] {$w_3$};
   \draw[right]        (0.2*\xstep,4.5*\ystep) node {$X$};
   \draw[fill=black!10](         0,3.0*\ystep) rectangle ++(-\width,\ystep) node[pos=0.5] {$-w_3$};
   \draw               (   -\xstep,4.0*\ystep) rectangle ++(-\width,\ystep) node[pos=0.5] {$w_1$};
   \draw[fill=black!10](   -\xstep,5.0*\ystep) rectangle ++(-\width,\ystep) node[pos=0.5] {$w_4$};
   \draw[right]        (0.2*\xstep,8.0*\ystep) node {$X^2$};
   \draw[fill=black!10](   -\xstep,7.0*\ystep) rectangle ++(-\width,\ystep) node[pos=0.5] {$-w_4$};
   \draw               ( -2*\xstep,8.0*\ystep) rectangle ++(-\width,\ystep) node[pos=0.5] {$w_2$};
   \draw[dotted]       (         0,9.5*\ystep) -- (        0,-0.5*\ystep) node[below] {$0$};
   \draw[dotted]       (   -\xstep,9.5*\ystep) -- (  -\xstep,-0.5*\ystep) node[below] {$b$};
   \draw[dotted]       ( -2*\xstep,9.5*\ystep) -- (-2*\xstep,-0.5*\ystep) node[below] {$2b$};
   \draw (-2*\xstep,-1.3) node{(B)};
\end{tikzpicture}
\caption{
   Illustration of the cancellation trick for $N = 3$.
   Diagram (A) shows the placement of the contributions of
   $w_0, \ldots, w_4$ in $W(2^b)$,
   and similarly (B) shows their contributions to $L(2^b)$,
   where notation is as in Proposition \ref{prop:low-cancel}.
   Observe that $w_3$ and $w_4$ both appear twice in (B)
   (the shaded cells),
   due to the reduction modulo $A(X)$.
   Thanks to the choice of signs,
   their overall contribution to $L(2^b)$ is zero.}
\label{fig:low-cancel}
\end{figure}

Later we will apply Proposition \ref{prop:low-cancel} to
a polynomial $W(X) = U(X) V(X)$ analogous to the $W(X)$ encountered earlier
in the proof of Theorem \ref{thm:full}.
The proposition shows that after reducing $W(X)$ modulo $A(X)$ and
making the substitution $X = 2^b$,
the $2^{-b} X$ term in $A(X)$ causes the unwanted high-order coefficients
of $W(X)$ to disappear;
see Figure \ref{fig:low-cancel}.
An alternative point of view is that polynomial multiplication modulo $A(X)$
corresponds roughly to integer multiplication modulo
\[ A(2^b) = 2^{Nb} + 2^{-b} 2^b - 1 = 2^{Nb}. \]

To make use of Proposition \ref{prop:low-cancel} to compute a low product,
we must compute $L(X)$ \emph{exactly}.
Note that the coefficients of $L(X)$ lie in $2^{-b}\ZZ$ rather than $\ZZ$.
Consequently, to compute $L(X)$,
we must increase the working precision by $b$ bits compared to the precision
used in the full product algorithm.
This is why the precision parameter in Theorem \ref{thm:low}
(and Theorem \ref{thm:high}) is $3b + \lg N + O(1)$
rather than $2b + \lg N + O(1)$.

\subsection{The roots of $A(X)$}
\label{sec:low-roots}

In this section we study the complex roots of the special polynomial $A(X)$
introduced in Proposition \ref{prop:low-cancel}.
For $r > 0$, let $D_r$ denote the open disc $\{z \in \CC : |z| < r\}$.
\begin{lem}
\label{lem:low-roots}
   The roots of $A(X)$ lie in $D_2$, and they are all simple.
\end{lem}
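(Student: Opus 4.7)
The plan is to handle the two claims independently: first show every root has modulus less than $2$ via Rouché's theorem, then rule out multiple roots by a direct algebraic manipulation using the conditions $b \geq 4$ and $N \geq 3$.

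For the disc containment, I would apply Rouché on the circle $|X| = 2$ with $f(X) := X^N$ and $g(X) := 2^{-b} X - 1$, so that $A = f + g$. On $|X| = 2$ we have $|f(X)| = 2^N \geq 2^3 = 8$ (using $N \geq 3$), while
\[ |g(X)| \leq 1 + 2^{-b} \cdot 2 = 1 + 2^{1-b} \leq 1 + 2^{-3} = 9/8 \]
(using $b \geq 4$). Since $|g| < |f|$ strictly on the boundary circle, Rouché's theorem implies that $A = f + g$ has the same number of zeros as $f$ in the closed disc, namely $N$, and that none of these zeros lies on the boundary. As $\deg A = N$, this accounts for every root of $A$, so all roots lie in the open disc $D_2$.

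For simplicity of the roots, I would argue by contradiction. Suppose $\alpha \in \CC$ is a multiple root of $A$. Then $A(\alpha) = 0$ and $A'(\alpha) = N\alpha^{N-1} + 2^{-b} = 0$, which gives $\alpha^{N-1} = -2^{-b}/N$ and therefore $\alpha^N = -2^{-b}\alpha/N$. Substituting into $A(\alpha) = \alpha^N + 2^{-b}\alpha - 1 = 0$ yields $2^{-b}\alpha(1 - 1/N) = 1$, i.e.
\[ \alpha = \frac{2^b N}{N-1}. \]
In particular $|\alpha| \geq 2^b \geq 16$, contradicting the first part of the lemma, which forced $|\alpha| < 2$. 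Hence $A$ has no multiple roots.

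The work is entirely routine; the only ``obstacle'' is a bookkeeping check that the hypotheses $b \geq 4$ and $N \geq 3$ are sharp enough for the Rouché estimate (they give $2^N \geq 8 > 9/8 \geq 1 + 2^{1-b}$), and that the same hypotheses make the hypothetical $|\alpha| = 2^b N/(N-1)$ comfortably larger than $2$. Both are satisfied with ample slack, so no delicate estimate is required.
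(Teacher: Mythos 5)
Your proof is correct, but it takes a somewhat different route from the paper's. For the disc containment, you invoke Rouch\'e's theorem on the circle $|X|=2$ with the split $A = X^N + (2^{-b}X - 1)$, whereas the paper argues directly: a root $z$ satisfies $|z|^N = |1 - 2^{-b}z| \leq 1 + |z|/16$, and this inequality simply cannot hold for $|z| \geq 2$. Both arguments are sound; yours is perhaps conceptually cleaner (Rouch\'e immediately accounts for all $N$ roots and guarantees none sit on the boundary), while the paper's is more elementary and self-contained. For simplicity of the roots, you and the paper perform the same algebra from $A(z) = A'(z) = 0$ to reach $z = 2^b N/(N-1)$, but you then derive a contradiction from $|z| \geq 2^b \geq 16 > 2$ (using the first part of the lemma), whereas the paper observes that $z > 0$ directly contradicts $A'(z) = Nz^{N-1} + 2^{-b} = 0$, since $A'$ is strictly positive on the positive reals. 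The paper's endgame is marginally more self-contained as it avoids appealing to the disc bound just established, but your version is equally valid.
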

\begin{proof}
   If $z \in \CC$ is a root of $A(X)$ and $|z| \geq 2$,
   then \eqref{eq:b-N-bound} implies that
   \[ 4|z| \leq |z|^3 \leq |z|^N = |z^N| = |1 - 2^{-b}z|
      \leq 1 + \frac{|z|}{16} \leq 2|z|, \]
   which is impossible.

   Any multiple root $z$ of $A(X)$ would have to satisfy
   \[ A(z) = z^N + 2^{-b} z - 1 = 0, \qquad A'(z) = N z^{N-1} + 2^{-b} = 0, \]
   and hence
   \[ 0 = N A(z) - z A'(z) = 2^{-b} (N-1) z - N. \]
   This implies that $z > 0$, contradicting $A'(z) = 0$.
\end{proof}

Now consider the function
\[ \beta(z) \coloneqq z (1 - 2^{-b} z)^{-1/N}, \qquad z \in D_{2^b}, \]
where $u \mapsto u^{-1/N}$ means the branch that maps $1$ to $1$.
\begin{lem}
\label{lem:low-maps}
   The function $\beta(z)$ maps roots of $A(X)$ to roots of $X^N - 1$.
\end{lem}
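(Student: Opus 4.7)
The plan is to verify the conclusion by a direct algebraic substitution, once I have checked that $\beta$ is well-defined on the set in which the roots of $A$ actually lie.

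First I would check that $\beta$ makes sense at any root of $A(X)$. By Lemma \ref{lem:low-roots} every root $z$ lies in $D_2$, and the hypothesis $b \geq 4$ from \eqref{eq:b-N-bound} gives $2^b \geq 16 > 2$, so $z \in D_{2^b}$. For any $z \in D_{2^b}$ we have $|2^{-b} z| < 1$, hence $1 - 2^{-b} z$ lies in the open disc of radius $1$ centred at $1$; in particular it stays in the right half-plane, away from the branch cut of the principal logarithm, so the branch of $u \mapsto u^{-1/N}$ sending $1$ to $1$ is well-defined and holomorphic there.

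Next, raise $\beta(z)$ to the $N$th power. Since the branch was fixed so that $(u^{-1/N})^N = u^{-1}$ on the chosen domain, I get
\[ \beta(z)^N = z^N \cdot (1 - 2^{-b} z)^{-1}. \]
Now use the defining relation $A(z) = 0$, i.e.\ $z^N = 1 - 2^{-b} z$, to substitute in the numerator, obtaining $\beta(z)^N = 1$. This is precisely the statement that $\beta(z)$ is a root of $X^N - 1$.

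There is no real obstacle here; the only step requiring any care is the one about the branch, which is why I would spend a sentence checking that every root of $A$ sits in $D_{2^b}$ and that $1 - 2^{-b}z$ avoids the negative real axis on that domain. Everything else is a one-line algebraic identity using $z^N = 1 - 2^{-b}z$.
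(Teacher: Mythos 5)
Your proof is correct and takes essentially the same route as the paper: raise $\beta(z)$ to the $N$th power and use $z^N = 1 - 2^{-b}z$ to collapse the quotient to $1$. The paper's version is a one-line computation; your extra sentences verifying that each root lies in $D_{2^b}$ (via Lemma~\ref{lem:low-roots}) and that $1 - 2^{-b}z$ stays clear of the branch cut are a welcome bit of rigour that the paper leaves implicit.
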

\begin{proof}
   If $z$ is a root of $A(X) = X^N + 2^{-b} X - 1$, then
   \[ \beta(z)^N = \frac{z^N}{1-2^{-b} z} = \frac{z^N}{z^N} = 1. \qedhere \]
\end{proof}
In fact, $\beta(z)$ always sends a root of $A(X)$
to the root of $X^N - 1$ nearest to it,
but we will not prove this.
Figure \ref{fig:low-roots} illustrates the situation for $N = 12$ and $b = 1$,
showing that the roots of $A(X)$ are very close to those of $X^N - 1$.
(For $b = 2$ the roots are already too close together to distinguish
at this scale.)
\begin{figure}
\begin{tikzpicture}[scale=2]
   \draw (-1.2, 0) -- (1.2, 0);
   \draw (0, -1.2) -- (0, 1.2);
   \draw (-1.000, 0.000) circle(0.02);
   \draw (1.000, 0.000) circle(0.02);
   \draw (-0.866, -0.500) circle(0.02);
   \draw (-0.866, 0.500) circle(0.02);
   \draw (-0.500, -0.866) circle(0.02);
   \draw (-0.500, 0.866) circle(0.02);
   \draw (0.000, -1.000) circle(0.02);
   \draw (0.000, 1.000) circle(0.02);
   \draw (0.500, -0.866) circle(0.02);
   \draw (0.500, 0.866) circle(0.02);
   \draw (0.866, -0.500) circle(0.02);
   \draw (0.866, 0.500) circle(0.02);
   \filldraw[black] (-1.035, 0.000) circle(0.02);
   \filldraw[black] (0.948, 0.000) circle(0.02);
   \filldraw[black] (-0.886, -0.530) circle(0.02);
   \filldraw[black] (-0.886, 0.530) circle(0.02);
   \filldraw[black] (-0.486, -0.901) circle(0.02);
   \filldraw[black] (-0.486, 0.901) circle(0.02);
   \filldraw[black] (0.040, -1.007) circle(0.02);
   \filldraw[black] (0.040, 1.007) circle(0.02);
   \filldraw[black] (0.529, -0.832) circle(0.02);
   \filldraw[black] (0.529, 0.832) circle(0.02);
   \filldraw[black] (0.847, -0.454) circle(0.02);
   \filldraw[black] (0.847, 0.454) circle(0.02);
\end{tikzpicture}
\caption{Roots of $A(X) = X^{12} + X/2 - 1$ (filled circles)
   and $X^{12} - 1$ (empty circles).}
\label{fig:low-roots}
\end{figure}

For any $k \in \ZZ$,
the binomial theorem implies that $\beta(z)^k$ is represented
on $D_{2^b}$ by the series
\[ \beta(z)^k = z^k \sum_{r=0}^\infty \beta_{k,r} z^r
              = z^k + \beta_{k,1} z^{k+1} + \beta_{k,2} z^{k+2} + \cdots \]
where
\[ \beta_{k,r} \coloneqq \binom{-k/N}{r} (-2^{-b})^r, \qquad r \geq 0. \]
In particular, the first few terms of $\beta(z)$ are given by
\[ \beta(z) = z + \frac{1}N 2^{-b} z^2
         + \frac{(N+1)}{2N^2}2^{-2b} z^3
         + \frac{(N+1)(2N+1)}{6N^3} 2^{-3b} z^4 + \cdots. \]

We will need to construct an explicit functional inverse for $\beta(z)$,
in order to map the roots of $X^N - 1$
back to the corresponding roots of $A(X)$.
Let $\alpha(z) \in z \, \RR[[z]]$
be the formal power series inverse of $\beta(z)$,
i.e., so that
\[ \beta(\alpha(z)) = z = \alpha(\beta(z)). \]
The coefficients of $\alpha(z)$, and of its powers, are given as follows.
\begin{lem}
\label{lem:alpha-coeffs}
   For any $k \geq 0$ we have (formally)
   \[ \alpha(z)^k = z^k \sum_{r=0}^\infty \alpha_{k,r} z^r
            = z^k +  \alpha_{k,1} z^{k+1} + \alpha_{k,2} z^{k+2} + \cdots, \]
   where $\alpha_{k,0} \coloneqq 1$ and
   \[ \alpha_{k,r} \coloneqq \frac{k}{k+r} \binom{(k+r)/N}{r} (-2^{-b})^r,
         \qquad r \geq 1. \]
\end{lem}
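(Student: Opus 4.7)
The plan is to apply Lagrange inversion. Write $\beta(z) = z \phi(z)$ with $\phi(z) := (1 - 2^{-b} z)^{-1/N} \in \RR[[z]]$, so that $\phi(0) = 1$ and $\beta(z) = z + O(z^2)$. This is precisely the setting in which the Lagrange--B\"urmann formula applies, and it guarantees the existence and uniqueness of a formal inverse $\alpha \in z \RR[[z]]$ with $\alpha \circ \beta = \beta \circ \alpha = z$.

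The Lagrange--B\"urmann formula, in the form needed here, asserts that for any integer $k$ and any $n \geq 1$,
\[ [z^n] \alpha(z)^k = \frac{k}{n}\, [w^{n-k}] \phi(w)^{-n}. \]
Since $\phi(w)^{-n} = (1 - 2^{-b} w)^{n/N}$, the generalized binomial theorem gives
\[ [w^{n-k}] \phi(w)^{-n} = \binom{n/N}{n-k} (-2^{-b})^{n-k}. \]
Setting $n = k + r$ with $r \geq 1$ yields
\[ \alpha_{r,k} = \frac{k}{k+r} \binom{(k+r)/N}{r} (-2^{-b})^r, \]
and one application of the identity $\binom{m}{r} = (m/r)\binom{m-1}{r-1}$ with $m = (k+r)/N$ converts this into the form stated in the lemma. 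The case $r = 0$ is just the normalization $\alpha_{0,k} = 1$, which is immediate from $\alpha(z) = z + O(z^2)$.

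The only point requiring a word of justification is the use of Lagrange inversion for all $k \in \ZZ$, rather than just $k \geq 1$. For $k = 0$ the claim is trivial: $\alpha(z)^0 = 1$ and the proposed formula collapses correctly, since $k/(rN) = 0$. For $k < 0$, where $\alpha(z)^k$ is interpreted as the formal Laurent series $z^k(1 + O(z)) \in \RR((z))$, the same residue-calculus proof of the formula for $k \geq 1$ carries over without change. Everything else is routine formal power series bookkeeping, so I do not expect any genuine obstacle in writing out the proof in full.
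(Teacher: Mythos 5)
Your proof is correct and takes essentially the same approach as the paper: both are direct applications of Lagrange inversion to $\beta(z) = z(1-2^{-b}z)^{-1/N}$. The paper invokes a slightly different variant of the inversion formula (Gessel's equation (2.1.2)), which produces the stated coefficient after a short algebraic simplification, whereas you use the more familiar form $[z^n]\alpha(z)^k = (k/n)[w^{n-k}]\phi(w)^{-n}$ followed by the identity $\binom{m}{r} = (m/r)\binom{m-1}{r-1}$, but the difference is purely cosmetic.
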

In particular, the first few terms of $\alpha(z)$ are
\[ \alpha(z) = z - \frac{1}N 2^{-b} z^2 - \frac{(N-3)}{2 N^2} 2^{-2b} z^3
         - \frac{(N-4)(2N-4)}{6 N^3} 2^{-3b} z^4 - \cdots. \]
\begin{proof}
By the Lagrange inversion formula \cite[Thm.~5.4.2]{Sta-EC2},
for any $n \geq k$ we have
\[ n \alpha_{k,n-k} = k \beta_{-n,n-k}. \]
Taking $n \coloneqq k + r$, for any $r \geq 1$, yields
\[ \alpha_{k,r} = \frac{k}{k+r} \beta_{-k-r,r}
                = \frac{k}{k+r} \binom{(k+r)/N}{r} (-2^{-b})^r. \qedhere \]
\end{proof}

\begin{rem}
   It is also possible to write down
   an explicit formula for $\alpha_{k,r}$ when $k < 0$,
   but the above argument fails because $k+r$ is zero when $r = -k$.
   To handle the $k < 0$ case one needs a slightly stronger form of the
   Lagrange inversion formula;
   see for example \cite[Thm.~2.1.1]{Ges-lagrange}.
   In this paper we only need the case $k \geq 0$.
\end{rem}

The next result gives some simple bounds for the coefficients
$\alpha_{k,r}$ and $\beta_{k,r}$.
\begin{lem}
\label{lem:alpha-beta-bound}
   For all $r \geq 0$ and $0 \leq k < N$ we have
   \[ |\beta_{k,r}| \leq 2^{-rb}, \qquad |\alpha_{k,r}| \leq 2^{-rb}. \]
\end{lem}
\begin{proof}
   The bounds are trivial for $r = 0$, so assume that $r \geq 1$.
   For $\beta_{k,r}$ we have
   \[ \frac{|\beta_{k,r}|}{2^{-rb}}
      = \frac{1}{r!} \prod_{j=0}^{r-1} \left| \frac{-k}{N} - j \right|
      = \frac{1}{r!} \prod_{j=0}^{r-1} \left(j + \frac{k}{N}\right)
      \leq \frac{1}{r!} \prod_{j=0}^{r-1} (j+1) = 1. \]
   For $\alpha_{k,r}$, observe that
   \[ \frac{|\alpha_{k,r}|}{2^{-rb}}
      = \frac{k}{k+r} \cdot \frac{1}{r!}
            \prod_{j=0}^{r-1} \left| \frac{k+r}{N} - j \right|
      \leq \frac{1}{r!} \prod_{j=0}^{r-1} | j - \eta | \]
   where $\eta \coloneqq (k+r)/N$.
   We have $\eta > 0$ and
   \[ \eta = \frac{k+r}{N} \leq \frac{N-1+r}{N} = \frac{r-1}{N} + 1
   \leq \frac{r-1}{3} + 1 \leq r \]
   since $r \geq 1$ and $N \geq 3$ (see \eqref{eq:b-N-bound});
   hence $1 \leq \lceil \eta \rceil \leq r$.
   Thus
   \begin{align*}
      \prod_{j=0}^{r-1} |j - \eta|
         & = \prod_{0 \leq j < \lceil \eta \rceil} (\eta - j)
             \prod_{\lceil \eta \rceil \leq j \leq r-1} (j - \eta) \\
         & \leq \prod_{0 \leq j < \lceil \eta \rceil} (\lceil \eta \rceil - j)
            \prod_{\lceil \eta \rceil \leq j \leq r-1} (j + 1 - \lceil \eta \rceil) \\
         & = \lceil \eta \rceil! (r - \lceil \eta \rceil)!
         = r! \Big\slash \binom{r}{\lceil \eta \rceil} \leq r!. \qedhere
   \end{align*}
\end{proof}

\begin{cor}
\label{cor:low-inverses}
   The series for $\alpha(z)$ and $\beta(z)$ converge on $D_{2^b}$, and
   \begin{equation}
   \label{eq:low-inverses}
      \alpha(\beta(z)) = z = \beta(\alpha(z)), \qquad z \in D_2.
   \end{equation}
\end{cor}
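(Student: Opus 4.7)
The plan is to deduce the convergence of both series from the coefficient bounds already established in Lemma~\ref{lem:alpha-beta-bound}, then bound the images of $D_2$ under $\alpha$ and $\beta$ so that the compositions in \eqref{eq:low-inverses} are analytically well-defined, and finally promote the formal inversion identity to an analytic one via the identity theorem.

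First I would dispatch the convergence claim. Applying Lemma~\ref{lem:alpha-beta-bound} with $k = 1$ yields $|\beta_{r,1}| \leq 2^{-rb}$ and $|\alpha_{r,1}| \leq 2^{-r(b-2)}$, and comparison with geometric series immediately gives radii of convergence at least $2^b$ and $2^{b-2}$ respectively. In particular $\alpha$ and $\beta$ are holomorphic on $D_{2^{b-2}}$ and $D_{2^b}$.

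Next I would verify that $\beta(D_2) \subseteq D_{2^{b-2}}$ and $\alpha(D_2) \subseteq D_{2^b}$, so both compositions are defined on $D_2$. For $\beta$ the closed form $\beta(z) = z(1-2^{-b}z)^{-1/N}$ combined with $|2^{-b}z| < 1/8$ (since $b \geq 4$) gives $|\beta(z)| \leq 2 \cdot (8/7)^{1/N}$, which is comfortably less than $2^{b-2} \geq 4$. For $\alpha$ the triangle inequality together with the coefficient bound from Lemma~\ref{lem:alpha-beta-bound} yields a geometric-series estimate for $|\alpha(z)|$ on $D_2$ that is easily less than $2^b$. This is the one place where the hypotheses $b \geq 4$ and $N \geq 3$ are actually used in a nontrivial way, and it is the step I expect to require the most bookkeeping.

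Finally I would invoke the identity theorem. By construction $\alpha$ is the formal compositional inverse of $\beta$, so in some neighbourhood of $0$ absolute convergence permits rearranging the composed double series, and the formal identity translates into the analytic identities $\alpha(\beta(z)) = z = \beta(\alpha(z))$ near $0$. Since both $\alpha \circ \beta$ and $\beta \circ \alpha$ are holomorphic on the connected disc $D_2$ by the previous step, the identity theorem propagates these identities throughout $D_2$, giving \eqref{eq:low-inverses}. The only real obstacle is the image estimate for $\beta$; everything else is routine majorisation by geometric series together with a standard analytic continuation argument.
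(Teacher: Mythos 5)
Your proposal is correct and follows essentially the same route as the paper: convergence via the coefficient bounds of Lemma~\ref{lem:alpha-beta-bound}, an image-containment estimate showing each function maps $D_2$ into the other's disc of convergence, and then promotion of the formal inverse relation to an analytic identity. The only minor stylistic difference is that for $\beta$ you bound $|\beta(z)|$ directly from the closed form $z(1-2^{-b}z)^{-1/N}$, whereas the paper reuses the same geometric-series majorisation it applies to $\alpha$; both yield $|\beta(z)| < 4 \leq 2^{b-2}$ on $D_2$.
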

\begin{proof}
   We already know that $\beta(z)$ converges on $D_{2^b}$,
   and the convergence of $\alpha(z)$ on $D_{2^b}$
   follows from Lemma \ref{lem:alpha-beta-bound}.
   If $|z| < 2$, then
   \[ |\alpha(z)| \leq \sum_{r=0}^\infty |\alpha_{1,r}| |z|^{r+1}
      < \sum_{r=0}^\infty 2^{-rb} 2^{r+1}
      = \frac{2}{1 - 2^{-b+1}} < 3, \]
   where the last inequality follows from \eqref{eq:b-N-bound}.
   This shows that $\alpha(z)$ maps $D_2$ into $D_3 \subseteq D_{2^b}$.
   A similar argument shows that $\beta(z)$ maps $D_2$
   into $D_3 \subseteq D_{2^b}$.
   Since both $\alpha(z)$ and $\beta(z)$ map $D_2$ into
   the disc of convergence of the other,
   and since they are inverses formally,
   they must be inverse functions in the sense of \eqref{eq:low-inverses}.
\end{proof}

\begin{cor}
\label{cor:low-bijection}
   The functions $\alpha(z)$ and $\beta(z)$ induce mutually inverse bijections
   between the roots of $X^N - 1$ and the roots of $A(X)$.
\end{cor}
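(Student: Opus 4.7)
The plan is to combine the counting information from Lemma \ref{lem:low-roots} with the formal inverse relation of Corollary \ref{cor:low-inverses}. The key observation is that both sets have exactly $N$ elements and both lie inside $D_2$: the roots of $X^N-1$ are on the unit circle, and by Lemma \ref{lem:low-roots} the $N$ roots of $A(X)$ are simple and contained in $D_2$.

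First I would establish that $\beta$ maps roots of $A(X)$ bijectively onto roots of $X^N-1$. Lemma \ref{lem:low-maps} says that $\beta$ sends the $N$ roots of $A(X)$ into the set of $N$-th roots of unity. This restricted map is injective, because Corollary \ref{cor:low-inverses} gives $\alpha(\beta(z)) = z$ for every $z \in D_2$, and the roots of $A(X)$ lie in $D_2$. Since the domain and codomain both have cardinality $N$, injectivity forces bijectivity.

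Next I would show, symmetrically, that $\alpha$ maps each $N$-th root of unity to a root of $A(X)$. If $\zeta^N = 1$, then $\zeta \in D_2$, and applying Corollary \ref{cor:low-inverses} gives $\beta(\alpha(\zeta)) = \zeta$. Unfolding the definition of $\beta$ at the point $\alpha(\zeta) \in D_{2^b}$ yields
\[ \frac{\alpha(\zeta)^N}{1 - 2^{-b}\alpha(\zeta)} = \beta(\alpha(\zeta))^N = \zeta^N = 1, \]
so $\alpha(\zeta)^N + 2^{-b}\alpha(\zeta) - 1 = 0$, i.e.\ $A(\alpha(\zeta)) = 0$. The same injectivity argument as above (now using $\beta \circ \alpha = \mathrm{id}$ on $D_2$) upgrades this to a bijection, and the relations in Corollary \ref{cor:low-inverses} show that $\alpha$ and $\beta$ are mutually inverse on these two finite sets.

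There is no real obstacle here; everything has been arranged by the preceding lemmas. The only subtlety to watch is making sure that every point at which we evaluate $\alpha$ or $\beta$ lies inside the relevant disc of convergence, which is immediate because both sets of roots lie in $D_2 \subseteq D_{2^{b-2}} \cap D_{2^b}$ (using $b \geq 4$).
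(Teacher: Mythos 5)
Your argument is correct and follows essentially the same route as the paper: use Lemma \ref{lem:low-roots} to get $N$ distinct roots of $A(X)$ in $D_2$, use the identity $\alpha\circ\beta = \mathrm{id}$ from Corollary \ref{cor:low-inverses} to get injectivity of $\beta$ on those roots, and conclude bijectivity by counting. Your second paragraph (directly verifying $A(\alpha(\zeta))=0$) is a slight detour — once $\beta$ is known to be a bijection between the two $N$-element sets, the relation $\alpha(\beta(z))=z$ already forces $\alpha$ to be its inverse on the roots of $X^N-1$ without any further computation — but it does no harm and the paper's own proof makes exactly this shortcut.
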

\begin{proof}
   By Lemma \ref{lem:low-roots},
   the polynomial $A(X)$ has $N$ distinct roots $z_1, \ldots, z_N$ in $D_2$.
   Lemma \ref{lem:low-maps} shows that $\beta(z)$ maps $z_1, \ldots, z_N$
   to roots of $X^N - 1$,
   and the images must be distinct because Corollary \ref{cor:low-inverses}
   implies that $\beta(z)$ is injective on $D_2$.
   Since $X^N - 1$ has exactly $N$ roots,
   every root of $X^N - 1$ must be the image of some $z_i$,
   and then $\alpha(z)$ must map this root back to $z_i$.
\end{proof}

\subsection{Ring isomorphisms}
\label{sec:low-iso}

The aim of this section is construct a pair of
mutually inverse ring isomorphisms
\begin{equation}
\label{eq:alpha-beta-star}
   \begin{split}
      \alpha^* & \colon \RR[X]/A(X) \longto \RR[X]/(X^N - 1), \\
      \beta^*  & \colon \RR[X]/(X^N - 1) \longto \RR[X]/A(X).
   \end{split}
\end{equation}
In the main low product algorithm,
the role of these maps will be to convert the problem of
multiplying two polynomials modulo $A(X)$ into an ordinary cyclic convolution.

The idea of the construction is that for $F \in \RR[X]/A(X)$,
we want to define $(\alpha^* F)(X)$ to be the composition $F(\alpha(X))$,
regarded as a polynomial modulo ${X^N - 1}$,
and similarly for $\beta^*$.
However, some care is required in interpreting the expression $F(\alpha(X))$,
as $\alpha(z)$ is not a polynomial, but rather a power series.
To make this definition precise, we proceed as follows.

For each $r \geq 0$ define linear maps
\begin{align*}
   \alpha^*_r & \colon \RR[X]/A(X) \longto \RR[X]/(X^N - 1), \\
   \beta^*_r &  \colon \RR[X]/(X^N - 1) \longto \RR[X]/A(X)
\end{align*}
by the formulas
\begin{align*}
   \alpha^*_r \bigg(\sum_{k=0}^{N-1} F_k X^k \bmod{A(X)}\bigg)
      & \coloneqq \sum_{k=0}^{N-1} \alpha_{k,r} F_k X^{k+r} \bmod{X^N - 1}, \\
   \beta^*_r \bigg(\sum_{k=0}^{N-1} F_k X^k \bmod{X^N - 1}\bigg)
      & \coloneqq \sum_{k=0}^{N-1} \beta_{k,r} F_k X^{k+r} \bmod{A(X)}.
\end{align*}
These maps satisfy the following norm bounds.
(Recall that the norm on polynomials is defined as in
\eqref{eq:norm-definition}.)
\begin{lem}
\label{lem:alpha-r-bound}
   For any $r \geq 0$ and $F \in \RR[X]/A(X)$,
   \[ \norm{\alpha^*_r F} \leq 2^{-rb} \norm{F}. \]
\end{lem}
\begin{proof}
   By definition, $\alpha^*_r F = X^r G$ where
   \[ G \coloneqq \sum_{k=0}^{N-1} \alpha_{k,r} F_k X^k \in \RR[X]/(X^N - 1). \]
   For any $H \in \RR[X]/(X^N - 1)$ we have $\norm{XH} = \norm{H}$,
   because multiplication by $X$ simply permutes the coefficients cyclically.
   Applying this observation to $G$ repeatedly, 
   and recalling Lemma~\ref{lem:alpha-beta-bound}, we find that
   \[ \norm{\alpha^*_r F} = \norm{X^r G} = \norm{G} \leq 2^{-rb} \norm{F}.
      \qedhere \]
\end{proof}

\begin{lem}
\label{lem:beta-r-bound}
   For any $r \geq 0$ and $F \in \RR[X]/(X^N - 1)$,
   \[ \norm{\beta^*_r F} \leq 2^{-r(b-1)} \norm{F}. \]
\end{lem}
\begin{proof}
   The argument is similar to Lemma \ref{lem:alpha-r-bound},
   the main difference being that multiplication by $X$ modulo $A(X)$
   is slightly more complicated than a cyclic permutation.
   Let $H = \sum_{k=0}^{N-1} H_k X^k \in \RR[X]/A(X)$.
   Since $X^N \equiv 1 - 2^{-b} X \pmod{A(X)}$, we have
   \[ XH = H_{N-1} + (H_0 - 2^{-b} H_{N-1}) X
         + H_1 X^2 + \cdots + H_{N-2} X^{N-1}, \]
   so $\norm{XH} \leq (1 + 2^{-b}) \norm{H} \leq 2 \norm{H}$.
   
   Now, since $\beta^*_r F = X^r G$ where
   $G \coloneqq \sum_{k=0}^{N-1} \beta_{k,r} F_k X^k \in \RR[X]/A(X)$,
   using Lemma \ref{lem:alpha-beta-bound} we obtain
   \[ \norm{\beta^*_r F} = \norm{X^r G} \leq 2^r \norm{G}
      \leq 2^r 2^{-rb} \norm{F}. \qedhere \]
\end{proof}

We may now define the maps $\alpha^*$ and $\beta^*$
in \eqref{eq:alpha-beta-star} by setting
\[ \alpha^* F \coloneqq \sum_{r=0}^\infty \alpha^*_r F, \qquad
   \beta^* F \coloneqq \sum_{r=0}^\infty \beta^*_r F. \]
Lemma \ref{lem:alpha-r-bound} and Lemma \ref{lem:beta-r-bound} guarantee
that these series converge coefficientwise,
so $\alpha^*$ and~$\beta^*$ are well-defined,
and they are clearly linear maps.
Moreover, we immediately obtain the following estimates
concerning the partial sums of the series.
\begin{lem}
\label{lem:alpha-series}
   For any $F \in \RR[X]/A(X)$ and any integer $\lambda \geq 0$ we have
   \[ \bignorm{\alpha^*F - \sum_{r=0}^{\lambda-1} \alpha^*_r F}
         \leq \frac{16}{15} \cdot 2^{-\lambda b} \norm{F}, \qquad
      \bignorm{\sum_{r=0}^{\lambda-1} \alpha^*_r F}
         \leq \frac{16}{15} \norm{F}. \]
\end{lem}
\begin{proof}
For the first claim, observe that
\begin{multline*}
   \bignorm{\alpha^*F - \sum_{r=0}^{\lambda-1} \alpha^*_r F}
      = \bignorm{\sum_{r=\lambda}^\infty \alpha^*_r F}
      \leq \sum_{r=\lambda}^\infty \norm{\alpha^*_r F}
      \leq \sum_{r=\lambda}^\infty 2^{-rb} \norm{F} \\
   = \frac{2^{-\lambda b}}{1 - 2^{-b}} \norm{F}
   \leq \frac{16}{15} \cdot 2^{-\lambda b} \norm{F},
\end{multline*}
by Lemma \ref{lem:alpha-r-bound} and \eqref{eq:b-N-bound}.
The second estimate is proved in a similar way.
\end{proof}
\begin{lem}
\label{lem:beta-series}
   For any $F \in \RR[X]/(X^N - 1)$ and any integer $\lambda \geq 0$ we have
   \[ \bignorm{\beta^*F - \sum_{r=0}^{\lambda-1} \beta^*_r F}
         \leq \frac87 \cdot 2^{-\lambda (b-1)} \norm{F}, \qquad
      \bignorm{\sum_{r=0}^{\lambda-1} \beta^*_r F}
         \leq \frac87 \norm{F}. \]
\end{lem}
\begin{proof}
   Follows from Lemma \ref{lem:beta-r-bound},
   similarly to the proof of Lemma \ref{lem:alpha-series}.
\end{proof}

Now we can establish that $\alpha^* F$ and $\beta^* F$
behave like the desired compositions $F(\alpha(X))$ and $F(\beta(X))$.
\begin{lem}
\label{lem:alpha-eval}
   Let $F \in \RR[X]/A(X)$, and let $z$ be a root of $X^N - 1$.
   Then
   \[ (\alpha^* F)(z) = F(\alpha(z)). \]
\end{lem}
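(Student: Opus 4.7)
The plan is to expand both sides as explicit double series in $k$ (the index of the coefficients of $F$) and $r$ (the index in the power series expansion of $\alpha$), and then reconcile them by justifying a swap of the order of summation.

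First I would unfold the right-hand side. Since $z^N=1$ we have $|z|=1$, and because $b\geq 4$ we have $1 < 2^{b-2}$, so $z$ lies well inside the disc of convergence $D_{2^{b-2}}$ of $\alpha$ guaranteed by Corollary \ref{cor:low-inverses}. Writing $F = \sum_{k=0}^{N-1} F_k X^k$ and using the expansion of $\alpha(z)^k$ from Lemma \ref{lem:alpha-coeffs}, this yields
\[
 F(\alpha(z)) = \sum_{k=0}^{N-1} F_k \alpha(z)^k = \sum_{k=0}^{N-1} \sum_{r=0}^\infty \alpha_{r,k} F_k z^{k+r}.
\]

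Next I would unfold the left-hand side. By construction $\alpha^* F = \sum_{r=0}^\infty \alpha^*_r F$ converges coefficientwise in the finite-dimensional space $\RR[X]/(X^N-1)$, and evaluation at $z$ is a (continuous) linear functional on that space, so $(\alpha^* F)(z) = \sum_{r=0}^\infty (\alpha^*_r F)(z)$. Now the key small point: although $\alpha^*_r F$ is defined by reducing $X^{k+r}$ modulo $X^N-1$, the relation $z^N=1$ makes this reduction invisible under evaluation, giving $(\alpha^*_r F)(z) = \sum_{k=0}^{N-1} \alpha_{r,k} F_k z^{k+r}$. Thus
\[
 (\alpha^* F)(z) = \sum_{r=0}^\infty \sum_{k=0}^{N-1} \alpha_{r,k} F_k z^{k+r}.
\]

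Comparing the two expressions, they agree up to swapping the order of the two summations. To finish, I would verify absolute convergence of the double series so that Fubini applies: using $|z|=1$ and the bound $|\alpha_{r,k}| \leq 2^{-r(b-2)}$ from Lemma \ref{lem:alpha-beta-bound}, the total mass is at most $N \norm{F} \sum_{r=0}^\infty 2^{-r(b-2)}$, which is finite since $b\geq 4$. Swapping then identifies the two sums, giving $(\alpha^*F)(z) = F(\alpha(z))$. The only delicate point is the identification of $(\alpha^*_r F)(z)$ with the naive sum $\sum_k \alpha_{r,k} F_k z^{k+r}$ — i.e., making sure the reduction mod $X^N-1$ interacts correctly with evaluation at the $N$-th root of unity — but this is immediate from $z^N=1$; everything else is a routine interchange justified by absolute convergence.
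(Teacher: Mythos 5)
Your proof is correct and follows essentially the same route as the paper: evaluate $\alpha^*_r F$ at $z$ (where the reduction mod $X^N-1$ is invisible since $z^N=1$), then interchange the sums over $r$ and $k$ to recover $\sum_k F_k\,\alpha(z)^k = F(\alpha(z))$. The paper performs the interchange tacitly; your explicit appeal to absolute convergence (via $|\alpha_{r,k}|\leq 2^{-r(b-2)}$ and $|z|=1$) simply makes the same step airtight.
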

\begin{rem}
   The expression $F(\alpha(z))$ is well-defined since $\alpha(z)$
   is a root of $A(X)$ (see Corollary \ref{cor:low-bijection}).
\end{rem}
\begin{proof}
   By the definition of $\alpha^*_r$,
   and since $z$ is a root of $X^N - 1$, we have
   \[ (\alpha^*_r F)(z) = \sum_{k=0}^{N-1} \alpha_{k,r} F_k z^{k+r}. \]
   Thus
   \[ (\alpha^* F)(z) = \sum_{r=0}^\infty (\alpha^*_r F)(z)
      = \sum_{k=0}^{N-1} F_k z^k \sum_{r=0}^\infty \alpha_{k,r} z^r
      = \sum_{k=0}^{N-1} F_k \alpha(z)^k = F(\alpha(z)). \qedhere \]
\end{proof}
\begin{lem}
\label{lem:beta-eval}
   Let $F \in \RR[X]/(X^N - 1)$, and let $z$ be a root of $A(X)$.
   Then
   \[ (\beta^* F)(z) = F(\beta(z)). \]
\end{lem}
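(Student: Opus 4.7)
The plan is to mirror the proof of Lemma \ref{lem:alpha-eval} essentially verbatim, swapping the roles of $\alpha$/$\beta$ and of $X^N - 1$/$A(X)$. The key observation is that since $z$ is a root of $A(X)$, the evaluation map $F \mapsto F(z)$ descends to a well-defined ring (hence linear) homomorphism $\RR[X]/A(X) \to \CC$. Applied to the defining formula of $\beta^*_r$, this immediately gives
\[ (\beta^*_r F)(z) = \sum_{k=0}^{N-1} \beta_{r,k} F_k z^{k+r}. \]

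Next I would sum over $r \geq 0$. By Lemma \ref{lem:beta-series}, the series $\beta^* F = \sum_r \beta^*_r F$ converges in the finite-dimensional space $\RR[X]/A(X)$, and since evaluation at the fixed point $z$ is a continuous linear functional there, it commutes with this convergent sum. Interchanging the order of summation then yields
\[ (\beta^* F)(z) = \sum_{k=0}^{N-1} F_k \left(\sum_{r=0}^\infty \beta_{r,k} z^r\right) z^k = \sum_{k=0}^{N-1} F_k \beta(z)^k = F(\beta(z)), \]
where the inner series is recognized as the power series for $\beta(z)^k/z^k$ established in Section \ref{sec:low-roots}.

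The only point requiring any real care is the interchange of summations, which is where I expect the (very mild) main obstacle to sit. It is justified by absolute convergence: Lemma \ref{lem:alpha-beta-bound} gives $|\beta_{r,k} F_k z^{k+r}| \leq \norm{F} \cdot 2^{-rb} |z|^{k+r}$, and by Lemma \ref{lem:low-roots} together with $b \geq 4$ we have $|z|/2^b < 2/2^4 = 1/8$, so the $r$-sum converges geometrically, uniformly in $k$. Everything else is essentially a transcription of Lemma \ref{lem:alpha-eval}, so no new difficulty arises.
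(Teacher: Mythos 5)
Your proof is correct and takes essentially the same route as the paper: the paper's proof of Lemma~\ref{lem:beta-eval} is literally ``Similar to the proof of Lemma~\ref{lem:alpha-eval}'', and that proof evaluates $\alpha^*_r F$ at $z$, sums over $r$, and interchanges summations to recognize $F(\alpha(z))$, exactly mirroring your argument with the roles of $\alpha$, $\beta$ and of $A(X)$, $X^N - 1$ swapped. You make the interchange of limits explicit (continuity of evaluation on the finite-dimensional quotient, absolute convergence via Lemma~\ref{lem:alpha-beta-bound} and Lemma~\ref{lem:low-roots}), which the paper leaves tacit, but this is a matter of detail rather than of approach.
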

\begin{proof}
   Similar to the proof of Lemma \ref{lem:alpha-eval}.
\end{proof}

\begin{cor}
\label{cor:alpha-beta-morphisms}
   The maps $\alpha^*$ and $\beta^*$ are mutually inverse ring isomorphisms
   between $\RR[X]/A(X)$ and $\RR[X]/(X^N - 1)$.
\end{cor}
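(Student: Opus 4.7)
The plan is to exploit the evaluation lemmas \ref{lem:alpha-eval} and \ref{lem:beta-eval} together with the fact, established in Lemmas \ref{lem:low-roots} and Corollary \ref{cor:low-bijection}, that both $A(X)$ and $X^N - 1$ split into $N$ distinct linear factors over $\CC$, and that $\alpha$ and $\beta$ give inverse bijections between their root sets. The underlying principle is that a polynomial of degree less than $N$ is determined by its values at any $N$ distinct complex numbers, so an element of $\RR[X]/A(X)$ or $\RR[X]/(X^N - 1)$ (viewed inside $\CC[X]/A(X)$ or $\CC[X]/(X^N - 1)$, which by the Chinese remainder theorem are isomorphic to $\CC^N$) is pinned down by its values at the roots of the defining polynomial.

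First I would verify that $\alpha^*$ and $\beta^*$ are ring homomorphisms. Linearity has already been noted. For multiplicativity, take $F_1, F_2 \in \RR[X]/A(X)$ and evaluate $\alpha^*(F_1 F_2)$ at an arbitrary root $z$ of $X^N - 1$: by Lemma \ref{lem:alpha-eval} this equals $(F_1 F_2)(\alpha(z)) = F_1(\alpha(z)) F_2(\alpha(z)) = (\alpha^* F_1)(z)(\alpha^* F_2)(z)$, which is the value of $(\alpha^* F_1)(\alpha^* F_2)$ at $z$. Since the two sides agree at all $N$ roots, they are equal in $\RR[X]/(X^N - 1)$. Preservation of the unit is immediate from the definition: $\alpha_{0,0} = 1$ and $\alpha_{r,0} = 0$ for $r \geq 1$ (by Lemma \ref{lem:alpha-coeffs}), so $\alpha^*(1) = 1$, and likewise $\beta^*(1) = 1$. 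The argument for $\beta^*$ is symmetric, using Lemma \ref{lem:beta-eval} and evaluating at roots of $A(X)$.

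Next I would show that the two maps are mutually inverse. Given $F \in \RR[X]/A(X)$ and a root $z$ of $A(X)$, Lemma \ref{lem:beta-eval} gives $(\beta^*(\alpha^* F))(z) = (\alpha^* F)(\beta(z))$. By Corollary \ref{cor:low-bijection}, $\beta(z)$ is a root of $X^N - 1$, so Lemma \ref{lem:alpha-eval} applies and yields $F(\alpha(\beta(z)))$, which equals $F(z)$ by \eqref{eq:low-inverses} of Corollary \ref{cor:low-inverses}. Thus $\beta^*(\alpha^* F)$ and $F$ agree at all $N$ roots of $A(X)$, forcing equality. The composition $\alpha^* \beta^* = \mathrm{id}$ follows by the symmetric argument.

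There is no serious obstacle here: once one is willing to pass to $\CC$ and use that both moduli have $N$ simple roots, the whole statement is a formal consequence of the preceding lemmas. The only mildly delicate point is checking that $\alpha^*$ and $\beta^*$ preserve the unit, but the explicit formulas for $\alpha_{r,0}$ and $\beta_{r,0}$ make this transparent.
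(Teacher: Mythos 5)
Your proof is correct and follows essentially the same route as the paper's: multiplicativity is checked by evaluating at the $N$ distinct roots of $X^N-1$ (respectively $A(X)$) via Lemmas~\ref{lem:alpha-eval} and~\ref{lem:beta-eval}, and the inverse property follows from Corollary~\ref{cor:low-bijection} and~\eqref{eq:low-inverses}. The one small addition you make — explicitly verifying $\alpha^*(1) = 1$ and $\beta^*(1) = 1$ from the formulas $\alpha_{r,0} = \beta_{r,0} = 0$ for $r \geq 1$ — is a harmless extra; the paper omits it since preservation of the unit follows automatically once the maps are shown to be mutually inverse multiplicative bijections.
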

\begin{proof}
   We have already pointed out that $\alpha^*$ and $\beta^*$ are linear;
   to show that they are ring homomorphisms we must show that they also
   respect multiplication.
   Lemma \ref{lem:alpha-eval} implies that
   for any $F, G \in \RR[X]/A(X)$ and any root $z$ of $X^N - 1$,
   we have
   \[ (\alpha^*(FG))(z) = (FG)(\alpha(z)) = F(\alpha(z)) \cdot G(\alpha(z))
      = (\alpha^* F)(z) \cdot (\alpha^* G)(z). \]
   Since a polynomial in $\RR[X]/(X^N - 1)$ is determined
   by its values at the roots of $X^N - 1$,
   this shows that $\alpha^*(FG) = (\alpha^* F)(\alpha^* G)$,
   and hence that $\alpha^*$ is a ring homomorphism.
   A similar argument using Lemma \ref{lem:beta-eval}
   shows that $\beta^*$ is a ring homomorphism.
   
   To show that $\alpha^*$ and $\beta^*$ are inverses,
   let $F \in \RR[X]/A(X)$ and let $z$ be a root of $A(X)$.
   Corollary \ref{cor:low-bijection} implies that
   \[ (\beta^* \alpha^* F)(z) = F(\alpha(\beta(z))) = F(z). \]
   Since this holds for all roots of $A(X)$,
   we see that $\beta^* \alpha^* F = F$.
   A similar argument shows that $\alpha^* \beta^* F = F$
   for all $F \in \RR[X]/(X^N - 1)$.
\end{proof}

Finally, we have the following two results concerning
the complexity of approximating $\alpha^*$ and $\beta^*$.

\begin{prop}[Approximating $\alpha^*$]
\label{prop:approx-alpha}
   Given as input $F \in 2^e \RR_p[X]/A(X)$,
   we may compute $G \in 2^{e+1} \RR_p[X]/(X^N - 1)$ such that
   \[ \norm{G - \alpha^* F} < 2^{e+1-p} \]
   in $O(N \Mint(p))$ bit operations, assuming that $p = O(b)$.
\end{prop}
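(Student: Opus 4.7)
The plan is to approximate $\alpha^* F$ by truncating the defining series $\sum_{r=0}^\infty \alpha^*_r F$ at a small depth $\lambda$, and to evaluate the partial sum numerically. Choose $\lambda$ to be the smallest positive integer with $\lambda(b-2) \geq p+1$. Since $\|F\| \leq 2^e$, Lemma~\ref{lem:alpha-series} gives
 \[ \bignorm{\alpha^* F - \sum_{r=0}^{\lambda-1} \alpha^*_r F} \leq \tfrac{4}{3} \cdot 2^{-\lambda(b-2)} \cdot 2^e \leq \tfrac{2}{3} \cdot 2^{e-p}, \]
and the hypothesis $p = O(b)$ forces $\lambda = O(1)$, which is the whole reason the algorithm runs fast.

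To evaluate the truncation, first precompute the rationals $\alpha_{r,k}$ from Lemma~\ref{lem:alpha-coeffs} for $0 \leq r < \lambda$ and $0 \leq k < N$ to $p + O(1)$ bits. Each $\alpha_{r,k}$ can be obtained from its explicit formula via $O(\lambda) = O(1)$ arithmetic operations on $p$-bit fixed-point numbers, so the entire precomputation costs $O(N \Mint(p))$ bit operations. Then, for each of the $\lambda N$ pairs $(r,k)$, approximate $\alpha_{r,k} F_k$ by one multiplication of $p$-bit numbers at cost $O(\Mint(p))$; the aggregate is again $O(N \Mint(p))$. Finally, assemble $G$ by accumulating each product into the $((k+r) \bmod N)$-th coefficient and rounding to the $2^{e+1}\RR_p$ grid. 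Each output coefficient receives only $O(\lambda) = O(1)$ contributions, so the accumulation and rounding add a further $O(Np)$ bit operations, absorbed into $O(N\Mint(p))$.

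It remains to verify that $G$ genuinely lies in $2^{e+1}\RR_p[X]/(X^N-1)$ and that $\|G - \alpha^* F\| < 2^{e+1-p}$. The containment is immediate from $\|\alpha^* F\| \leq \tfrac{4}{3}\|F\| < 2^{e+1}$ (Lemma~\ref{lem:alpha-series}) together with the error bound. The error bound itself is where the main obstacle sits: one must propagate the rounding errors introduced by the precomputation of the $\alpha_{r,k}$, the $\lambda N$ multiplications, and the final rounding step, and confirm that the total rounding contribution to each coefficient stays below $\tfrac{4}{3} \cdot 2^{e-p}$, so that when combined with the truncation bound $\tfrac{2}{3} \cdot 2^{e-p}$ the target $2^{e+1-p}$ is met. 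Using $|\alpha_{r,k}| \leq 2^{-r(b-2)}$ and $|F_k| \leq 2^e$ from Lemma~\ref{lem:alpha-beta-bound}, every summand has magnitude at most $2^e$, so taking a fixed constant number of guard bits above $p$ throughout the computation suffices; the remaining verification is a routine estimate.
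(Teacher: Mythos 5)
Your proposal is correct and follows essentially the same route as the paper: truncate the series defining $\alpha^*$ at depth $\lambda = O(1)$ (chosen so that $\lambda(b-2)$ exceeds roughly $p$), invoke Lemma~\ref{lem:alpha-series} to bound the tail, evaluate the $O(\lambda N)$ terms of the partial sum directly at working precision $p + O(1)$, and absorb the routine rounding analysis into the remaining error budget. The only differences from the paper's proof are cosmetic: a slightly different cutoff $\lambda$ and a slightly different partition of the error budget between truncation and rounding.
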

Note that the output coefficients can indeed be represented in $2^{e+1} \RR_p$
thanks to the bound $\norm{\alpha^* F} \leq \frac{16}{15} \norm{F}$
(Lemma \ref{lem:alpha-series} with $\lambda = 0$).
A similar remark applies to Proposition \ref{prop:approx-beta} below
(via Lemma \ref{lem:beta-series}).
\begin{proof}
   Let $\lambda \coloneqq \lceil p/b \rceil$;
   the hypothesis $p = O(b)$ implies that $\lambda = O(1)$.
   According to Lemma~\ref{lem:alpha-series},
   \[ \bignorm{\sum_{r=0}^{\lambda-1} \alpha^*_r F}
   \leq \frac{16}{15} \norm{F} \leq 2^{e+1} \]
   and
   \[ \bignorm{\alpha^* F - \sum_{r=0}^{\lambda-1} \alpha^*_r F}
      \leq \frac{16}{15} \cdot 2^{-\lambda b} \norm{F}
      \leq \frac{16}{15} \cdot 2^{-p} 2^e
      = \frac{8}{15} \cdot 2^{e+1-p}. \]
   To compute the desired $G \in 2^{e+1} \RR_p[X]/(X^N - 1)$
   such that $\norm{G - \alpha^* F} < 2^{e+1-p}$,
   it suffices to ensure that $G$ satisfies
   \begin{equation}
   \label{eq:G-approx}
      \bignorm{G - \sum_{r=0}^{\lambda-1} \alpha^*_r F}
      < \frac{7}{15} \cdot 2^{e+1-p}.
   \end{equation}
   This may be accomplished by simply evaluating the sum
   $\sum_{r=0}^{\lambda-1} \alpha^*_r F$ directly from the definition,
   with a sufficiently high working precision.

   In more detail, we first calculate the coefficients $\alpha_{k,r}$,
   for each $r = 0, \ldots, \lambda-1$ and $k = 0, \ldots, N-1$.
   Each one requires $O(\lambda) = O(1)$ operations in $\RR$,
   using the usual formula for the binomial coefficients.
   Next we compute the coefficients of the polynomials
   \begin{align*}
   \alpha^*_0 F
      & = F_0 + F_1 X + F_2 X^2 + \cdots + F_{N-2} X^{N-2} + F_{N-1} X^{N-1}, \\
   \alpha^*_1 F
      & = (\alpha_{0,1} F_0 X + \alpha_{1,1} F_1 X^2 + \cdots +
            \alpha_{N-1,1} F_{N-1} X^N) \bmod X^N - 1 \\
      & = \alpha_{N-1,1} F_{N-1} + \alpha_{0,1} F_0 X + \cdots +
            \alpha_{N-2,1} F_{N-2} X^{N-1}, \\
   \alpha^*_2 F
      & = (\alpha_{0,2} F_0 X^2 + \alpha_{1,2} F_1 X^3 + \cdots +
            \alpha_{N-1,2} F_{N-1} X^{N+1}) \bmod X^N - 1 \\
      & = \alpha_{N-2,2} F_{N-2} + \alpha_{N-1,2} F_{N-1} X
            + \alpha_{0,2} F_0 X^2 + \cdots + \alpha_{N-3,2} F_{N-3} X^{N-1},
   \end{align*}
   and so on, up to $\alpha^*_{\lambda-1} F$.
   This costs altogether $O(\lambda N) = O(N)$ operations in $\RR$.
   Taking the sum of these polynomials costs another
   $O(\lambda N)$ operations in $\RR$.
   To ensure that \eqref{eq:G-approx} holds,
   it suffices to perform all of these operations with a working precision
   of $p + O(\log \lambda) = p + O(1)$ significant bits.
   The details of this error analysis are routine and are omitted.
   Each such addition, multiplication or division in $\RR$
   costs $O(\Mint(p))$ bit operations,
   leading to the claimed complexity bound.
\end{proof}

\begin{prop}[Approximating $\beta^*$]
\label{prop:approx-beta}
   Given as input $F \in 2^e \RR_p[X]/(X^N - 1)$,
   we may compute $G \in 2^{e+1} \RR_p[X]/A(X)$ such that
   \[ \norm{G - \beta^* F} < 2^{e+1-p} \]
   in $O(N \Mint(p))$ bit operations, assuming that $p = O(b)$.
\end{prop}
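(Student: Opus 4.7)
The plan is to mirror the proof of Proposition~\ref{prop:approx-alpha} almost verbatim, replacing $\alpha^*$ with $\beta^*$, swapping the roles of $\RR[X]/A(X)$ and $\RR[X]/(X^N-1)$, and using Lemma~\ref{lem:beta-series} in place of Lemma~\ref{lem:alpha-series}. Specifically, I would set $\lambda \assign \lceil (p+2)/(b-2) \rceil$, which is $O(1)$ under the hypothesis $p = O(b)$. From Lemma~\ref{lem:beta-series}, the partial sum $\sum_{r=0}^{\lambda-1} \beta^*_r F$ has norm at most $\tfrac43 \norm{F} \leq 2^{e+1}$ (so the output fits in $2^{e+1}\RR_p[X]/A(X)$), and approximates $\beta^* F$ to within $\tfrac43 \cdot 2^{-\lambda(b-2)} \norm{F} \leq \tfrac16 \cdot 2^{e+1-p}$. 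Hence it suffices to compute this truncated sum to within $\tfrac56 \cdot 2^{e+1-p}$.

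For the actual computation, I would first precompute $\beta_{r,k}$ for each $0 \leq r < \lambda$ and $0 \leq k < N$ from the binomial formula, in $O(\lambda) = O(1)$ operations in $\RR$ apiece. For each $r$, I would form
\[ \sum_{k=0}^{N-1} \beta_{r,k} F_k X^{k+r} \in \RR[X] \]
as a polynomial of degree less than $N + r$, then reduce modulo $A(X)$ to obtain $\beta^*_r F \in \RR[X]/A(X)$. Finally, I would add the $\lambda$ resulting polynomials inside $\RR[X]/A(X)$. The pointwise multiplications and the final additions together cost $O(\lambda N) = O(N)$ operations in $\RR$.

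The only step that differs nontrivially from the $\alpha^*$ proof is the reduction modulo $A(X)$, which is not a mere cyclic shift: each overflow monomial $X^{N+j}$ is replaced by $X^j - 2^{-b} X^{j+1}$, since $X^N \equiv 1 - 2^{-b}X \pmod{A(X)}$. Because $r < \lambda = O(1)$, we have $j \leq \lambda - 1$, and for $N$ in the range of interest (and $\lambda < N$, which holds as soon as $b$ is moderately large) a single such substitution per overflow term is enough, so this reduction contributes only $O(\lambda) = O(1)$ operations in $\RR$ per coefficient of $F$, giving $O(N)$ per value of $r$.

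For the numerical error analysis, as in Proposition~\ref{prop:approx-alpha} it suffices to carry $p + O(\lambda) = p + O(1)$ significant bits throughout, so that the accumulated rounding error on each coefficient of the truncated sum stays below $\tfrac56 \cdot 2^{e+1-p}$. The mild inflation by a factor $(1+2^{-b})$ per reduction step by $A(X)$, already encountered in the proof of Lemma~\ref{lem:beta-r-bound}, is harmless and is absorbed into the $O(1)$ precision slack. Each real operation at this precision costs $O(\Mint(p))$ bit operations, so the total is $O(N \Mint(p))$ as claimed. The main potential pitfall is the modular reduction by $A(X)$, but since $\lambda = O(1)$ it genuinely costs only linear time, so I do not anticipate a serious obstacle.
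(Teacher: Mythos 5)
Your proposal matches the paper's proof, which simply says the argument is identical to Proposition~\ref{prop:approx-alpha} except that reductions modulo $A(X)$ introduce a bounded number of extra $-2^{-b}$ cross-terms; you have filled in those details correctly, using Lemma~\ref{lem:beta-series} in place of Lemma~\ref{lem:alpha-series} and the same choice of $\lambda$. One small caveat: your remark that a single substitution per overflow monomial suffices requires $\lambda < N$, which need not hold when $N$ is as small as $3$; the paper sidesteps this by simply bounding the total number of additional reduction terms by $O(\lambda^2) = O(1)$, which covers the cascading case as well and is the safer statement.
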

\begin{proof}
   Taking $\lambda \coloneqq \lceil p/(b-1) \rceil$,
   the proof proceeds along similar lines to that of
   Proposition~\ref{prop:approx-alpha},
   replacing the use of Lemma~\ref{lem:alpha-series}
   by Lemma~\ref{lem:beta-series}.
   The main difference is that the reductions modulo $A(X)$
   lead to slightly more complicated formulas.
   For example, we have
   \begin{align*}
      \beta^*_2 F
      & = (\beta_{0,2} F_0 X^2 + \cdots + \beta_{N-2,2} F_{N-2} X^N +
               \beta_{N-1,2} F_{N-1} X^{N+1}) \bmod A(X) \\
      & = \beta_{N-2,2} F_{N-2} +
         (\beta_{N-1,2} F_{N-1} - 2^{-b} \beta_{N-2,2} F_{N-2}) X \\
      & \phantom{abc} +
         (\beta_{0,2} F_0 - 2^{-b} \beta_{N-1,2} F_{N-1}) X^2 +
         \beta_{1,2} F_1 X^3 + \cdots
         + \beta_{N-3,2} F_{N-3} X^{N-1}.
   \end{align*}
   The terms with the minus signs are those arising
   from the $2^{-b} X$ term in $A(X)$.
   Overall, there are no more than $O(\lambda^2) = O(1)$ of these
   additional terms compared to the proof of
   Proposition~\ref{prop:approx-alpha}.
\end{proof}

\begin{rem}
   In the estimates given above,
   such as Lemma \ref{lem:alpha-beta-bound} and Lemma \ref{lem:alpha-r-bound},
   we have opted for shorter proofs rather than the sharpest possible bounds.
   With more effort, one could prove tighter bounds;
   this might save a few bits in the main algorithm,
   but does not affect the asymptotic conclusions of the paper.
   Similar remarks apply to the high product algorithm in
   Section \ref{sec:high}.
\end{rem}

\subsection{The main algorithm}

We are now in a position to state Algorithm \ref{algo:low}
and prove the main theorem concerning the computation of the low product.

\begin{algorithm}
   \SetAlgoLined
   \DontPrintSemicolon
   \KwIn{Parameters $n \geq 1$, $b \geq 4$, $N \geq 3$ with $N \geq n / b$,
   \newline integers $0 \leq u, v < 2^n$.}
   \KwOut{$uv \bmod 2^n$ (the low product of $u$ and $v$).}
   $p \assign 3b + \lg N + 6$.\;
   \stepcomment{Split inputs}
   Compute $u_0, \ldots, u_{N-1}$ and $v_0, \ldots, v_{N-1}$ with
   $0 \leq u_i, v_i < 2^b$ such that
   \[ u = \sum_{i=0}^{N-1} u_i 2^{ib},
      \qquad v = \sum_{i=0}^{N-1} v_i 2^{ib}, \]
   and let
   \[ \wbar U(X) \assign \sum_{i=0}^{N-1} u_i X^i,
      \qquad \wbar V(X) \assign \sum_{i=0}^{N-1} v_i X^i,
      \qquad \wbar U, \wbar V \in 2^b\RR_p[X]/A(X). \]
   \; \vspace{-10pt}
   \stepcomment{Convert to cyclic convolution}
   Use Proposition \ref{prop:approx-alpha} (approximating $\alpha^*$)
   to compute
   \[ \wtilde U, \wtilde V \in 2^{b+1}\RR_p[X]/(X^N - 1) \]
   such that
   \[ \norm{\wtilde U - \alpha^* \wbar U} < 2^{b+1-p},
      \qquad \norm{\wtilde V - \alpha^* \wbar V} < 2^{b+1-p}. \]
   \; \vspace{-10pt}
   \stepcomment{Perform convolution}
   Use \textsc{Convolution} (see \S\ref{sec:arithmetic}) to compute
   \[ \wtilde W \in 2^{2b+2+\lg N}\RR_p[X]/(X^N - 1) \]
   such that
   \[\norm{\wtilde W - \wtilde U \wtilde V} < 2^{2b+2+\lg N-p}. \]
   \; \vspace{-10pt}
   \stepcomment{Convert back}
   Use Proposition \ref{prop:approx-beta} (approximating $\beta^*$) to compute
   \[ \wbar W \in 2^{2b+3+\lg N}\RR_p[X]/A(X) \]
   such that
   \[ \norm{\wbar W - \beta^* \wtilde W} < 2^{2b+3+\lg N-p}. \]
   \; \vspace{-10pt}
   \stepcomment{Overlap-add}
   \KwRet{$\sum_{i=0}^{N-1} 2^{-b} \round(2^b \wbar {W_i})
         \cdot 2^{ib} \pmod{2^n}$}. \;
\caption{Low product}
\label{algo:low}
\end{algorithm}

\begin{proof}[Proof of Theorem \ref{thm:low}]
   As in the proof of Theorem \ref{thm:full}, let
   \[ U(X) \coloneqq \sum_{i=0}^{N-1} u_i X^i \in \ZZ[X], \qquad
      V(X) \coloneqq \sum_{i=0}^{N-1} v_i X^i \in \ZZ[X], \]
   so that $u = U(2^b)$ and $v = V(2^b)$, and let
   \[ W(X) \coloneqq U(X) V(X) = \sum_{i=0}^{2N-2} w_i X^i \in \ZZ[X]. \]
   The polynomials $\wbar U$ and $\wbar V$ in line 2 are just
   the images of $U$ and $V$ in $2^b \RR_p[X]/A(X)$.
   Our goal is to compute $L(X)$, the remainder on dividing $W(X)$ by $A(X)$,
   as in Proposition \ref{prop:low-cancel}.
   By definition this is equal to $\wbar U \wbar V$.

   Line 3 computes approximations $\wtilde U$ and $\wtilde V$
   to $\alpha^* \wbar U$ and $\alpha^* \wbar V$.
   Line 4 computes~$\wtilde W$,
   an approximation to $\wtilde U \wtilde V$
   (the cyclic convolution of $\wtilde U$ and $\wtilde V$).
   Observe that
   \begin{align*}
      \norm{\wtilde W - \alpha^*(\wbar U \wbar V)}
      & = \norm{\wtilde W - (\alpha^* \wbar U) (\alpha^* \wbar V)} \\
      & \leq \norm{\wtilde W - \wtilde U \wtilde V} +
         \norm{\wtilde U (\wtilde V - \alpha^* \wbar V)} +
         \norm{(\wtilde U - \alpha^* \wbar U)(\alpha^* \wbar V)} \\
      & \leq \norm{\wtilde W - \wtilde U \wtilde V} +
         N \norm{\wtilde U} \norm{\wtilde V - \alpha^* \wbar V} +
         N \norm{\alpha^* \wbar V}\norm{\wtilde U - \alpha^* \wbar U}.
   \end{align*}
   In this calculation we have used the fact that $\alpha^*$
   is a ring homomorphism (Corollary~\ref{cor:alpha-beta-morphisms}),
   and that $\norm{FG} \leq N \norm{F} \norm{G}$
   for any $F, G \in \RR[X]/(X^N - 1)$.
   By Lemma~\ref{lem:alpha-series} (with $\lambda = 0$) we have
   \[ \norm{\alpha^* \wbar V} \leq \frac{16}{15} \norm{\wbar V} < 2^{b+1}, \]
   so
   \begin{align*}
   \norm{\wtilde W - \alpha^*(\wbar U \wbar V)}
      & \leq 2^{2b+2+\lg N - p} + N \cdot 2^{b+1} 2^{b+1-p} +
                                  N \cdot 2^{b+1} 2^{b+1-p} \\
      & \leq 12 \cdot 2^{2b+\lg N-p}.
   \end{align*}
   
   Line 5 computes $\wbar W$, an approximation to $\beta^* \wtilde W$.
   Since $\alpha^*$ and $\beta^*$ are inverses
   (Corollary~\ref{cor:alpha-beta-morphisms}),
   Lemma \ref{lem:beta-series} implies that
   \begin{align*}
      \norm{\wbar W - L} = \norm{\wbar W - \wbar U \wbar V}
      & \leq \norm{\wbar W - \beta^* \wtilde W} +
         \norm{\beta^* (\wtilde W - \alpha^* (\wbar U \wbar V))} \\
      & < 2^{2b+3+\lg N-p} + \frac87 \cdot 12 \cdot 2^{2b+\lg N - p} \\
      & < 22 \cdot 2^{2b+\lg N - p}
         < 2^{-b}/2,
   \end{align*}
   where the last inequality follows from our choice of 
   $p = 3b + \lg N + 6$ in line 1.

   On the other hand, we know from Proposition \ref{prop:low-cancel}
   that $2^b L(X)$ has integer coefficients,
   so we deduce that $\round(2^b \wbar{W_i}) = 2^b L_i$ for each $i$.
   Therefore the sum in line~6 is equal to $L(2^b)$;
   by Proposition \ref{prop:low-cancel} this is equal to
   \begin{align*}
      \sum_{i=0}^{N-1} w_i 2^{ib}
         & \equiv \sum_{i=0}^{2N-2} w_i 2^{ib} \pmod{2^{Nb}} \\
         & = W(2^b) = uv.
   \end{align*}
   This congruence also holds modulo $2^n$ as $Nb \geq n$.

   The main term in the complexity bound \eqref{eq:low-bound} arises from
   the \textsc{Convolution} call in line 4.
   The splitting and overlap-add steps in lines 2 and 6 contribute
   $O(Nb)$ bit operations, as $\lg N = O(b)$ (by hypothesis),
   and the invocations of Proposition \ref{prop:approx-alpha} and
   Proposition~\ref{prop:approx-beta} in lines 3 and 5 contribute
   another $O(N \Mint(b))$ bit operations.
\end{proof}

\section{The high product}
\label{sec:high}

The discussion for the high product runs along similar lines to the low product,
with one additional technical complication.
The polynomial $B(X)$ that naturally replaces $A(X)$ in the cancellation trick
(see Proposition \ref{prop:high-cancel})
has $N$ roots near the roots of $X^N - 1$, just like $A(X)$,
but it also has a real root near $2^b$.
Some extra work is needed to handle this additional root.

\begin{rem}
   The asymmetry between the high and low products is somewhat mysterious.
   Perhaps it is related to the fact that in integer arithmetic,
   carries always propagate towards the most significant bits.
   The author has so far been unable to find a way of avoiding
   the annoying additional root.
\end{rem}

Throughout this section we continue to assume that \eqref{eq:b-N-bound} holds,
i.e., that $b \geq 4$ and $N \geq 3$.

\subsection{The cancellation trick}

We begin with a suitable analogue of Proposition~\ref{prop:low-cancel}.
To motivate our strategy,
recall that the cancellation trick for the low product relied on the fact that
\[ X^N \equiv 1 - 2^{-b} X \pmod{A(X)}. \]
Working modulo $A(X)$ has the effect of
shifting the high-order coefficients downwards by $N$ coefficients,
while at the same time multiplying them by $1 - 2^{-b} X$,
so that they will later cancel out when we evaluate at $X = 2^b$.
For the high product, we want to instead
multiply the \emph{low-order} coefficients by $1 - 2^{-b} X$
(to make them later cancel out),
and simultaneously shift the high-order coefficients
downward by $N$ coefficients.
We can accomplish these goals together by multiplying by $1 - 2^{-b} X$
modulo a polynomial $B(X)$ with the property that
\[ 1 - 2^{-b} X \equiv X^{-N} \pmod{B(X)}. \]
More precisely, we have the following result.

\begin{prop}
\label{prop:high-cancel}
   Let $W(X) \in \ZZ[X]$ with $\deg W \leq 2N$, say
   \[ W(X) = \sum_{i=0}^{2N} w_i X^i. \]
   Let $H(X) \in \RR[X]$ be the remainder on dividing $(1 - 2^{-b} X) W(X)$ by
   \[ B(X) \coloneqq X^{N+1} - 2^b X^N + 2^b \in \RR[X], \]
   with $\deg H < N + 1$.
   Then $2^b H(X) \in \ZZ[X]$ and
   \[ H(2^b) = \sum_{i=N}^{2N} w_i 2^{(i-N)b}. \]
\end{prop}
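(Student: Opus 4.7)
The plan is to reduce the claim to understanding the remainders $R_i(X) := (1 - 2^{-b}X) X^i \bmod B(X)$ for $0 \leq i \leq 2N$. By linearity, $H(X) = \sum_{i=0}^{2N} W_i R_i(X)$, so both conclusions reduce to determining each $R_i$ and evaluating at $X = 2^b$.

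The central observation, and the whole point of the choices of $B(X)$ and the factor $(1 - 2^{-b}X)$, is the identity
\[ (1 - 2^{-b}X)\, X^N \equiv 1 \pmod{B(X)}. \]
This falls out immediately from writing $B(X) = X^N(X - 2^b) + 2^b$ together with $1 - 2^{-b}X = -2^{-b}(X - 2^b)$. Multiplying by $X^j$ for $0 \leq j \leq N$ (so that $X^j$ still has degree $< N+1$ and needs no further reduction) then gives $R_{N+j}(X) = X^j$, so $R_{N+j}(2^b) = 2^{jb}$. For the low-degree case $0 \leq i \leq N-1$, the polynomial $(1-2^{-b}X) X^i$ already has degree $\leq N$, hence is its own remainder: $R_i(X) = X^i - 2^{-b} X^{i+1}$, and direct substitution gives $R_i(2^b) = 0$. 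In every case $2^b R_i(X) \in \ZZ[X]$, which immediately yields $2^b H(X) \in \ZZ[X]$.

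Assembling the pieces, $2^{Nb} H(2^b) = \sum_{j=0}^{N} W_{N+j}\, 2^{(N+j)b}$, which equals $W(2^b) - \sum_{i=0}^{N-1} W_i\, 2^{ib}$. Thus the ``error'' $W(2^b) - 2^{Nb} H(2^b)$ is \emph{exactly} $\sum_{i=0}^{N-1} W_i\, 2^{ib}$, and a one-line geometric-series bound using $b \geq 1$ gives
\[ \bigg| \sum_{i=0}^{N-1} W_i\, 2^{ib} \bigg| \leq \max_{0 \leq i < N} |W_i| \cdot \frac{2^{Nb}-1}{2^b-1} \leq 2 \cdot 2^{(N-1)b} \max_{0 \leq i < N} |W_i|, \]
matching the stated bound.

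There is no real obstacle: the whole argument hinges on the identity $(1-2^{-b}X) X^N \equiv 1 \pmod{B(X)}$ and the mild observation that it extends cleanly to the full range $N \leq i \leq 2N$ precisely because the hypothesis $\deg W \leq 2N$ stops $X^j$ from requiring further reduction. The factor $1 - 2^{-b} X$ was engineered so that, up to the unavoidable contribution $\sum_{i < N} W_i 2^{ib}$ from the low coefficients, the high part of $W$ is transferred losslessly to $H$ under the substitution $X \mapsto 2^b$.
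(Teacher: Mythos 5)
Your proof is correct and takes essentially the same route as the paper: the same key identity $(1 - 2^{-b}X)X^N \equiv 1 \pmod{B(X)}$, the same split of $W$ into low and high blocks (you just phrase it via the individual monomial remainders $R_i$), the observation that the low-degree terms vanish at $X = 2^b$, and the same geometric-series bound. Your explicit derivation of the identity from $B(X) = X^N(X - 2^b) + 2^b$ and $1 - 2^{-b}X = -2^{-b}(X - 2^b)$ is a nice touch the paper states without elaboration, but the argument is otherwise identical.
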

\begin{proof}
   Write
   \[ W(X) = \sum_{i=0}^{N-1} w_i X^i + \sum_{i=0}^{N} w_{N+i} X^{N+i}. \]
   Multiplying by $1 - 2^{-b} X$,
   and using the congruence $1 - 2^{-b} X \equiv X^{-N} \pmod{B(X)}$,
   we obtain
   \[ (1 - 2^{-b} X) W(X)
      \equiv \sum_{i=0}^{N-1} w_i X^i (1 - 2^{-b} X)
         + \sum_{i=0}^N w_{N+i} X^i \pmod{B(X)}. \]
   The polynomial on the right hand side has degree at most $N$,
   so we deduce that
   \[ H(X) = \sum_{i=0}^{N-1} w_i X^i (1 - 2^{-b} X)
            + \sum_{i=0}^N w_{N+i} X^i. \]
   This shows that $2^b H(X) \in \ZZ[X]$,
   and the result follows on substituting $X = 2^b$.
\end{proof}

\subsection{The roots of $B(X)$}
\label{sec:high-roots}

The next result isolates the auxiliary real root of $B(X)$.
\begin{lem}
\label{lem:real-root}
   The polynomial $B(X)$ has a unique real root $\rho$ in the interval
   \begin{equation}
   \label{eq:rho-interval}
      2^b (1 - 2 \cdot 2^{-Nb}) < \rho < 2^b.
   \end{equation}
   In particular,
   \begin{equation}
   \label{eq:rho-interval-numeric}
      0.999 \cdot 2^b < \rho < 2^b.
   \end{equation}
\end{lem}

\begin{rem}
   It turns out that $\rho$ is very close to $2^b(1 - 2^{-Nb})$,
   i.e., the midpoint of the interval \eqref{eq:rho-interval}.
   In fact, one can develop a series expansion for $\rho$,
   whose first few terms are given by
   \[ \rho = 2^b (1 - 2^{-Nb} - N \cdot 2^{-2Nb} - \cdots), \]
   but we will not prove this.
\end{rem}

\begin{proof}
   It is convenient to make the transformation
   \[ P(Y) \coloneqq \frac{B(2^b Y)}{2^{(N+1)b}} = Y^{N+1} - Y^N + \epsilon
   \qquad \text{where } \epsilon \coloneqq 2^{-Nb}. \]
   Our goal is to show that $P(Y)$ has a unique real root in the interval
   $(1-2\epsilon, 1)$.
   
   Clearly $P(1) = \epsilon > 0$.
   We claim that $P(1-2\epsilon) < 0$.
   To see this, observe that
   \[ P(1-2\epsilon) = (1-2\epsilon)^{N+1} - (1-2\epsilon)^N + \epsilon
      = \epsilon(1 - 2(1-2\epsilon)^N). \]
   From \eqref{eq:b-N-bound} we have $\epsilon = 2^{-Nb} \leq 2^{-4N}$ and then
   $(1 - 2\epsilon)^N \geq (1 - 2^{-4N+1})^N > \frac12$,
   so indeed $P(1-2\epsilon) < 0$.
   By the intermediate value theorem,
   $P(Y)$ has at least one root in $(1-2\epsilon,1)$.
   
   To prove that there is exactly one root,
   we will show that $P'(Y) > 0$ throughout the interval.
   We have $P'(Y) = Y^{N-1} ((N+1) Y - N)$,
   so it suffices to show that $1 - 2\epsilon > N/(N+1)$,
   i.e., that $2\epsilon < 1/(N+1)$.
   This is clear as $2\epsilon = 2^{-Nb+1} \leq 2^{-4N+1}$.
   
   Finally, \eqref{eq:rho-interval-numeric} follows immediately
   from \eqref{eq:rho-interval},
   after taking into account \eqref{eq:b-N-bound}.
\end{proof}

We note for future use the identity
\begin{equation}
\label{eq:rho-identity}
   1 - 2^{-b} \rho = \rho^{-N},
\end{equation}
which follows immediately from the fact that $B(\rho) = 0$.

Next consider the polynomial
\[ C(X) \coloneqq \frac{B(X)}{X - \rho} \in \RR[X]. \]
The coefficients of $C(X)$ are given explicitly as follows.
\begin{lem}
   We have
   \begin{equation}
      \label{eq:C-explicit}
      C(X) = X^N - \frac{2^b}{\rho}
         \left( \frac{X^{N-1}}{\rho^{N-1}} + \cdots + \frac{X}{\rho} + 1 \right).
   \end{equation}
\end{lem}
\begin{proof}
   First observe that
   \[ C(X) = \frac{X^{N+1} - \rho X^N + \rho X^N - 2^b X^N + 2^b}{X - \rho}
      = X^N + \frac{(\rho - 2^b) X^N + 2^b}{X - \rho}. \]
   From \eqref{eq:rho-identity} we have $\rho - 2^b = -2^b/\rho^N$ and hence
   \[ C(X) = X^N - \frac{2^b}{\rho} \cdot \frac{(X/\rho)^N - 1}{X/\rho - 1}.
      \qedhere \]
\end{proof}

\begin{lem}
\label{lem:high-roots}
   The roots of $C(X)$ lie in $D_2$, and they are all simple.
\end{lem}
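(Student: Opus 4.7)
The plan is to count the roots of $B(X)$ in $D_2$ using Rouché's theorem, identify these with the roots of $C(X)$ by discarding $\rho$, and then establish simplicity by a direct inspection of $B'$.

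First, I would apply Rouché's theorem on the circle $|X| = 2$ to the decomposition $B(X) = X^N(X - 2^b) + 2^b$. On this circle, $|X^N(X - 2^b)| \geq 2^N(2^b - 2)$, and a short arithmetic check using \eqref{eq:b-N-bound} shows $2^N(2^b - 2) > 2^b$ (equivalently $2^b(2^N - 1) > 2^{N+1}$, which holds since $2^b \geq 16$ while $2^{N+1}/(2^N-1) \leq 16/7$ for $N \geq 3$). Therefore $B(X)$ has the same number of zeros inside $D_2$ as $X^N(X - 2^b)$, namely $N$: the $N$-fold zero at the origin, the other root $2^b \geq 16$ lying well outside $D_2$.

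Next, Lemma \ref{lem:real-root} gives $\rho > 2^b(1 - 2^{-Nb+1}) > 2^{b-1} \geq 8$, so $\rho \notin D_2$. Since $\deg B = N+1$, the $N$ roots of $B$ in $D_2$ together with $\rho$ exhaust all roots of $B$. Because $C(X) = B(X)/(X - \rho)$, the $N$ roots of $C$ are precisely the roots of $B$ lying in $D_2$, settling the first assertion.

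For simplicity, I would show that $B$ itself has no multiple roots. Any multiple root $z$ must satisfy $B'(z) = z^{N-1}((N+1)z - N 2^b) = 0$, forcing $z = 0$ or $z = N 2^b/(N+1)$. The first is not a root, since $B(0) = 2^b$. For the second, a direct computation gives $B(N 2^b/(N+1)) = 2^b \bigl(1 - N^N 2^{Nb}/(N+1)^{N+1}\bigr)$, and the elementary bound $(N+1)^{N+1}/N^N = (N+1)(1+1/N)^N < e(N+1)$ is dwarfed by $2^{Nb} \geq 2^{12}$ under \eqref{eq:b-N-bound}, so this expression is nonzero. Hence $B$, and therefore $C$, has only simple roots. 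The argument presents no real obstacle, paralleling Lemma \ref{lem:low-roots} once the outlier $\rho$ has been excised via Rouché.
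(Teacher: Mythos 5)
Your proof is correct, and it takes a genuinely different route from the paper's for the root-location part. The paper argues directly from the explicit formula for $C(X)$: assuming a root $z$ of $C$ with $|z| \geq 2$, it bounds $|z|^N$ by $\tfrac{2^b}{\rho}\bigl(1 + \rho^{-1} + \cdots + \rho^{-(N-1)}\bigr)|z|^{N-1}$, which forces $|z| < 2$, a contradiction. You instead apply Rouch\'e to the clean decomposition $B(X) = X^N(X - 2^b) + 2^b$ on $|X| = 2$, conclude that $B$ has exactly $N$ zeros (with multiplicity) in $D_2$, and then peel off $\rho$ by a degree count. Your approach avoids ever touching the coefficients of $C$ (which depend on $\rho$), whereas the paper's estimate parallels its proof of Lemma~\ref{lem:low-roots} and keeps the two cases structurally similar. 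Both are perfectly sound.

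For simplicity, the paper also reduces to multiple roots of $B$ and the critical point $z = N2^b/(N+1)$, but concludes by observing that this point lies outside $D_2$, contradicting the location result just proved. You instead evaluate $B$ there directly, which is self-contained. One small wording issue: you justify $e(N+1) < 2^{Nb}$ by pointing to $2^{Nb} \geq 2^{12}$, but that single numerical lower bound does not cover large $N$; the correct observation is that $2^{Nb} \geq 16^N$ dominates $e(N+1)$ for all $N \geq 3$. The conclusion is unaffected, but the inequality as stated should be repaired.
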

\begin{proof}
   If $z$ is a root of $C(X)$ and $|z| \geq 1$, then
   \[ |z|^N = \frac{2^b}{\rho}
         \left| \frac{z^{N-1}}{\rho^{N-1}} + \cdots + \frac{z}{\rho} + 1 \right|
      \leq \frac{2^b}{\rho} |z|^{N-1}
         \left( \frac{1}{\rho^{N-1}} + \cdots + \frac{1}{\rho} + 1\right), \]
   so by \eqref{eq:rho-interval-numeric} and \eqref{eq:b-N-bound} we obtain
   \[ |z| \leq \frac{2^b}{\rho} \cdot \frac{1}{1 - \rho^{-1}}
   = \frac{2^b}{\rho - 1} \leq \frac{2^b}{0.999 \cdot 2^b - 1} < 2. \]

   If $C(X)$ had a multiple root, say $z$,
   then $z$ would also be a multiple root of $B(X)$.
   This would imply that
   \[ B'(z) = (N+1)z^N - 2^b N z^{N-1} = 0, \]
   which in turn forces $z = 2^b N / (N+1)$
   (since clearly $z \neq 0$).
   This contradicts the previous paragraph,
   as $2^b N/(N+1)$ does not lie in $D_2$.
\end{proof}

Lemma \ref{lem:real-root} and Lemma \ref{lem:high-roots} together imply
that $B(X)$ has $N+1$ distinct roots, namely,
the $N$ roots of $C(X)$, and the auxiliary root $\rho$.
Figure \ref{fig:high-roots} illustrates the case $N = 12$, $b = 1$.
\begin{figure}
\begin{tikzpicture}[scale=2]
   \draw (-1.2, 0) -- (2.2, 0);
   \draw (0, -1.2) -- (0, 1.2);
   \draw (-1.000, 0.000) circle(0.02);
   \draw (1.000, 0.000) circle(0.02);
   \draw (-0.866, -0.500) circle(0.02);
   \draw (-0.866, 0.500) circle(0.02);
   \draw (-0.500, -0.866) circle(0.02);
   \draw (-0.500, 0.866) circle(0.02);
   \draw (0.000, -1.000) circle(0.02);
   \draw (0.000, 1.000) circle(0.02);
   \draw (0.500, -0.866) circle(0.02);
   \draw (0.500, 0.866) circle(0.02);
   \draw (0.866, -0.500) circle(0.02);
   \draw (0.866, 0.500) circle(0.02);
   \filldraw[black] (-0.968, 0.000) circle(0.02);
   \filldraw[black] (1.065, 0.000) circle(0.02);
   \filldraw[black] (2.000, 0.000) circle(0.02);
   \filldraw[black] (-0.847, -0.473) circle(0.02);
   \filldraw[black] (-0.847, 0.473) circle(0.02);
   \filldraw[black] (-0.511, -0.833) circle(0.02);
   \filldraw[black] (-0.511, 0.833) circle(0.02);
   \filldraw[black] (-0.037, -0.989) circle(0.02);
   \filldraw[black] (-0.037, 0.989) circle(0.02);
   \filldraw[black] (0.466, -0.896) circle(0.02);
   \filldraw[black] (0.466, 0.896) circle(0.02);
   \filldraw[black] (0.880, -0.554) circle(0.02);
   \filldraw[black] (0.880, 0.554) circle(0.02);
\end{tikzpicture}
\caption{Roots of $X^{13} - 2X^{12} + 2$ (filled circles)
      and $X^{12} - 1$ (empty circles).}
\label{fig:high-roots}
\end{figure}

Now consider the function
\[ \delta(z) \coloneqq z(1 - 2^{-b}z)^{1/N}, \qquad z \in D_{2^b}. \]
This is the same as the definition of $\beta(z)$
in Section \ref{sec:low-roots},
except that the exponent $-1/N$ has been replaced by $1/N$.
The roots of $C(X)$ lie well within the domain of definition of $\delta(z)$.
The auxiliary root $\rho$ is also inside the domain,
but lies very close to the boundary.

\begin{lem}
The function $\delta(z)$ maps roots of $B(X)$ to roots of $X^N - 1$.
\end{lem}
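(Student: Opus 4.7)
The plan is to mimic the proof of Lemma~\ref{lem:low-maps}, but with the exponent changed from $-1/N$ to $+1/N$. The key algebraic identity is that the defining equation $B(z)=0$, i.e.\ $z^{N+1} - 2^b z^N + 2^b = 0$, can be rewritten as $z^N(2^b - z) = 2^b$, or equivalently
\[ z^N(1 - 2^{-b} z) = 1. \]
Once this is in hand, raising $\delta(z) = z(1 - 2^{-b}z)^{1/N}$ to the $N$-th power formally gives $\delta(z)^N = z^N (1 - 2^{-b} z) = 1$, so $\delta(z)$ is a root of $X^N - 1$.

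The only point requiring care is that $\delta$ is defined as an analytic function on $D_{2^b}$ via the principal branch of $u \mapsto u^{1/N}$ sending $1$ to $1$, so we must (i) check each root of $B(X)$ actually lies in $D_{2^b}$, and (ii) verify that the branch of $(1-2^{-b}z)^{1/N}$ used in the definition of $\delta$ agrees with the $N$-th root implicitly chosen by the algebraic identity. For the $N$ roots of $C(X)$ this is immediate: by Lemma~\ref{lem:high-roots} they lie in $D_2 \subset D_{2^b}$, well inside the domain, and the $N$-th power map identifies the formal computation with the honest one. For the auxiliary real root $\rho$, Lemma~\ref{lem:real-root} gives $\rho < 2^b$, so $\rho \in D_{2^b}$ but lies very close to the boundary; since $1 - 2^{-b}\rho > 0$, the quantity $(1-2^{-b}\rho)^{1/N}$ is the positive real $N$-th root, hence $\delta(\rho) > 0$, and $\delta(\rho)^N = 1$ forces $\delta(\rho) = 1$.

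The only mild obstacle is the $\rho$ case: one must be comfortable that $\delta$ extends continuously to $\rho$ and that the positive real value of the radical is the correct branch. Both are clear since $1 - 2^{-b}\rho$ is a small positive real number and $\delta$ is holomorphic on $D_{2^b}$. So the proof will be just a few lines: rearrange $B(z)=0$ to get $z^N(1-2^{-b}z)=1$, observe that every root of $B$ lies in $D_{2^b}$ (using Lemmas~\ref{lem:real-root} and~\ref{lem:high-roots}), and compute $\delta(z)^N = 1$.
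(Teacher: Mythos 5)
Your proof is correct and takes essentially the same route as the paper: rearrange $B(z)=0$ to $z^N(1-2^{-b}z)=1$ and then compute $\delta(z)^N = z^N(1-2^{-b}z) = 1$. The branch-of-radical discussion is harmless but unnecessary, since $\bigl((1-2^{-b}z)^{1/N}\bigr)^N = 1-2^{-b}z$ holds for any choice of $N$-th root and $\delta$ is already well-defined on $D_{2^b}$, which contains every root of $B(X)$.
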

\begin{proof}
If $z$ is a root of $B(X)$, then
 \[ \delta(z)^N = z^N(1 - 2^{-b} z) = z^N - 2^{-b} z^{N+1} = 1. \qedhere \]
\end{proof}
Of course, $\delta(z)$ cannot yield a bijection
between the roots of $B(X)$ and those of $X^N - 1$,
as $B(X)$ has too many roots.
In a moment we will see that we do get a bijection
if we restrict to the roots of $C(X)$.
(It turns out that $\delta(\rho) = 1$,
so $\delta$ maps precisely two roots of $B(X)$ to $1$.
We omit the easy proof.)

For any $k \in \ZZ$, the function $\delta(z)^k$ is represented
on $D_{2^b}$ by the series
\[ \delta(z)^k = z^k \sum_{r=0}^\infty \delta_{k,r} z^r
   = z^k + \delta_{k,1} z^{k+1} + \delta_{k,2} z^{k+2} + \cdots \]
where
 \[ \delta_{k,r} \coloneqq \binom{k/N}{r} (-2^{-b})^r, \qquad r \geq 0. \]
Again, $\delta_{k,r}$ is identical to $\beta_{k,r}$,
except that $N$ has the opposite sign.
The first few terms in the expansion of $\delta(z)$ are
\[ \delta(z) = z - \frac{1}N 2^{-b} z^2 - \frac{(N-1)}{2N^2} 2^{-2b} z^3
   - \frac{(N-1)(2N-1)}{6N^3} 2^{-3b} z^4 - \cdots. \]

Let $\gamma(z) \in z \, \RR[[z]]$ be the formal series inverse of $\delta(z)$.
\begin{lem}
\label{lem:gamma-coeffs}
   For any $k \geq 0$ we have (formally)
   \[ \gamma(z)^k = z^k \sum_{r=0}^\infty \gamma_{k,r} z^r
      = z^k +  \gamma_{k,1} z^{k+1} + \gamma_{k,2} z^{k+2} + \cdots \]
   where $\gamma_{k,0} \coloneqq 1$ and
   \[ \gamma_{k,r} \coloneqq \frac{k}{k+r} \binom{-(k+r)/N}{r} (-2^{-b})^r,
      \qquad r \geq 1. \]
\end{lem}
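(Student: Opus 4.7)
The plan is to mimic the proof of Lemma \ref{lem:alpha-coeffs} almost verbatim, exploiting the fact that $\delta(z) = z(1-2^{-b}z)^{1/N}$ differs from $\beta(z) = z(1-2^{-b}z)^{-1/N}$ only in the sign of the exponent. Since $\gamma(z)$ is defined purely as the formal power series inverse of $\delta(z)$, there are no analytic issues at this stage; the derivation is a formal computation that goes through with $N$ replaced by $-N$ in the appropriate intermediate quantities.

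Concretely, I would apply the Lagrange inversion formula (equation (2.1.2) of \cite{Ges-lagrange}) with $\phi(t) := t^k$ and $n := k+r$, which expresses $\gamma_{r,k}$ as the coefficient of $z^r$ in
\[ \left(1 - \delta(z)\big(z/\delta(z)\big)'\right) \big(z/\delta(z)\big)^{k+r}. \]
Using $z/\delta(z) = (1-2^{-b}z)^{-1/N}$, a short differentiation yields $\delta(z)(z/\delta(z))' = \tfrac{2^{-b}z}{N}(1-2^{-b}z)^{-1}$, so the target expression reduces to
\[ (1-2^{-b}z)^{-(k+r)/N} - \frac{2^{-b}z}{N}(1-2^{-b}z)^{-(k+r)/N - 1}. \]
Extracting the coefficient of $z^r$ via the binomial series and then applying the identity $\binom{a}{r} = \tfrac{a}{r}\binom{a-1}{r-1}$ to the first term (with $a = -(k+r)/N$), one gets two contributions sharing the common factor $\binom{-(k+r)/N - 1}{r-1}(-2^{-b})^r$. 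Combining them should produce the claimed formula $\gamma_{r,k} = \tfrac{-k}{rN}\binom{-(k+r)/N - 1}{r-1}(-2^{-b})^r$.

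The main obstacle, if one can call it that, is simply careful sign bookkeeping: several minus signs appear from $(-2^{-b})^r$ and from generalized binomial coefficients with negative top entry, and these must be tracked so that the two terms combine rather than cancel. The $k=0$ case in the formula $\gamma_{r,0} = 0$ (for $r \geq 1$) arises because the subtraction between the two terms becomes exact, which serves as a useful sanity check. No new analytical input is required: the companion claim that $\gamma(z)$ actually converges on a disc of positive radius will be addressed in the analogues of Lemma \ref{lem:alpha-beta-bound} and Corollary \ref{cor:low-inverses}, not here.
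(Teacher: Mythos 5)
Your proposal is correct and follows essentially the same route as the paper: the paper simply remarks that Lemmas \ref{lem:gamma-coeffs}--\ref{cor:high-bijection} are proved exactly as their Section~3.2 counterparts with $N$ replaced by $-N$, and your Lagrange-inversion computation is precisely that substitution carried out explicitly for Lemma~\ref{lem:alpha-coeffs}. The sign bookkeeping you flag does go through (the extra sign from $(-2^{-b})^{r-1}$ turns the apparent subtraction into an addition, giving $\tfrac{-(k+r)}{rN}+\tfrac{1}{N}=\tfrac{-k}{rN}$).
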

In particular, the first few terms of $\gamma(z)$ are
\[ \gamma(z) = z + \frac{1}N 2^{-b} z^2 + \frac{(N+3)}{2 N^2} 2^{-2b} z^3
   + \frac{(N+4)(2N+4)}{6 N^3} 2^{-3b} z^4 + \cdots. \]
\begin{proof}
   Same as the proof of Lemma \ref{lem:alpha-coeffs},
   with $N$ replaced by $-N$ everywhere.
\end{proof}

\begin{lem}
\label{lem:gamma-delta-bound}
   For all $r \geq 0$ and $0 \leq k < N$ we have
   \[ |\delta_{k,r}| \leq 2^{-rb}, \qquad |\gamma_{k,r}| \leq 2^{-r(b-2)}. \]
\end{lem}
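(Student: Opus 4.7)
The plan is to follow the proof of Lemma \ref{lem:alpha-beta-bound} essentially verbatim, exploiting the fact that $\delta_{r,k}$ and $\gamma_{r,k}$ are obtained from $\beta_{r,k}$ and $\alpha_{r,k}$ by flipping the sign of $N$ in the binomial factors, which has no effect on the magnitudes that appear after taking absolute values.

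First, I would dispatch the $r=0$ case as trivial and then handle $\delta_{r,k}$ for $r \geq 1$. Writing out
\[ |\delta_{r,k}| = \frac{2^{-rb}}{r!} \prod_{j=0}^{r-1} \left| \frac{k}{N} - j \right|, \]
I would observe that $0 \leq k/N < 1$, so $|k/N - j| \leq j+1$ for every $j \geq 0$. The product telescopes against the $r!$ in the denominator to give $|\delta_{r,k}| \leq 2^{-rb}$, matching the corresponding bound for $\beta_{r,k}$.

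Next I would turn to $\gamma_{r,k}$, which by Lemma \ref{lem:gamma-coeffs} satisfies
\[ |\gamma_{r,k}| = \frac{k}{rN} \cdot \frac{2^{-rb}}{(r-1)!} \prod_{j=1}^{r-1} \left( \frac{k+r}{N} + j \right) \leq \frac{k}{N} \cdot \frac{2^{-rb}}{r!} \prod_{j=1}^{r-1} \left( \frac{k+r}{N} + j \right). \]
Since $k < N$ and $j \leq r-1$, each of the $r-1$ factors in the product, as well as the leading factor $k/N$, is at most $r/N + r$. Thus the product of all $r$ factors is bounded by $(r/N + r)^r = r^r(1 + 1/N)^r$. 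Applying the standard inequality $r^r/r! \leq e^r$ and using $N \geq 3$ to get $(1+1/N)^r \leq (4/3)^r$, I obtain
\[ |\gamma_{r,k}| \leq 2^{-rb} \, e^r (4/3)^r \leq 2^{-r(b-2)}, \]
as $e \cdot 4/3 \leq 4$.

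There is no real obstacle here: the argument is an exact translation of Lemma \ref{lem:alpha-beta-bound}. The only minor point worth checking is that the sign changes in the formulas for $\delta_{r,k}$ and $\gamma_{r,k}$ (relative to $\beta_{r,k}$ and $\alpha_{r,k}$) leave the magnitudes of the factors bounded by the same quantities used in the earlier proof, which is immediate once absolute values are taken.
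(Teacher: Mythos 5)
Your proof is correct, and it is exactly what the paper intends: the paper explicitly states that the results of Section 4.2, including this lemma, ``are proved in exactly the same way as the corresponding results in Section 3.2, replacing $N$ by $-N$ as appropriate,'' and your argument is a faithful execution of that recipe, with the sign flips in the binomial factors leaving the magnitude bounds unchanged.
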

\begin{proof}
   The bound for $\delta_{k,r}$ follows by the same argument used for
   $\beta_{k,r}$ in the proof of Lemma \ref{lem:alpha-beta-bound}.
   For $\gamma_{k,r}$, observe that for $r \geq 1$ we have
   \begin{multline*}
      \frac{|\gamma_{k,r}|}{2^{-rb}}
      = \frac{k}{k+r} \cdot \frac{1}{r!} \prod_{j=0}^{r-1} \left(\frac{k+r}{N} + j\right)
      \leq \frac{1}{r!} \prod_{j=0}^{r-1} \left( \frac{r}{N} + j + 1 \right) \\
      \leq \frac{1}{r!} \left(\frac{r}{N} + r\right)^r
      = \frac{r^r}{r!} \left(1 + \frac{1}{N}\right)^r.
   \end{multline*}
   Stirling's formula implies that $r^r/r! \leq e^r$,
   so since $N \geq 3$ we obtain
   \[ \frac{|\gamma_{k,r}|}{2^{-rb}} \leq e^r (4/3)^r < 3.63^r < 2^{2r}. \qedhere \]
\end{proof}

\begin{rem}
   The constant $-2$ in the above bound for $\gamma_{k,r}$
   ensures that the statement is correct for all $k$ and $r$,
   but asymptotically speaking it is not really necessary.
   In fact one can prove that for any $\epsilon > 0$,
   there exist $N_0$ and $r_0$ such that
   $|\gamma_{k,r}| \leq 2^{-r(b-\epsilon)}$ for all $0 \leq k < N$,
   whenever $N \geq N_0$ and $r \geq r_0$.
\end{rem}

\begin{cor}
\label{cor:high-inverses}
   The series for $\gamma(z)$ and $\delta(z)$ converge on
   $D_{2^{b-2}}$ and $D_{2^b}$ respectively, and
   \[ \gamma(\delta(z)) = z = \delta(\gamma(z)), \qquad z \in D_2. \]
\end{cor}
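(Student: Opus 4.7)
The plan is to mirror the proof of Corollary \ref{cor:low-inverses} essentially verbatim, with the roles of $\alpha$ and $\beta$ played by $\gamma$ and $\delta$ respectively. The two convergence statements are immediate from Lemma \ref{lem:gamma-delta-bound}: the bound $|\delta_{r,1}| \leq 2^{-rb}$ yields absolute convergence of $\delta(z) = \sum_{r \geq 0} \delta_{r,1} z^{r+1}$ on $D_{2^b}$, and $|\gamma_{r,1}| \leq 2^{-r(b-2)}$ yields convergence of $\gamma(z)$ on $D_{2^{b-2}}$.

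Next I would verify that each of $\gamma$ and $\delta$ maps $D_2$ into the disc of convergence of the other. Using the running hypothesis $b \geq 4$, for $|z| < 2$ the bound on $\gamma_{r,1}$ gives
\[ |\gamma(z)| \leq \sum_{r=0}^\infty |\gamma_{r,1}| |z|^{r+1} < \sum_{r=0}^\infty 2^{-r(b-2)} 2^{r+1} = \frac{2}{1 - 2^{-b+3}} \leq 4, \]
so $\gamma(D_2) \subseteq D_4 \subseteq D_{2^b}$. An identical estimate, using $|\delta_{r,1}| \leq 2^{-rb} \leq 2^{-r(b-2)}$, shows $\delta(D_2) \subseteq D_4 \subseteq D_{2^{b-2}}$.

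Finally, since $\gamma(z)$ is defined as the formal series inverse of $\delta(z)$, the identities $\gamma(\delta(z)) = z = \delta(\gamma(z))$ hold in $\RR[[z]]$. The containments established above ensure that, for $z \in D_2$, the inner map in each composition lands inside the disc of convergence of the outer, so both composite series define genuine analytic functions on $D_2$. Two analytic functions on a disc that agree as formal power series must coincide as functions, which gives the claim.

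I do not expect any real obstacle here, as the argument is a direct transcription of Corollary \ref{cor:low-inverses}. The only matter requiring vigilance is to keep the correspondence straight: $\gamma$ plays the role of $\alpha$ and $\delta$ the role of $\beta$, which matches the fact that $\delta$ is defined by the binomial series with exponent $+1/N$ (the sign flip from $\beta$) and $\gamma$ is its compositional inverse.
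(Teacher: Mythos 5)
Your argument is correct and is exactly the approach the paper intends: the text states that the results in Section 4.2, including this corollary, are proved "in exactly the same way as the corresponding results in Section 3.2, replacing $N$ by $-N$ as appropriate," and your transcription of the proof of Corollary~\ref{cor:low-inverses} with $\gamma, \delta$ in place of $\alpha, \beta$ carries this out faithfully, with the correct use of Lemma~\ref{lem:gamma-delta-bound} and the hypothesis $b \geq 4$ to obtain the containments $\gamma(D_2), \delta(D_2) \subseteq D_4$.
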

\begin{proof}
   Same as the proof of Corollary \ref{cor:low-inverses},
   first using Lemma \ref{lem:gamma-delta-bound} to show that
   $\delta(z)$ maps $D_2$ into $D_3 \subseteq D_{2^{b-2}}$ and that
   $\gamma(z)$ maps $D_2$ into $D_4 \subseteq D_{2^b}$.
\end{proof}

\begin{cor}
\label{cor:high-bijection}
   The functions $\gamma(z)$ and $\delta(z)$ induce mutually inverse bijections
   between the roots of $X^N - 1$ and the roots of $C(X)$.
\end{cor}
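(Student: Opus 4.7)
The plan is to follow the same template as the proof of Corollary~\ref{cor:low-bijection}, with one extra care point: we must restrict attention to the roots of $C(X)$ (the $N$ roots of $B(X)$ lying in $D_2$), avoiding the auxiliary real root $\rho \approx 2^b$ which lies outside the disc where $\gamma(z)$ has been shown to make sense.

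First I would invoke Lemma~\ref{lem:high-roots} to conclude that $C(X)$ has $N$ distinct roots $z_1,\ldots,z_N$, all lying inside $D_2$. These are also roots of $B(X)$, so by the preceding lemma ``$\delta$ sends roots of $B(X)$ to roots of $X^N-1$'', the images $\delta(z_1),\ldots,\delta(z_N)$ are all roots of $X^N-1$. Next I would use Corollary~\ref{cor:high-inverses}, which gives $\gamma(\delta(z))=z$ for every $z\in D_2$: this identity forces $\delta$ to be injective on $D_2$, so the images $\delta(z_1),\ldots,\delta(z_N)$ are $N$ distinct roots of $X^N-1$. Since $X^N-1$ has exactly $N$ roots, $\delta$ must restrict to a bijection from $\{z_1,\ldots,z_N\}$ onto the set of $N$-th roots of unity.

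Finally, since every $N$-th root of unity lies in $D_2$, applying the other half of Corollary~\ref{cor:high-inverses}, namely $\delta(\gamma(w))=w$ for $w\in D_2$, shows that $\gamma$ sends each such $w$ back to the unique $z_i$ with $\delta(z_i)=w$. Thus $\gamma$ is the two-sided inverse of the restriction of $\delta$ to $\{z_1,\ldots,z_N\}$, completing the proof.

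The only mildly delicate point, and the one that differs from the low-product case, is making sure the auxiliary root $\rho$ of $B(X)$ is explicitly excluded: Lemma~\ref{lem:real-root} places $\rho$ near $2^b$, far outside $D_2$, so it is simply not in the domain on which Corollary~\ref{cor:high-inverses} asserts $\gamma\circ\delta=\mathrm{id}$. Once one notes this and works exclusively with the roots of $C(X)$, the counting argument goes through verbatim; I do not anticipate any other obstacle.
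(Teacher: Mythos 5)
Your proof is correct and follows essentially the same route as the paper, which simply declares that Corollary~\ref{cor:high-bijection} ``is proved in exactly the same way as'' Corollary~\ref{cor:low-bijection}: invoke Lemma~\ref{lem:high-roots} to get $N$ distinct roots of $C(X)$ in $D_2$, use the unnumbered lemma that $\delta$ sends roots of $B(X)$ to roots of $X^N-1$, conclude injectivity of $\delta$ on $D_2$ from Corollary~\ref{cor:high-inverses}, and finish by counting. Your explicit note that the auxiliary root $\rho$ lies outside $D_2$ and is therefore excluded from the argument is a correct and useful observation, though it is already implicit in the paper's setup (the domain in Corollary~\ref{cor:high-inverses} and the fact that the statement is about $C(X)$, not $B(X)$).
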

\begin{proof}
   Similar to the proof of Corollary \ref{cor:low-bijection}.
\end{proof}

\subsection{Ring isomorphisms}
\label{sec:high-iso}

In this section we will first construct maps
\begin{align*}
   \gamma^* & \colon \RR[X]/C(X) \longto \RR[X]/(X^N - 1), \\
   \delta^* & \colon \RR[X]/(X^N - 1) \longto \RR[X]/C(X),
\end{align*}
analogous to the maps $\alpha^*$ and $\beta^*$ defined
in Section \ref{sec:low-iso}.
Note that these maps do not yet take into account the auxiliary root $\rho$.

For each $r \geq 0$ define linear maps
\begin{align*}
   \gamma^*_r & \colon \RR[X]/C(X) \longto \RR[X]/(X^N - 1), \\
   \delta^*_r & \colon \RR[X]/(X^N - 1) \longto \RR[X]/C(X)
\end{align*}
by the formulas
\begin{align*}
   \gamma^*_r \bigg(\sum_{k=0}^{N-1} F_k X^k \bmod{C(X)}\bigg)
      & \coloneqq \sum_{k=0}^{N-1} \gamma_{k,r} F_k X^{k+r} \bmod{X^N - 1}, \\
   \delta^*_r \bigg(\sum_{k=0}^{N-1} F_k X^k \bmod{X^N - 1}\bigg)
      & \coloneqq \sum_{k=0}^{N-1} \delta_{k,r} F_k X^{k+r} \bmod{C(X)}.
\end{align*}
As in Section \ref{sec:low-iso} we have the following norm bounds.
\begin{lem}
\label{lem:gamma-r-bound}
   For any $r \geq 0$ and $F \in \RR[X]/C(X)$,
   \[ \norm{\gamma^*_r F} \leq 2^{-r(b-2)} \norm{F}. \]
\end{lem}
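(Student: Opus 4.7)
The plan is to mimic exactly the proof of Lemma~\ref{lem:alpha-r-bound}, since the target ring in both cases is $\RR[X]/(X^N-1)$, where multiplication by $X$ acts as a cyclic shift on the coefficient vector. The only input that changes is the bound on the scalar coefficients: here we use the bound on $|\gamma_{r,k}|$ from Lemma~\ref{lem:gamma-delta-bound} in place of the bound on $|\alpha_{r,k}|$ from Lemma~\ref{lem:alpha-beta-bound}.

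Concretely, first I would introduce the auxiliary polynomial
\[ G := \sum_{k=0}^{N-1} \gamma_{r,k} F_k X^k \in \RR[X]/(X^N-1), \]
so that by the very definition of $\gamma^*_r$ we have $\gamma^*_r F = X^r G$ in $\RR[X]/(X^N-1)$. Next I would note that in the cyclic convolution ring, multiplication by $X$ (and hence by any power $X^r$) simply rotates the coefficient vector, so $\norm{X^r G} = \norm{G}$. Finally, Lemma~\ref{lem:gamma-delta-bound} bounds $|\gamma_{r,k}| \leq 2^{-r(b-2)}$ uniformly for $0 \leq k < N$, so
\[ \norm{G} = \max_{0 \leq k < N} |\gamma_{r,k} F_k| \leq 2^{-r(b-2)} \norm{F}, \]
and concatenating the two estimates yields $\norm{\gamma^*_r F} \leq 2^{-r(b-2)} \norm{F}$.

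There is essentially no obstacle: the argument is word-for-word the one used for $\alpha^*_r$, and the fact that we are now working modulo $C(X)$ on the source side is irrelevant, since the target $\RR[X]/(X^N-1)$ is where the norm is measured and where the cyclic-shift property of $X$ is available. The one small thing worth being careful about is that the reduction modulo $X^N-1$ in the definition of $\gamma^*_r$ is what makes the identity $\gamma^*_r F = X^r G$ hold (the factor $X^r$ wraps the coefficients around), which is precisely what feeds into the shift-invariance of the norm.
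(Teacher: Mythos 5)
Your proof is correct and is exactly the argument the paper intends: the paper simply says ``Similar to the proof of Lemma~\ref{lem:alpha-r-bound},'' and your writeup spells out that same argument, using the cyclic-shift invariance of the norm in $\RR[X]/(X^N-1)$ together with the coefficient bound from Lemma~\ref{lem:gamma-delta-bound}.
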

\begin{proof}
   Similar to the proof of Lemma \ref{lem:alpha-r-bound},
   using Lemma \ref{lem:gamma-delta-bound} to bound the series coefficients.
\end{proof}
\begin{lem}
\label{lem:delta-r-bound}
   For any $r \geq 0$ and $F \in \RR[X]/(X^N - 1)$,
   \[ \norm{\delta^*_r F} \leq 2^{-r(b-1)} \norm{F}. \]
\end{lem}
\begin{proof}
   As in the proof of Lemma \ref{lem:beta-r-bound},
   we must first work out the effect of multiplication by $X$ modulo $C(X)$.
   Let $H = \sum_{k=0}^{N-1} H_k X^k \in \RR[X]/C(X)$.
   The formula \eqref{eq:C-explicit} implies that
   \[ XH = \frac{2^b}{\rho} H_{N-1}
      + \left( H_0 + \frac{2^b}{\rho^2} H_{N-1} \right) X + \cdots
      + \left(H_{N-2} + \frac{2^b}{\rho^N} H_{N-1} \right) X^{N-1}. \]
   We have $2^b/\rho < 2$ and $2^b/\rho^i < 1$ for $i \geq 2$
   (due to \eqref{eq:rho-interval-numeric} and \eqref{eq:b-N-bound}),
   so we find that $\norm{XH} \leq 2 \norm{H}$.

   The rest of the argument is the same as
   the proof of Lemma \ref{lem:beta-r-bound},
   noting that $\delta_r^* F = X^r G$ for
   $G \coloneqq \sum_{k=0}^{N-1} \delta_{k,r} F_k X^k \in \RR[X]/C(X)$,
   and using Lemma \ref{lem:gamma-delta-bound}.
\end{proof}

We now define $\gamma^*$ and $\delta^*$ by setting
\[ \gamma^* F \coloneqq \sum_{r=0}^\infty \gamma^*_r F,
   \qquad \delta^* F \coloneqq \sum_{r=0}^\infty \delta^*_r F. \]
The next five statements are proved along the same lines as the
corresponding results in Section \ref{sec:low-iso},
i.e., from Lemma \ref{lem:alpha-series} up to
Corollary \ref{cor:alpha-beta-morphisms}.
\begin{lem}
\label{lem:gamma-series}
   For any $F \in \RR[X]/C(X)$ and any integer $\lambda \geq 0$, we have
   \[ \bignorm{\gamma^* F - \sum_{r=0}^{\lambda-1} \gamma^*_r F}
         \leq \frac43 \cdot 2^{-\lambda (b-2)} \norm{F}, \qquad
      \bignorm{\sum_{r=0}^{\lambda-1} \gamma^*_r F}
         \leq \frac43 \norm{F}. \]
\end{lem}
\begin{lem}
\label{lem:delta-series}
   For any $F \in \RR[X]/(X^N - 1)$ and any integer $\lambda \geq 0$, we have
   \[ \bignorm{\delta^* F - \sum_{r=0}^{\lambda-1} \delta^*_r F}
         \leq \frac87 \cdot 2^{-\lambda (b-1)} \norm{F}, \qquad
      \bignorm{\sum_{r=0}^{\lambda-1} \delta^*_r F}
         \leq \frac87 \norm{F}. \]
\end{lem}

\begin{lem}
\label{lem:gamma-eval}
   Let $F \in \RR[X]/C(X)$, and let $z$ be a root of $X^N - 1$.
   Then
   \[ (\gamma^* F)(z) = F(\gamma(z)). \]
\end{lem}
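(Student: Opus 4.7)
The plan is to mimic the proof of Lemma~\ref{lem:alpha-eval} essentially verbatim, since the structural setup for $\gamma^*$ and $\gamma$ mirrors that of $\alpha^*$ and $\alpha$.

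First I would unwind the definition of $\gamma^*_r$. Since $z$ is an $N$-th root of unity, the reduction modulo $X^N - 1$ has no effect on evaluation at $z$, so
\[ (\gamma^*_r F)(z) = \sum_{k=0}^{N-1} \gamma_{r,k} F_k z^{k+r}. \]
Then, granting termwise summation over $r$, I would obtain
\[ (\gamma^* F)(z) = \sum_{r=0}^\infty (\gamma^*_r F)(z) = \sum_{k=0}^{N-1} F_k z^k \sum_{r=0}^\infty \gamma_{r,k} z^r = \sum_{k=0}^{N-1} F_k \gamma(z)^k = F(\gamma(z)), \]
where the last equality uses the definition of $\gamma(z)^k$ as a power series in $z$.

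The only genuinely nontrivial point is justifying the interchange of the finite sum over $k$ with the infinite sum over $r$, together with the validity of the series $\sum_r \gamma_{r,k} z^r$ at the point $z$. Since $|z| = 1$ and $b \geq 4$ (so $1 < 2^{b-2}$), the point $z$ lies comfortably inside the disc of convergence $D_{2^{b-2}}$ established in Corollary~\ref{cor:high-inverses}; and by Lemma~\ref{lem:gamma-delta-bound} the double series is absolutely majorised by $\sum_{k=0}^{N-1} |F_k| \sum_{r=0}^\infty 2^{-r(b-2)}$, which is finite. Absolute convergence legitimises the interchange, and no further subtleties arise.

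The main obstacle, if one can call it that, is simply confirming that $D_2$ (and hence the unit circle) lies inside the domain of convergence of $\gamma(z)$; this is already handled by the general bounds in Section~\ref{sec:high-roots}, so the proof reduces to a one-line appeal to those estimates. I would therefore write the argument as a direct transcription of the proof of Lemma~\ref{lem:alpha-eval}, citing Lemma~\ref{lem:gamma-delta-bound} and Corollary~\ref{cor:high-inverses} in place of their low-product analogues.
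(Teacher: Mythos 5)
Your proposal is correct and follows exactly the route the paper takes: the paper simply says the statement is ``proved exactly as in Section~\ref{sec:low-iso}'', i.e.\ by transcribing the proof of Lemma~\ref{lem:alpha-eval} with $\alpha$ replaced by $\gamma$, which is precisely what you do. Your extra remarks about absolute convergence (via Lemma~\ref{lem:gamma-delta-bound} and the bound $|z|=1<2^{b-2}$) make explicit an interchange of sums that the paper leaves implicit, but this is a harmless elaboration rather than a different argument.
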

\begin{lem}
\label{lem:delta-eval}
   Let $F \in \RR[X]/(X^N - 1)$, and let $z$ be a root of $C(X)$.
   Then
   \[ (\delta^* F)(z) = F(\delta(z)) \]
\end{lem}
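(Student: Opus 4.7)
The plan is to follow the template of the proof of Lemma \ref{lem:alpha-eval}, with $\delta^*$ in place of $\alpha^*$ and $C(X)$ in place of $A(X)$. Since $z$ is a root of $C(X)$, evaluating at $z$ kills the reduction modulo $C(X)$, so by definition of $\delta^*_r$ we have
\[ (\delta^*_r F)(z) = \sum_{k=0}^{N-1} \delta_{r,k} F_k z^{k+r}. \]
Summing over $r$, the strategy is to interchange the order of summation and recognize the inner sum as the series defining $\delta(z)^k$, giving
\[ (\delta^* F)(z) = \sum_{k=0}^{N-1} F_k z^k \sum_{r=0}^\infty \delta_{r,k} z^r = \sum_{k=0}^{N-1} F_k \delta(z)^k = F(\delta(z)). \]

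The only non-mechanical step is justifying the interchange of summations, which requires absolute convergence of the double sum. This follows from two ingredients already established: Lemma~\ref{lem:high-roots} gives $|z| < 2$, and Lemma~\ref{lem:gamma-delta-bound} gives $|\delta_{r,k}| \leq 2^{-rb}$, so the terms are dominated termwise by $|F_k| \cdot 2^{-rb} \cdot 2^{k+r} \leq |F_k| \cdot 2^k \cdot 2^{-r(b-1)}$, and the hypothesis $b \geq 4$ in \eqref{eq:b-N-bound} makes $\sum_r 2^{-r(b-1)}$ a convergent geometric series. The same bounds also show that the series defining $\delta(z)^k$ converges at $z$, since the roots of $C(X)$ lie well inside $D_{2^b}$, the domain of convergence of $\delta$ established prior to Lemma~\ref{lem:gamma-coeffs}.

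I expect no real obstacle here: the entire argument is parallel to Lemma~\ref{lem:alpha-eval}, with the roles of $\alpha, \beta, A(X)$ replaced by $\gamma, \delta, C(X)$. Indeed, the only reason the statement needs its own lemma rather than being folded into Lemma~\ref{lem:gamma-eval} is that here we are using the inverse bijection from Corollary~\ref{cor:high-bijection}, i.e., we evaluate a polynomial modulo $X^N - 1$ at a root of $C(X)$ rather than the other way around. In the write-up I would simply reference the proof of Lemma~\ref{lem:alpha-eval} or Lemma~\ref{lem:gamma-eval} and remark that the argument is identical, since the author has already used this style of abbreviation throughout Section~\ref{sec:high-iso}.
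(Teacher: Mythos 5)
Your proof is correct and follows exactly the approach the paper takes: the paper's own proof is the single line ``The next five statements are proved exactly as in Section~\ref{sec:low-iso}'', deferring to the template of Lemma~\ref{lem:alpha-eval}, which is precisely what you reproduce. The extra paragraph justifying the interchange of summations is a welcome bit of rigor beyond what the paper bothers to include (and is in fact overkill, since one of the two sums is finite), but it does no harm.
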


\begin{cor}
\label{cor:gamma-delta-morphisms}
   The maps $\gamma^*$ and $\delta^*$ are mutually inverse ring isomorphisms
   between $\RR[X]/C(X)$ and $\RR[X]/(X^N - 1)$.
\end{cor}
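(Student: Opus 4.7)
The plan is to transpose the proof of Corollary \ref{cor:alpha-beta-morphisms} essentially verbatim, substituting $C(X)$, $\gamma$, $\delta$ for $A(X)$, $\alpha$, $\beta$, since all the analogous ingredients have now been assembled (Lemmas \ref{lem:gamma-eval} and \ref{lem:delta-eval}, plus Corollary \ref{cor:high-bijection}).

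First I would verify that $\gamma^*$ is a ring homomorphism. For $F, G \in \RR[X]/C(X)$ and any root $z$ of $X^N - 1$, Lemma \ref{lem:gamma-eval} gives
\[
(\gamma^*(FG))(z) = (FG)(\gamma(z)) = F(\gamma(z))\, G(\gamma(z)) = (\gamma^* F)(z)\,(\gamma^* G)(z).
\]
Since $X^N - 1$ has $N$ distinct roots, a polynomial in $\RR[X]/(X^N - 1)$ is determined by its values at these roots, so $\gamma^*(FG) = (\gamma^* F)(\gamma^* G)$. The argument for $\delta^*$ is the mirror image, combining Lemma \ref{lem:delta-eval} with Lemma \ref{lem:high-roots}, which ensures that $C(X)$ also has $N$ distinct roots and hence that elements of $\RR[X]/C(X)$ are determined by their values at those roots.

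For mutual inverseness, take $F \in \RR[X]/C(X)$ and let $z$ be a root of $C(X)$. By Corollary \ref{cor:high-bijection}, $\delta(z)$ is a root of $X^N - 1$ and $\gamma(\delta(z)) = z$, so applying Lemma \ref{lem:delta-eval} followed by Lemma \ref{lem:gamma-eval} yields
\[
(\delta^* \gamma^* F)(z) = (\gamma^* F)(\delta(z)) = F(\gamma(\delta(z))) = F(z).
\]
Appealing again to the fact that the roots of $C(X)$ determine elements of $\RR[X]/C(X)$, we conclude $\delta^* \gamma^* = \mathrm{id}$; the composition $\gamma^* \delta^* = \mathrm{id}$ is handled symmetrically.

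I do not expect a genuine obstacle at this stage — the heavy lifting is already done in Lemma \ref{lem:high-roots} (simplicity of the roots of $C$, which is what makes the two rings split as products of fields under evaluation) and in the convergence analysis leading to the definitions of $\gamma^*$ and $\delta^*$. The one conceptual point worth noting is that the auxiliary root $\rho$ of $B(X)$ has been deliberately factored out, so that $\gamma$ — which need not even be defined at $\rho$ in a useful sense, since $\rho$ lies essentially at the boundary of the disc of convergence — never needs to be evaluated there. The real work of handling $\rho$ belongs to the main high-product algorithm, not to this corollary.
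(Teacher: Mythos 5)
Your proof is correct and matches the paper's approach exactly: the paper states that Corollary~\ref{cor:gamma-delta-morphisms} (along with the other results in Section~\ref{sec:high-iso}) is ``proved exactly as in Section~\ref{sec:low-iso},'' which is precisely what you have carried out, mirroring the proof of Corollary~\ref{cor:alpha-beta-morphisms} with $\gamma$, $\delta$, $C(X)$ in place of $\alpha$, $\beta$, $A(X)$.
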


Now we bring $\rho$ back into the picture.
We will define maps
\begin{align*}
   \gamma^\dagger & \colon \RR[X]/B(X) \longto \RR[X]/(X^N - 1) \oplus \RR, \\
   \delta^\dagger & \colon \RR[X]/(X^N - 1) \oplus \RR \longto \RR[X]/B(X),
\end{align*}
in terms of $\gamma^*$ and $\delta^*$, as follows.

First, for $F \in \RR[X]/B(X)$, we define
\[ \gamma^\dagger F \coloneqq
   \big(\gamma^*(F \bmod C(X)), \rho^{-N} F(\rho)\big)
   \in \RR[X]/(X^N - 1) \oplus \RR. \]
The map $\gamma^\dagger$ is a linear isomorphism,
thanks to the Chinese remainder theorem
applied to the relatively prime moduli $C(X)$ and $X - \rho$.
However, $\gamma^\dagger$ is not quite a ring isomorphism,
i.e., is not multiplicative,
due to the scaling factor $\rho^{-N}$.
Note that the second component of $\gamma^\dagger F$ may be written
more explicitly as follows:
if $F = F_0 + F_1 X + \cdots + F_N X^N$, then
\[ \rho^{-N} F(\rho) = F_N + \rho^{-1} F_{N-1} + \cdots + \rho^{-N} F_0. \]
   
In the other direction, for
\[ (F, \theta) \in \RR[X]/(X^N - 1) \oplus \RR, \]
we define $\delta^\dagger(F, \theta)$ to be the
unique polynomial $G \in \RR[X]/B(X)$ such that
\[ G \equiv (1 - 2^{-b} X) \delta^*(F) \pmod{C(X)},
   \qquad  G(\rho) = \rho^N \theta. \]
Again, $\delta^\dagger$ is a linear isomorphism, but not a ring isomorphism.

In fact, $\gamma^\dagger$ and $\delta^\dagger$ are not even
inverse to each other.
Instead, they have been cooked up to satisfy the following relation.
\begin{lem}
\label{lem:delta-multiply}
   For any $F, G \in \RR[X]/B(X)$ we have
   \[ \delta^\dagger(\gamma^\dagger F \cdot \gamma^\dagger G)
      = (1 - 2^{-b} X) FG. \]
\end{lem}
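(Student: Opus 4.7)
The plan is to unpack the definitions of $\gamma^\dagger$ and $\delta^\dagger$ and then use the Chinese remainder theorem isomorphism
\[ \RR[X]/B(X) \cong \RR[X]/C(X) \oplus \RR[X]/(X-\rho) \]
to check the identity on each factor separately. Since any element of $\RR[X]/B(X)$ is determined by its reduction modulo $C(X)$ and its value at $\rho$, it suffices to verify that the two sides of the claimed equation agree under these two operations.

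First I would compute the product $\gamma^\dagger F \cdot \gamma^\dagger G$ componentwise in $\RR[X]/(X^N-1) \oplus \RR$. Using the ring-homomorphism property of $\gamma^*$ (Corollary~\ref{cor:gamma-delta-morphisms}), the first component becomes $\gamma^*((FG) \bmod C(X))$, and the second component is simply $\rho^{-2N} F(\rho) G(\rho)$. Applying $\delta^\dagger$ to this pair and using that $\delta^*$ inverts $\gamma^*$, the result $H := \delta^\dagger(\gamma^\dagger F \cdot \gamma^\dagger G)$ is characterised by
\[ H \equiv (1 - 2^{-b}X)\, FG \pmod{C(X)}, \qquad H(\rho) = \rho^N \cdot \rho^{-2N} F(\rho) G(\rho) = \rho^{-N} F(\rho) G(\rho). \]

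It then remains to check that $(1 - 2^{-b}X)\,FG$, viewed in $\RR[X]/B(X)$, also satisfies these two properties. The congruence modulo $C(X)$ is immediate since $C(X) \mid B(X)$. For the evaluation at $\rho$, we need
\[ (1 - 2^{-b}\rho)\, F(\rho) G(\rho) = \rho^{-N} F(\rho) G(\rho), \]
so the only nontrivial step is the identity $1 - 2^{-b}\rho = \rho^{-N}$. This falls right out of $B(\rho) = 0$: rewriting $\rho^{N+1} - 2^b \rho^N + 2^b = 0$ as $\rho^N(2^b - \rho) = 2^b$ and dividing by $2^b \rho^N$ gives exactly $1 - 2^{-b}\rho = \rho^{-N}$.

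The main obstacle — if one can call it that — is recognising that the mysterious twist factors $\rho^{\pm N}$ appearing in the definitions of $\gamma^\dagger$ and $\delta^\dagger$ are precisely what is needed for the $(1 - 2^{-b}X)$ factor to appear naturally when passing back through the CRT. Once this relation $1 - 2^{-b}\rho = \rho^{-N}$ is in hand, the lemma follows by invoking uniqueness in the CRT decomposition: both $H$ and $(1 - 2^{-b}X)\, FG$ have the same image in $\RR[X]/C(X)$ and the same value at $\rho$, hence they coincide in $\RR[X]/B(X)$.
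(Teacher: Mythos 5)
Your proof is correct and takes essentially the same route as the paper: check the claimed identity modulo $C(X)$ using the ring-homomorphism property of $\gamma^*$ and $\delta^*$, and check it modulo $X - \rho$ by reducing to the relation $1 - 2^{-b}\rho = \rho^{-N}$ coming from $B(\rho) = 0$. The only cosmetic difference is that you state this last relation explicitly, whereas the paper uses it silently (having effectively established it just before defining $C(X)$).
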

(The reason for including the factor $1 - 2^{-b} X$ is to prepare for the use
of Proposition \ref{prop:high-cancel} in the main high product algorithm.)
\begin{proof}
   It is enough to check the equality modulo $C(X)$ and modulo $X - \rho$.
   It holds modulo $C(X)$ according to
   Corollary \ref{cor:gamma-delta-morphisms}:
   \begin{align*}
      \delta^\dagger(\gamma^\dagger F \cdot \gamma^\dagger G)
         & \equiv (1 - 2^{-b} X)
            \delta^*\big(\gamma^*(F \bmod C(X)) \gamma^*(G \bmod C(X))\big) \\
         & = (1 - 2^{-b} X) \delta^*\big(\gamma^*(FG \bmod C(X))\big) \\
         & = (1 - 2^{-b} X) FG \pmod{C(X)}.
   \end{align*}
   It holds modulo $X - \rho$ thanks to \eqref{eq:rho-identity}:
   \begin{align*}
      \delta^\dagger(\gamma^\dagger F \cdot \gamma^\dagger G)(\rho)
         & = \rho^N (\rho^{-N} F(\rho))(\rho^{-N} G(\rho)) \\
         & = \rho^{-N} F(\rho) G(\rho) = (1 - 2^{-b} \rho) (FG)(\rho) \qedhere.
   \end{align*}
\end{proof}

Define a norm on $\RR[X]/(X^N - 1) \oplus \RR$ by taking
\begin{align*}
   \norm{(F, \theta)} & \coloneqq \max(\norm{F}, |\theta|) \\
   & = \max(|F_0|, \ldots, |F_{N-1}|, |\theta|).
\end{align*}
Then $\gamma^\dagger$ and $\delta^\dagger$ satisfy the following norm bounds.
\begin{lem}
\label{lem:gamma-dagger-norm}
   For any $F \in \RR[X]/B(X)$ we have
   \[ \norm{\gamma^\dagger F} \leq 3 \norm{F}. \]
\end{lem}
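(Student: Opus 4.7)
The norm $\norm{(F, \theta)} = \max(\norm{F}, |\theta|)$ reduces the claim to bounding each of the two components of $\gamma^\dagger F = (\gamma^*(F \bmod C(X)),\ \rho^{-N} F(\rho))$ separately by $3 \norm{F}$. Write $F = \sum_{i=0}^N F_i X^i$, so $\norm{F} = \max_i |F_i|$.

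For the scalar component I would apply the triangle inequality directly:
\[ |\rho^{-N} F(\rho)| \leq \norm{F} \cdot \rho^{-N} \sum_{i=0}^N \rho^i = \norm{F} \cdot \frac{\rho - \rho^{-N}}{\rho - 1} < \norm{F} \cdot \frac{\rho}{\rho - 1}. \]
Lemma~\ref{lem:real-root} combined with the running hypothesis \eqref{eq:b-N-bound} forces $\rho > 2^4(1 - 2^{-11}) > 2$, so $\rho/(\rho - 1) < 2$ and this component is bounded by $2 \norm{F}$.

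For the polynomial component the plan is to first bound $\norm{F \bmod C(X)}$ in terms of $\norm{F}$, then apply the estimate $\norm{\gamma^* G} \leq \frac43 \norm{G}$ from Lemma~\ref{lem:gamma-series}. The reduction is controlled entirely by the coefficients of $X^N \bmod C(X)$: from the explicit formula $C(X) = X^N - \frac{2^b}{\rho}(X^{N-1}/\rho^{N-1} + \cdots + X/\rho + 1)$ one gets $X^N \equiv \sum_{j=0}^{N-1} (2^b/\rho^{N-j}) X^j \pmod{C(X)}$, and the largest of these coefficients is $2^b/\rho$. Lemma~\ref{lem:real-root} again gives $2^b/\rho < 1/(1 - 2^{-Nb+1}) < 1 + 2^{-10}$, since $Nb \geq 12$ under \eqref{eq:b-N-bound}. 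Consequently
\[ \norm{F \bmod C(X)} \leq \norm{F} + (2^b/\rho)\norm{F} < (2 + 2^{-10}) \norm{F}, \]
and Lemma~\ref{lem:gamma-series} then yields $\norm{\gamma^*(F \bmod C(X))} < \frac{4}{3}(2 + 2^{-10}) \norm{F} < 3 \norm{F}$, which is what is needed.

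The only real work is checking that the numerical constants line up, and this rests entirely on the observation, furnished by Lemma~\ref{lem:real-root}, that the auxiliary root $\rho$ sits in a very narrow window just below $2^b$; there is no substantive obstacle. As the remark at the end of Section~\ref{sec:low-iso} indicates, tighter estimates are available, but the constant $3$ suffices for use downstream.
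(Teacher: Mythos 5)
Your proof is correct and takes essentially the same approach as the paper: split $\norm{\gamma^\dagger F} = \max(\norm{\gamma^*(F \bmod C(X))}, |\rho^{-N}F(\rho)|)$, bound the reduction modulo $C(X)$ via the explicit coefficients of $X^N \bmod C(X)$ and then apply Lemma~\ref{lem:gamma-series}, and bound the scalar by a geometric series in $\rho^{-1}$. One cosmetic slip: the coefficient of $X^j$ in $X^N \bmod C(X)$ is $2^b/\rho^{j+1}$, not $2^b/\rho^{N-j}$ (you have the index reversed), but since the maximum over $j$ is still $2^b/\rho$ this does not affect the estimate.
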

\begin{proof}
   Let $F = F_0 + \cdots + F_N X^N \in \RR[X]/B(X)$.
   Using \eqref{eq:C-explicit},
   we find that the reduction of $F$ modulo $C(X)$ is given by
   \begin{equation}
   \label{eq:reduction-modC}
      \left(F_0 + \frac{2^b}{\rho} F_N\right) +
         \left(F_1 + \frac{2^b}{\rho^2} F_N\right) X + \cdots
         + \left(F_{N-1} + \frac{2^b}{\rho^N} F_N\right) X^{N-1},
   \end{equation}
   so by \eqref{eq:rho-interval-numeric} we obtain
   \[ \norm{F \bmod C(X)} \leq \left(1 + \frac{2^b}{\rho}\right) \norm{F}
      \leq 2.002 \norm{F}. \]
   Lemma \ref{lem:gamma-series} (with $\lambda = 0$) then yields
   \[ \norm{\gamma^*(F \bmod C(X))}
      \leq \frac43 \cdot 2.002 \norm{F} \leq 3 \norm{F}. \]
   We also have
   \[ |\rho^{-N} F(\rho)| = |F_N + \rho^{-1} F_{N-1} + \cdots + \rho^{-N} F_0 |
      \leq \frac{1}{1 - \rho^{-1}} \norm{F} \leq 2 \norm{F}. \]
   Together these inequalities show that
   $\norm{\gamma^\dagger F} \leq 3 \norm{F}$.
\end{proof}
\begin{lem}
\label{lem:delta-dagger-norm}
   For any $(F, \theta) \in \RR[X]/(X^N - 1) \oplus \RR$ we have
   \[ \norm{\delta^\dagger (F, \theta)} \leq 3 \norm{(F, \theta)}. \]
\end{lem}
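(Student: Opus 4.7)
The plan is to decompose $G := \delta^\dagger(F,\theta)$ using the Chinese remainder theorem isomorphism $\RR[X]/B(X) \cong \RR[X]/C(X) \oplus \RR[X]/(X-\rho)$ and bound each piece. Let $H \in \RR[X]/C(X)$ denote the canonical representative (of degree $<N$) of $(1 - 2^{-b}X)\delta^*(F) \pmod{C(X)}$. Since $C$ is monic of degree $N$ and $\deg G \leq N$, there is a unique scalar $q \in \RR$ with $G = H + qC$ as polynomials, and the constraint $G(\rho) = \rho^N \theta$ forces
\[ q = \frac{\rho^N \theta - H(\rho)}{C(\rho)}. \]
It therefore suffices to bound $\norm{H}$, $|q|$, and $\norm{C}$ separately in terms of $\max(\norm{F}, |\theta|)$.

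For $\norm{H}$, I would expand $(1 - 2^{-b}X)\delta^*(F)$ coefficientwise and reduce the single overflowing $X^N$ term modulo $C(X)$, using the explicit expansion $X^N \equiv (2^b/\rho)\sum_{j=0}^{N-1}\rho^{-j}X^j$ derived in Section~\ref{sec:high-roots}. Combined with $\norm{\delta^* F} \leq \tfrac{4}{3}\norm{F}$ (Lemma~\ref{lem:delta-r-bound}) and the estimate $2^b/\rho < 2$ already used there, this yields $\norm{H} \leq \tfrac{3}{2}\norm{F}$ once $b \geq 4$. For $\norm{C}$, the coefficients are $1$ and $-2^b/\rho^{i+1}$ for $0 \leq i \leq N-1$, all bounded by $1 + O(2^{-Nb})$ thanks to Lemma~\ref{lem:real-root}, so $\norm{C}$ is essentially $1$.

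The crucial piece is $|q|$. I would use $C(\rho) = B'(\rho) = \rho^{N-1}((N+1)\rho - N\cdot 2^b)$; writing $\rho = 2^b(1-\epsilon)$ with $0 < \epsilon < 2^{-Nb+1}$ gives $(N+1)\rho - N \cdot 2^b = 2^b(1 - (N+1)\epsilon)$, and combined with the estimate $\rho^N \in (0.998 \cdot 2^{Nb}, 2^{Nb})$ from Lemma~\ref{lem:real-root}, this yields the lower bound $|C(\rho)| > 0.99 \cdot 2^{Nb}$. Together with $|\rho^N \theta| \leq 2^{Nb}|\theta|$ and $|H(\rho)| \leq \norm{H} \cdot \rho^N/(\rho - 1)$, this gives $|q| \leq 1.02|\theta| + O(2^{-b})\norm{H}$. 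Plugging into $\norm{G} \leq \norm{H} + |q|\norm{C}$ and using $b \geq 4$, $N \geq 3$ yields the claimed bound $\norm{G} \leq 3\max(\norm{F}, |\theta|)$.

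The only real obstacle is keeping the numerical constants under control so they all fit inside the clean bound $3$; this relies squarely on the sharpness of Lemma~\ref{lem:real-root}, which ensures $|C(\rho)|$ is of the natural size $\approx 2^{Nb}$ rather than being anomalously small, so that the $H(\rho)/C(\rho)$ contribution to $q$ is genuinely of order $2^{-b}$ and does not spoil the constant.
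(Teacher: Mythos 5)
Your argument follows the paper's proof closely. Both proceed by writing $G := \delta^\dagger(F,\theta)$ as a degree-$<N$ part plus a scalar multiple of $C(X)$, solving for that scalar from the constraint $G(\rho) = \rho^N\theta$, and bounding each piece using $\norm{\delta^* F} \leq \tfrac43\norm{F}$, the localization $2^b(1-2^{-Nb+1}) < \rho < 2^b$ from Lemma~\ref{lem:real-root}, and $C(\rho) \approx 2^{Nb}$. The only cosmetic difference is the choice of the degree-$<N$ representative: you reduce $(1-2^{-b}X)\delta^*F$ modulo $C(X)$ directly, whereas the paper takes $H$ to be the canonical lift of $\delta^*F$ alone and then writes $G = (1-2^{-b}X)H + \psi C(X)$; these set-ups differ only by absorbing the overflow coefficient $-2^{-b}H_{N-1}$ into the scalar multiplying $C(X)$, and the numerics land comfortably under $3$ either way. (One minor slip: in your expansion of $X^N \bmod C(X)$ the exponent of $\rho$ should run in the opposite order, i.e.\ $X^N \equiv (2^b/\rho)\sum_{j=0}^{N-1}\rho^{-(N-1-j)}X^j$, but since you only use the uniform bound $2^b/\rho^i \leq 2^b/\rho < 2$ this does not affect the estimate.)
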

\begin{proof}
   We may write down an explicit formula for
   $\delta^\dagger(F, \theta)$ as follows.
   Let
   \[ H \coloneqq \delta^* F \in \RR[X]/C(X), \]
   let $\wbar H \in \RR[X]$ be the unique polynomial of degree less than $N$
   such that $\wbar H \equiv H \pmod{C(X)}$,
   and let
   \begin{equation}
   \label{eq:psi-formula}
      \psi \coloneqq
         \frac{\rho^N \theta - \rho^{-N} \wbar H(\rho)}{C(\rho)} \in\RR.
   \end{equation}
   Then we claim that
   \begin{equation}
   \label{eq:crt-sum}
      \delta^\dagger(F, \theta) \equiv
         (1 - 2^{-b} X) \wbar H(X) + \psi \, C(X) \pmod{B(X)}.
   \end{equation}
   
   To prove \eqref{eq:crt-sum},
   it suffices to verify that it holds modulo $C(X)$ and modulo $X - \rho$.
   For $C(X)$ this is clear as
   $\delta^\dagger(F,\theta) \equiv (1 - 2^{-b} X) \delta^* F \pmod{C(X)}$
   by the definition of $\delta^\dagger$.
   It holds modulo $X - \rho$ because by \eqref{eq:rho-identity}
   and the definition of $\delta^\dagger$ we have
   \begin{multline*}
      (1 - 2^{-b} \rho) \wbar H(\rho) + \psi \, C(\rho)
         = \rho^{-N} \wbar H(\rho) +
            \frac{\rho^N \theta - \rho^{-N} \wbar H(\rho)}{C(\rho)}
            \cdot C(\rho) \\
         = \rho^N \theta = \delta^\dagger(F,\theta)(\rho).
   \end{multline*}
   
   Our goal is now to estimate the size of the coefficients of the polynomial
   \[ J(X) \coloneqq (1 - 2^{-b} X) \wbar H(X) + \psi \, C(X) \in \RR[X]. \]
   Note that $J(X)$ has degree at most~$N$,
   so its coefficients are exactly the same
   as those of $\delta^\dagger(F,\theta)$.
   
   Let us first estimate $|\psi|$.
   Write $H(X) = H_0 + \cdots + H_{N-1} X^{N-1}$,
   so that also $\wbar H(X) = H_0 + \cdots + H_{N-1} X^{N-1}$.
   We have
   \begin{multline*}
      |\rho^{-N} \wbar H(\rho)|
      = |H_{N-1} \rho^{-1} + \cdots + H_0 \rho^{-N}| \\
      \leq \frac{\rho^{-1}}{1 - \rho^{-1}} \max(|H_0|, \ldots, |H_{N-1}|)
      = \frac{1}{\rho - 1} \norm{H}.
   \end{multline*}
   By Lemma \ref{lem:delta-series} (with $\lambda = 0$) we have
   $\norm{H} = \norm{\delta^* F} \leq \frac87\norm{F}$.
   Using \eqref{eq:rho-interval-numeric} and \eqref{eq:b-N-bound} we obtain
   \[ |\rho^{-N} \wbar H(\rho)|
      \leq \frac87 \cdot \frac{1}{0.999 \cdot 2^b - 1} \norm{F}
      \leq \norm{F}. \]
   From \eqref{eq:C-explicit} and \eqref{eq:rho-interval-numeric} we have
   \[ C(\rho) = \rho^N - \frac{2^b}{\rho} N \geq \rho^N - 1.002 N. \]
   Therefore
   \begin{multline*}
   |\psi| \leq \frac{\rho^N |\theta| + |\rho^{-N} \wbar H(\rho)|}{C(\rho)}
      \leq \frac{\rho^N |\theta| + \norm{F}}{\rho^N - 1.002 N}
      = \frac{|\theta| + \rho^{-N} \norm{F}}{1 - 1.002 N / \rho^N} \\
      \leq \frac{1 + \rho^{-N}}{1 - 1.002 N / \rho^N}
         \max(\norm{F}, |\theta|).
   \end{multline*}
   Since $N \geq 3$ and $\rho > 15$
   (again from \eqref{eq:rho-interval-numeric} and \eqref{eq:b-N-bound}),
   we find that
   \[ \frac{1 + \rho^{-N}}{1 - 1.002 N/\rho^N} < 1.002, \]
   so
   \[ |\psi| \leq 1.002 \norm{(F,\theta)}. \]

   Now we may estimate the size of the coefficients of $J(X)$.
   The coefficients of $(1 - 2^{-b} X) \wbar H(X)$ are bounded
   in absolute value by $(1 + 2^{-b}) \norm{H}$,
   and from \eqref{eq:C-explicit} we see that the coefficients of
   $\psi \, C(X)$ are bounded in absolute value by $(2^b/\rho) |\psi|$.
   Therefore, using again
   \eqref{eq:b-N-bound} and \eqref{eq:rho-interval-numeric}
   we find that the coefficients of $J(X)$ are bounded in absolute value by
   \[ (1 + 2^{-b}) \norm{H} + \frac{2^b}{\rho} |\psi|
      \leq \frac{17}{16} \cdot \frac{8}{7} \norm{F} + 1.002^2 \norm{(F,\theta)}
      \leq 3 \norm{(F,\theta)}. \qedhere \]
\end{proof}

Next, we exhibit efficient algorithms for approximating
$\gamma^\dagger$ and $\delta^\dagger$.

\begin{prop}[Approximating $\gamma^\dagger$]
\label{prop:approx-gamma}
   Given as input $F \in 2^e \RR_p[X]/B(X)$, we may compute
   \[ G \in 2^{e+2} \RR_p[X]/(X^N - 1), \qquad \theta \in 2^{e+2} \RR_p, \]
   such that
   \[ \norm{(G,\theta) - \gamma^\dagger F} < 2^{e+2-p}, \]
   in $O(N \Mint(p))$ bit operations, assuming that $p = O(b)$.
\end{prop}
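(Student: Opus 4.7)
The plan is to split $\gamma^\dagger F = (\gamma^*(F \bmod C(X)),\ \rho^{-N} F(\rho))$ along its two CRT components and approximate each separately, mirroring the argument of Proposition \ref{prop:approx-alpha}.

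As a preliminary step I would compute $\rho$, $\rho^{-1}$, and the constants $2^b/\rho^{k+1}$ for $0 \leq k < N$ to $p + O(1)$ bits of precision. The root $\rho$ is obtained by Newton iteration on $B(X)$ starting from $2^b$; since $B$ has only three nonzero coefficients and the iteration converges quadratically, the standard precision-doubling trick gives total cost $O(\Mint(p))$. From $\rho^{-1}$ (computed once by Newton in $O(\Mint(p))$) the powers $2^b/\rho^{k+1}$ can be built up sequentially using $O(N)$ multiplications in $\RR_p$, for a total of $O(N \Mint(p))$ bit operations.

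For the first component, I would first reduce $F$ modulo $C(X)$ using the explicit formula from the proof of Lemma \ref{lem:gamma-dagger-norm}, which simply adds $(2^b/\rho^{k+1}) F_N$ into the $k$-th coefficient of $F$; by that lemma the result has norm at most $2.003 \cdot 2^e$. Then I would apply the strategy of Proposition \ref{prop:approx-alpha} with $\gamma^*_r$ in place of $\alpha^*_r$: truncate $\gamma^* = \sum_{r \geq 0} \gamma^*_r$ at $\lambda = \lceil (p+2)/(b-2) \rceil = O(1)$ terms, which by Lemma \ref{lem:gamma-series} introduces truncation error at most $\tfrac43 \cdot 2^{-\lambda(b-2)} \norm{F}$, and evaluate the partial sum directly with working precision $p + O(1)$. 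For the second component, I would use Horner's rule to evaluate $\theta = F_N + \rho^{-1}(F_{N-1} + \rho^{-1}(F_{N-2} + \cdots))$ at precision $p + O(1)$. Since $\rho^{-1} < 1$ the running sums stay bounded by $(1 - \rho^{-1})^{-1} \norm{F} \leq 3 \cdot 2^e$, so every intermediate lies safely in $2^{e+2} \RR_p$, and the evaluation costs $O(N \Mint(p))$.

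The only real difficulty is the routine accounting of numerical errors across the reduction, the truncated $\gamma^*$-sum, and the Horner recurrence. With working precision $p + O(1)$ bits, each arithmetic operation contributes relative error $O(2^{-p})$; since $\lambda = O(1)$ and the norms in play are all $O(2^e)$, the triangle inequality together with Lemmas \ref{lem:gamma-dagger-norm} and \ref{lem:gamma-series} yields the required bound $\norm{(G, \theta) - \gamma^\dagger F} < 2^{e+2-p}$. The total cost is dominated by the $O(N)$ arithmetic operations in $\RR_p$ at precision $p + O(1)$, giving $O(N \Mint(p))$ bit operations overall.
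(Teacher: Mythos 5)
Your proposal follows essentially the same route as the paper: compute $\rho$ to precision $p+O(1)$ by Newton's method, reduce $F$ modulo $C(X)$ via the explicit formula from the proof of Lemma~\ref{lem:gamma-dagger-norm}, approximate $\gamma^*$ by truncating $\sum_r \gamma^*_r$ at $\lambda = O(1)$ terms exactly as in Proposition~\ref{prop:approx-alpha}, and evaluate $\theta = \rho^{-N}F(\rho)$ directly. The conclusion and cost match.

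One small point worth flagging: you compute all $N$ powers $2^b/\rho^{k+1}$ and run a full length-$(N+1)$ Horner recurrence for $\theta$, whereas the paper exploits the hypothesis $p = O(b)$ more aggressively. Since $\rho^{-1} \approx 2^{-b}$, the quantities $2^b/\rho^{k+1}$ and $\rho^{-k}$ drop below the precision threshold $2^{-p}$ after only $O(1)$ values of $k$, so both the reduction modulo $C(X)$ and the evaluation of $\theta$ cost $O(1)$ real operations, not $O(N)$. Your version still lands within the $O(N\,\Mint(p))$ budget because the truncated $\gamma^*$-sum already dominates at $O(N)$ operations, but the error analysis of your length-$(N+1)$ Horner loop is a little more delicate than your sketch suggests: a bare triangle inequality would yield accumulated error of order $N \cdot 2^{e-p}$, and you must invoke the contraction $\rho^{-1} < 1$ (so that per-step errors are geometrically damped) to see that the total stays $O(2^{e-p})$. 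Truncating to $O(1)$ terms, as the paper does, avoids the issue altogether.
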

\begin{proof}
   We first remark that since $p = O(b)$,
   we may precompute $\rho$ to a precision of~$p + O(1)$ significant bits
   using only $O(\log N)$ operations in $\RR$,
   by using Newton's method to numerically solve the equation $B(z) = 0$,
   starting with the initial approximation $z = 2^b$.
   (See also Remark \ref{rem:newton} below.)
   
   Now, given as input $F \in 2^e \RR_p[X]/B(X)$ as above,
   we first compute an approximation to $F \bmod C(X)$ using the formula
   \eqref{eq:reduction-modC}.
   The hypothesis $p = O(b)$,
   together with the rapid decay of the coefficients of $C(X)$,
   implies that this may be done using $O(1)$ operations in $\RR$.
   We may then compute the desired approximation~$G$ to
   $\gamma^*(F \bmod C(X))$ using the same method as in
   the proof of Proposition \ref{prop:approx-alpha},
   at a cost of $O(N)$ operations in $\RR$.
   Finally, we may easily compute the desired approximation $\theta$ to
   $\rho^{-N} F(\rho) = F_N + \rho^{-1} F_{N-1} + \cdots$
   using another $O(1)$ operations in $\RR$.
\end{proof}

\begin{prop}[Approximating $\delta^\dagger$]
\label{prop:approx-delta}
   Given as input $F \in 2^e \RR_p[X]/(X^N - 1)$ and $\theta \in 2^e \RR_p$,
   we may compute
   \[ G \in 2^{e+2} \RR_p[X]/B(X) \]
   such that
   \[ \norm{G - \delta^\dagger(F, \theta)} < 2^{e+2-p} \]
   in $O(N \Mint(p))$ bit operations, assuming that $p = O(b)$.
\end{prop}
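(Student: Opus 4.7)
The plan is to mirror the strategy of Proposition \ref{prop:approx-gamma}, built around the explicit reconstruction formula derived in the proof of Lemma \ref{lem:delta-dagger-norm}:
\[ \delta^\dagger(F, \theta) = (1 - 2^{-b} X) H(X) + \psi \cdot C(X), \qquad \psi = \frac{\rho^N \theta - \rho^{-N} H(\rho)}{C(\rho)}, \]
where $H$ is the unique lift of $\delta^* F$ from $\RR[X]/C(X)$ to a polynomial in $\RR[X]/B(X)$ whose coefficient of $X^N$ vanishes.

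First I would precompute $\rho$ to $p + O(1)$ significant bits via Newton's method, together with the auxiliary scalars $\rho^N$, $\rho^{-N}$, and $C(\rho) = B'(\rho) = \rho^{N-1}((N+1)\rho - 2^b N)$, storing mantissae and exponents separately so that the very large magnitudes do not inflate the working precision. This costs $O(1)$ operations in $\RR$. Next, exactly as in the proof of Proposition \ref{prop:approx-beta}, I would compute the truncation $\sum_{r=0}^{\lambda-1} \delta^*_r F$ with $\lambda = \lceil (p+2)/(b-2) \rceil = O(1)$ directly from the definition, obtaining an approximation $H$ to $\delta^* F \bmod C(X)$ at a cost of $O(N)$ operations in $\RR$; the analogue of Lemma \ref{lem:gamma-series} for $\delta^*$ ensures that the truncation error is negligible.

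Then I would evaluate $\rho^{-N} H(\rho) = \sum_{j=1}^{N} H_{N-j} \rho^{-j}$: since $1/\rho \leq 2^{-b+1}$ and $p = O(b)$, only $O(1)$ leading terms exceed $2^{e-p}$ in magnitude and the rest may be discarded, giving an $O(1)$-operation evaluation. From this, $\psi$ follows in $O(1)$ further operations. Finally I would assemble $G$ by coefficient-wise addition of $(1 - 2^{-b} X) H(X)$ and $\psi \cdot C(X)$. Although $C(X)$ has degree $N$, its subleading coefficients $-2^b/\rho^{k+1}$ decay geometrically, so only $O(1)$ of them are significant to precision $p$; nonetheless one must still produce all $N+1$ output coefficients, for a total of $O(N)$ operations in $\RR$. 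Each operation costs $O(\Mint(p))$ bit operations, yielding the claimed $O(N \Mint(p))$ bound.

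The main obstacle is the error analysis, which I would handle in the spirit of Proposition \ref{prop:approx-alpha}: propagate the individual round-off errors through the computations of $H$, $H(\rho)$, $\psi$, and the final sum, and verify that $p + O(1)$ working precision suffices to guarantee $\norm{G - \delta^\dagger(F, \theta)} < 2^{e+2-p}$. The bound $\norm{\delta^\dagger(F, \theta)} \leq 3 \norm{(F, \theta)}$ from Lemma \ref{lem:delta-dagger-norm} confirms that the output exponent $e + 2$ is appropriate. The delicate point is computing $\psi$: although $\rho^N \theta$ has magnitude $\sim 2^{Nb+e}$ while $\rho^{-N} H(\rho)$ is only $O(2^e)$ and $C(\rho) \sim 2^{Nb}$, the size disparity rules out catastrophic cancellation, and the quotient has magnitude $O(2^e)$ as needed; this forces the precomputed $\rho^N$ and $C(\rho)$ to be accurate to $p + O(1)$ \emph{relative} bits, which Newton's method delivers. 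As elsewhere in the paper, the routine details of this error bookkeeping may be omitted.
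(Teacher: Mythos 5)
Your proposal follows essentially the same route as the paper: approximate $H = \delta^* F$ using the truncated-series method of Proposition~\ref{prop:approx-beta}, then evaluate the explicit reconstruction formula $(1-2^{-b}X)H(X) + \psi\cdot C(X)$ with $\psi = (\rho^N\theta - \rho^{-N}H(\rho))/C(\rho)$ in $O(N)$ further operations. Your discussion of precomputing $\rho$, the negligibility of most terms in $\rho^{-N}H(\rho)$ and $\psi\cdot C(X)$, and the scaling/cancellation concerns in computing $\psi$ fills in details the paper leaves implicit, but the underlying argument is the same.
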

\begin{proof}
   The algorithm amounts to evaluating the explicit formula \eqref{eq:crt-sum}.
   We first approximate $H = \delta^* F$ using the same method
   as in the proof of Proposition \ref{prop:approx-beta},
   at a cost of $O(N)$ operations in $\RR$.
   (This requires $O(1)$ more operations than
   the corresponding algorithm for $\beta^*$,
   because the reductions modulo $C(X)$ involve a few more terms
   than those modulo $A(X)$.)
   We then approximate $\psi$ at a cost of $O(1)$ operations,
   and evaluate \eqref{eq:crt-sum} in another $O(N)$ operations.
\end{proof}

\begin{rem}
\label{rem:newton}
   In practice we always have $Nb \gg p$,
   and this assumption allows several simplifications to be made to the
   algorithms in Proposition \ref{prop:approx-gamma} and
   Proposition \ref{prop:approx-delta}.
   First, the trivial approximation $\rho \approx 2^b$ is already correct
   to $p + O(1)$ significant bits, so Newton's method is not required.
   In addition, we have $C(\rho) \approx \rho^N \approx 2^{Nb}$,
   so instead of the complicated formula \eqref{eq:psi-formula} for $\psi$,
   we may simply use the approximation $\psi \approx \theta$.
\end{rem}

Finally we may state the main high product algorithm, and prove the main theorem concerning its correctness and complexity.

\begin{algorithm}
   \SetAlgoLined
   \DontPrintSemicolon
   \KwIn{Parameters $n \geq 1$, $b \geq 4$, $N \geq 3$
      with $(N + 1)b \geq n + \lg N + 2$,
      \newline integers $0 \leq u, v < 2^n$.}
   \KwOut{$0 \leq w \leq 2^n$ such that $|uv - 2^n w| < 2^n$
      \newline (the high product of $u$ and $v$).}
   $p \assign 3b + \lg N + 9$.\;
   \stepcomment{Split inputs}
   Compute $u_0, \ldots, u_N$ and $v_0, \ldots, v_N$
   with $0 \leq u_i, v_i < 2^b$ such that
   \[ u = \sum_{i=0}^N u_i 2^{ib - ((N+1)b-n)},
      \qquad v = \sum_{i=0}^N v_i 2^{ib - ((N+1)b-n)}, \]
   and let
   \[ \wbar U(X) \assign \sum_{i=0}^N u_i X^i,
         \qquad \wbar V(X) \assign \sum_{i=0}^N v_i X^i,
         \qquad \wbar U, \wbar V \in 2^b\RR_p[X]/B(X). \]
   \; \vspace{-10pt}
   \stepcomment{Convert to cyclic convolution}
   Use Proposition \ref{prop:approx-gamma} (approximating $\gamma^\dagger$)
   to compute
   \[ \wtilde U, \wtilde V \in 2^{b+2}\RR_p[X]/(X^N - 1),
      \qquad \theta_U, \theta_V \in 2^{b+2} \RR_p, \]
   such that
   \[ \norm{(\wtilde U, \theta_U) - \gamma^\dagger \wbar U} < 2^{b+2-p}, \qquad
      \norm{(\wtilde V, \theta_V) - \gamma^\dagger \wbar V} < 2^{b+2-p}. \]
   \; \vspace{-10pt}
   \stepcomment{Perform convolution}
   Use \textsc{Convolution} (see \S\ref{sec:arithmetic}) to compute
   \[ \wtilde W \in 2^{2b+4+\lg N}\RR_p[X]/(X^N - 1),
         \qquad \theta_W \in 2^{2b+4+\lg N} \RR_p \]
   such that
   \[\norm{\wtilde W - \wtilde U \wtilde V} < 2^{2b+4+\lg N-p},
      \qquad |\theta_W - \theta_U \theta_V| < 2^{2b+4+\lg N-p}. \]
   \; \vspace{-10pt}
   \stepcomment{Convert back}
   Use Proposition \ref{prop:approx-delta} (approximating $\delta^\dagger$)
   to compute
   \[ \wbar W \in 2^{2b+6+\lg N}\RR_p[X]/B(X) \]
   such that
   \[ \norm{\wbar W - \delta^\dagger(\wtilde W, \theta_W)}
      < 2^{2b+6+\lg N-p}. \]
   \; \vspace{-10pt}
   \stepcomment{Overlap-add}
   $t \coloneqq \sum_{i=0}^N
      2^{-b} \round(2^b \wbar W_i) \cdot 2^{ib - ((N+2)b-n)}$.\;
   \KwRet{$\round(t)$}.\;
\caption{High product}
\label{algo:high}
\end{algorithm}

\begin{proof}[Proof of Theorem \ref{thm:high}]
   Line 2 decomposes $u$ and $v$ into $N+1$ chunks of $b$ bits.
   The splitting boundaries are different to those used
   for the full product and low product:
   here $u_N$ consists of the $b$ most significant bits of $u$,
   then $u_{N-1}$ the next lower $b$ bits, and so on.
   The hypothesis $N + 1 \geq n/b$ ensures that this splitting is possible.

   As in the proof of Theorem \ref{thm:full}, let
   \[ U(X) \coloneqq \sum_{i=0}^N u_i X^i \in \ZZ[X],
      \qquad V(X) \coloneqq \sum_{i=0}^N v_i X^i \in \ZZ[X], \]
   so that
   \[ u = \frac{U(2^b)}{2^{(N+1)b-n}},
      \qquad v = \frac{V(2^b)}{2^{(N+1)b-n}}. \]
   Let
   \[ W(X) \coloneqq U(X) V(X) = \sum_{i=0}^{2N} w_i X^i \in \ZZ[X]. \]
   The polynomials $\wbar U$ and $\wbar V$ in line 2 are just
   the images of $U$ and $V$ in $2^b \RR_p[X]/B(X)$.
   Our goal is to compute $H(X)$,
   the remainder on dividing $(1 - 2^{-b} X) W(X)$ by $B(X)$,
   as in Proposition~\ref{prop:high-cancel}.
   By definition this is equal to
   $(1 - 2^{-b} X) \wbar U \wbar V \pmod{B(X)}$.

   Line 3 computes approximations
   $(\wtilde U, \theta_U)$ and $(\wtilde V, \theta_V)$ to
   $\gamma^\dagger \wbar U$ and $\gamma^\dagger \wbar V$.
   Line 4 computes $\wtilde W$, an approximation to $\wtilde U \wtilde V$,
   and $\theta_W$, an approximation to $\theta_U \theta_V \in \RR$.
   The latter involves just a single real multiplication.
   Let us write $\wbar U'$ and $\wbar V'$ for the images of
   $\wbar U$ and $\wbar V$ in $\RR[X]/C(X)$.
   A similar calculation to that used in
   the proof of Theorem \ref{thm:low} shows that
   \begin{align*}
      \norm{\wtilde W - (\gamma^* \wbar U')(\gamma^* \wbar V')}
      & \leq \norm{\wtilde W - \wtilde U \wtilde V}
         + \norm{\wtilde U (\wtilde V - \gamma^* \wbar V')}
         + \norm{(\gamma^* \wbar V') (\wtilde U - \gamma^* \wbar U')} \\
      & \leq \norm{\wtilde W - \wtilde U \wtilde V}
         + N \norm{\wtilde U} \norm{\wtilde V - \gamma^* \wbar V'}
         + N \norm{\gamma^* \wbar V'} \norm{\wtilde U - \gamma^* \wbar U'} \\
      & \leq 2^{2b+4+\lg N-p} + N \cdot 2^{b+2} 2^{b+2-p}
         + N \cdot 2^{b+2} 2^{b+2-p} \\
      & \leq 48 \cdot 2^{2b+\lg N-p}
   \end{align*}
   and that
   \begin{align*}
   |\theta_W - (\rho^{-N} \wbar U(\rho)) (\rho^{-N} \wbar V(\rho))|
   & \leq |\theta_W - \theta_U \theta_V| + |\theta_U| |\theta_V - \rho^{-N} \wbar V(\rho)| \\
   & \phantom{abcdefghijkl} + |\rho^{-N} \wbar V(\rho)| |\theta_U - \rho^{-N} \wbar U(\rho)| \\
   & \leq 2^{2b+4+\lg N-p} + 2^{b+2} 2^{b+2-p} + 2^{b+2} 2^{b+2-p} \\
   & \leq 48 \cdot 2^{2b+\lg N-p}.
   \end{align*}
   These two inequalities may be expressed more briefly in combination
   by writing
   \[ \norm{(\wtilde W, \theta_W) -
         (\gamma^\dagger \wbar U)(\gamma^\dagger \wbar V)}
         \leq 48 \cdot 2^{2b+\lg N-p}. \]

   Line 5 computes an approximation $\wbar W$ to
   $\delta^\dagger (\wtilde W, \theta_W)$.
   Using Lemma \ref{lem:delta-multiply} and
   Lemma \ref{lem:delta-dagger-norm} we find that
   \begin{align*}
      \norm{\wbar W - (1 - 2^{-b} X) \wbar U \wbar V}
      & = \norm{\wbar W - \delta^\dagger(
         \gamma^\dagger \wbar U \cdot \gamma^\dagger \wbar V)} \\
      & \leq \norm{\wbar W - \delta^\dagger(\wtilde W, \theta_W)}
         + \norm{\delta^\dagger((\wtilde W, \theta_W)
            - \gamma^\dagger \wbar U \cdot \gamma^\dagger \wbar V)} \\
      & \leq 2^{2b+6+\lg N-p} + 3 \cdot 48 \cdot 2^{2b+\lg N - p} \\
      & = 208 \cdot 2^{2b+\lg N-p} < 2^{-b}/2.
   \end{align*}

   As noted earlier, the coefficients of $H(X)$ are exactly those of
   $(1 - 2^{-b} X) \wbar U \wbar V$.
   We know from Proposition \ref{prop:high-cancel} that
   $2^b H(X)$ has integer coefficients,
   so we deduce that $\round(2^b \wbar W_i) = 2^b H_i$ for each
   $i = 0, \ldots, N$.
   Thus the value $t$ computed in line~6 satisfies
   \[ t = \frac{H(2^b)}{2^{(N+2)b-n}}. \]
   Applying Proposition \ref{prop:high-cancel}, we obtain
   
   \begin{multline*}
      uv - 2^n t
         = \frac{U(2^b)V(2^b)}{2^{(2N+2)b-2n}} - \frac{H(2^b)}{2^{(N+2)b-2n}} \\
         = \frac{W(2^b) - 2^{Nb} H(2^b)}{2^{(2N+2)b-2n}}
         = \frac{1}{2^{(2N+2)b-2n}} \sum_{i=0}^{N-1} w_i 2^{ib}.
   \end{multline*}
   We have $w_i \leq 2^{2b} N$ for $i = 0, \ldots, N-1$,
   since each $w_i$ is a sum of $i+1$ terms of the form $u_j v_k$,
   so
   \[ 0 \leq uv - 2^n t \leq \frac{1}{2^{(2N+2)b-2n}} \cdot (2^{2b} N)
         \cdot \frac{2^{bN} - 1}{2^b - 1}
         \leq \frac{16}{15} \cdot \frac{N}{2^{(N+1)b - 2n}}.  \]
   The hypothesis $(N + 1)b \geq n + \lg N + 2$ then yields
   \begin{equation}
   \label{eq:uv-error}
      0 \leq uv - 2^n t < \frac{2^n}{2}. 
   \end{equation}
   Let $w \coloneqq \round(t)$ be the value returned in line 7.
   Since $0 \leq uv < 2^{2n}$, the inequality \eqref{eq:uv-error}
   implies that $-\frac12 < t < 2^n$, and hence that $0 \leq w \leq 2^n$.
   Moreover, since $|t - w| \leq \frac12$, we conclude that
   \[ |uv - 2^n w| \leq |uv - 2^n t| + 2^n |t - w|
      < \frac{2^n}{2} + \frac{2^n}{2} = 2^n \]
   as desired.
   The running time analysis is essentially the same as in
   the proof of Theorem \ref{thm:low}.
\end{proof}

\section{Implementation and performance}
\label{sec:code}

We wrote an implementation of the new truncated product algorithms
in the C programming language,
together with a comparable implementation of the full product,
to examine to what extent the predicted 25\% reduction in complexity
can be realised in practice.
The source code is available from the author's web page
under a free software license.

\begin{table}[h]
   \caption{Timings for full and truncated products.
   Values in parentheses indicate ratio of times for
   truncated vs full product.}
   \begin{tabular}{r|cc|cc|c|c}
      \toprule
      \multicolumn{1}{c|}{$n$} &
      \multicolumn{2}{c|}{low product} &
      \multicolumn{2}{c|}{high product} & full product & GMP \\
      \midrule
            1\,000\,000 & 2.90ms & (0.94) & 2.93ms & (0.95) & 3.07ms & 2.68ms \\
            2\,154\,434 & 7.11ms & (1.02) & 7.27ms & (1.05) & 6.95ms & 6.93ms \\
            4\,641\,588 & 14.7ms & (0.95) & 15.2ms & (0.99) & 15.4ms & 16.6ms \\
           10\,000\,000 & 36.2ms & (0.92) & 37.8ms & (0.96) & 39.5ms & 39.1ms \\
           21\,544\,346 & 88.6ms & (0.89) & 92.6ms & (0.93) & 99.2ms & 99.4ms \\
           46\,415\,888 &  204ms & (0.90) &  210ms & (0.93) &  227ms &  237ms \\
          100\,000\,000 &  504ms & (0.84) &  514ms & (0.86) &  598ms &  553ms \\
          215\,443\,469 &  1.25s & (0.91) &  1.28s & (0.93) &  1.37s &  1.35s \\
          464\,158\,883 &  2.76s & (0.91) &  2.81s & (0.93) &  3.03s &  3.05s \\
       1\,000\,000\,000 &  6.08s & (0.86) &  6.19s & (0.88) &  7.05s &  6.93s \\
       2\,154\,434\,690 &  13.9s & (0.86) &  14.2s & (0.88) &  16.1s &  17.0s \\
       4\,641\,588\,833 &  33.6s & (1.01) &  34.6s & (1.04) &  33.4s &  38.1s \\
      10\,000\,000\,000 &   109s & (1.37) &   110s & (1.38) &  79.8s &  81.6s \\
      \bottomrule
   \end{tabular}
   \label{tab:timings}
\end{table}

The timings reported in Table \ref{tab:timings} were run on a
single core of an otherwise idle 2.5GHz Intel Xeon Gold 6248
(Cascade Lake microarchitecture),
running Rocky Linux 8.8 (kernel version 4.18.0).
We compiled our program using GCC 12.2.0 with the optimisation flags
\texttt{-O3 -mavx2 -mavx512f -ffast-math}.
In the critical inner loops, our code uses GCC's vector extensions to take advantage of the AVX2 instruction set available on the target platform.

For the real convolutions,
our code relies on the one-dimensional
real-to-complex and complex-to-real transforms
provided by the FFTW library (version 3.3.10) \cite{Fri-fftw}.
We configured FFTW using the \verb!--enable-avx2 --enable-avx512! flags,
and used FFTW's ``wisdom'' facility with the \verb!FFTW_MEASURE! option to find
efficient transform sequences for all relevant transform lengths.

Our implementation differs from the theoretical presentation in
Section \ref{sec:low} and Section \ref{sec:high} in several respects:
\begin{itemize}
   \item
   Instead of fixed point arithmetic, we use double-precision floating point
   (the \texttt{double} data type in C).
   In particular, this applies to the routines that compute
   $\alpha^*$, $\beta^*$, $\gamma^\dagger$ and $\delta^\dagger$,
   and also the FFTs and pointwise multiplications.
   (The splitting and overlap-add steps are handled using integer arithmetic.)
   We make no attempt to prove any bounds for round-off error.
   This is impossible anyway in the context of our program,
   as FFTW does not offer any error guarantees.
   
   \item
   In the splitting step we allow signed coefficients.
   For example, we write $u = U(2^b)$ where the coefficients of $U$ are integers lying in the balanced interval $|U_i| \leq 2^{b-1}$.
   This leads to less coefficient growth in the product $U(X) V(X)$: instead of these coefficients having roughly $2b + \lg N$ bits, for uniformly random inputs they tend to have around $2b + \frac12 \lg N$ bits, due to cancellation between the positive and negative terms.
   Of course, an adversary could easily choose inputs for which every $U_i$ and $V_i$ is close to $2^{b-1}$, in which case the product coefficients will have close to $2b + \lg N$ bits.
   In this case our program will certainly produce incorrect output, unless we decrease $b$ to compensate.
\end{itemize}

Table \ref{tab:timings} shows timings for our low product, high product,
and full product routines for various choices of $n$
(with parameters as indicated in Table \ref{tab:parameters}),
as well as timings for the full product computed by the \verb|mpz_mul| function
from the GMP multiple-precision arithmetic library
(version 6.2.1) \cite{gmp-6.2.1}.
The reported timings are averages taken over numerous tests;
for each entry in the table, the measurements are quite stable,
with standard deviation around 1--2\% of the average time.

Comparing the performance of our code against GMP is not quite fair,
because in principle GMP performs a provably correct computation,
whereas the output of our program is not provably correct
(as explained above).
Nevertheless, the timings demonstrate that our code is competitive with the
highly optimised multiplication routines in GMP.

For each $n$ shown in Table \ref{tab:timings},
we chose the parameters as follows.
We ran a large number of tests to determine the maximum possible $b$ for which
the program consistently produces the correct output
for uniformly random inputs $u$ and~$v$.
The parameter $\lambda$ refers to the number of terms used in
the approximation of the ring isomorphisms such as $\alpha^*$;
it has the same meaning as in the proof of Proposition~\ref{prop:approx-alpha}.
Again, we chose $\lambda$ by empirical testing,
taking the smallest value that led to consistently correct output.
Regarding the choice of $N$,
we examined several possible candidates,
namely those of the form $N = 2^{e_2} 3^{e_3} 5^{e_5} 7^{e_7}$
where $n/b \leq N \leq 1.15 n/b$ and $3^{e_3} 5^{e_5} 7^{e_7} < 200$,
and chose the candidate that led to the fastest timings.
(In principle the choice of $N$ could also affect correctness,
due to different FFT algorithms being used for different $N$,
but in practice we found it did not make a difference.)
The resulting values of $N$, $b$ and $\lambda$ are shown
in Table \ref{tab:parameters};
these are the values that were used to produce
the timings in Table \ref{tab:timings}.

The final column in Table~\ref{tab:parameters} gives the ratio between
the transform length used for the truncated products
(the column labelled $N$)
and the corresponding transform length for the full product
(the column labelled $2N$).
As discussed in \S\ref{sec:results} (scenario \#2),
we expect this ratio to be close to $3/4$.
The observed values are reasonably close to $3/4$,
but there is some variation due to the sparsity of available transform lengths.
We also predicted in \S\ref{sec:results} that the ratio of the corresponding
values of $b$ should be about $2/3$;
this is borne out clearly in Table \ref{tab:parameters}.

\newcommand{\slimcdot}{\mspace{1.5mu}\mathord{\cdot}\mspace{1.5mu}}

\begin{table}[h]
\caption{Parameter choices}
\centerline{
\begin{tabular}{r|r@{\hspace{3pt}}lc@{\hspace{6pt}}c|r@{\hspace{3pt}}lc|c}
\toprule
& \multicolumn{4}{c|}{truncated products}
& \multicolumn{3}{c|}{full product} & length \\
\multicolumn{1}{c|}{$n$} &
$N$ & & $b$ & $\lambda$ & $2N$ & & $b$ & ratio \\
\midrule
1\,000\,000 &
71\,680 & $= 2^{11} \slimcdot 35$ & 14 & 4 &
98\,304 & $= 2^{15} \slimcdot 3$ & 21 & 0.73 \\
2\,154\,434 &
165\,888 & $= 2^{11} \slimcdot 81$ & 13 & 4 &
215\,040 & $= 2^{11} \slimcdot 105$ & 21 & 0.77 \\
4\,641\,588 &
358\,400 & $= 2^{11} \slimcdot 175$ & 13 & 4 &
512\,000 & $= 2^{12} \slimcdot 125$ & 20 & 0.70 \\
10\,000\,000 &
786\,432 & $= 2^{18} \slimcdot 3$ & 13 & 4 &
1\,024\,000 & $= 2^{13} \slimcdot 125$ & 20 & 0.77 \\
21\,544\,346 &
1\,769\,472 & $= 2^{16} \slimcdot 27$ & 13 & 4 &
2\,211\,840 & $= 2^{14} \slimcdot 135$ & 20 & 0.80 \\
46\,415\,888 &
3\,670\,016 & $= 2^{19} \slimcdot 7$ & 13 & 4 &
4\,915\,200 & $= 2^{16} \slimcdot 75$ & 19 & 0.75 \\
100\,000\,000 &
8\,388\,608 & $= 2^{23}$ & 12 & 4 &
10\,616\,832 & $= 2^{17} \slimcdot 81$ & 19 & 0.79 \\
215\,443\,469 &
18\,874\,368 & $= 2^{21} \slimcdot 9$ & 12 & 4 &
23\,592\,960 & $= 2^{19} \slimcdot 45$ & 19 & 0.80 \\
464\,158\,883 &
39\,321\,600 & $= 2^{19} \slimcdot 75$ & 12 & 5 &
52\,428\,800 & $= 2^{21} \slimcdot 25$ & 19 & 0.75 \\
1\,000\,000\,000 &
83\,886\,080 & $= 2^{24} \slimcdot 5$ & 12 & 5 &
113\,246\,208 & $= 2^{22} \slimcdot 27$ & 18 & 0.74 \\
2\,154\,434\,690 &
188\,743\,680 & $= 2^{22} \slimcdot 45$ & 12 & 5 &
262\,144\,000 & $= 2^{21} \slimcdot 125$ & 18 & 0.72 \\
4\,641\,588\,833 &
452\,984\,832 & $= 2^{24} \slimcdot 27$ & 11 & 5 &
524\,288\,000 & $= 2^{22} \slimcdot 125$ & 18 & 0.86 \\
10\,000\,000\,000 &
939\,524\,096 & $= 2^{27} \slimcdot 7$ & 11 & 5 &
1\,207\,959\,552 & $= 2^{27} \slimcdot 9$ & 17 & 0.78 \\
\bottomrule
\end{tabular} }
\label{tab:parameters}
\end{table}

Finally, let us discuss the main quantity of interest,
namely, the ratio of the running times between the truncated products
and the full product.
These ratios are shown in parentheses in Table \ref{tab:timings}.
In an ideal world these ratios would be close to $0.75$.
Unfortunately, the values shown in the table fall somewhat short of this goal.
As $n$ increases,
the performance appears to pass through three distinct phases:
\begin{itemize}
   \item
   In the first few rows of the table, the ratio is close to $1$.
   For these small values of $n$,
   the savings from the shorter transform lengths in the truncated products
   are outweighed by the additional cost of evaluating the ring isomorphisms.
   \item
   There is a wide range of intermediate values of $n$
   where the ratios for the low product lie roughly between 0.85 and 0.9,
   i.e., we see a speedup of 10--15\% compared to the full product.
   The high product lags behind by about 2--3\%.
   \item
   In the last two rows of the table,
   there is a serious degradation in performance.
   This appears to be mainly due to problems with FFTW's handling of
   large transforms of composite length.

   For example, for the low product in the last line,
   the FFTs of size $2^{27} \cdot 7$ account for roughly 87s of the total 109s,
   whereas for the corresponding full product,
   the FFTs of length $2^{27} \cdot 9$ account
   for about 63s out of the total 80s.
   One would normally expect the ratio of these FFT times to be about
   $7/9 \approx 0.78$ rather than $87/63 \approx 1.38$.
   We did not explore the underlying reasons for this discrepancy,
   but we speculate that it is caused by suboptimal locality in the
   algorithms that FFTW uses to decompose a large composite-length FFT
   into smaller transforms.
\end{itemize}

In summary, our implementation shows that it is possible to achieve
a nontrivial speedup for the computation of truncated products,
over a wide range of values of $n$.
The author is hopeful that the running times may be improved further
by more careful optimisation work,
especially in the subroutines for computing the ring isomorphisms.
It also seems likely that the performance for large values of $n$
may be improved by suitable tweaking of the underlying FFT algorithms.

\section*{Acknowledgments}

The author thanks Joris van der Hoeven for his comments
on a draft of this paper.
The performance tests were carried out on
the Katana computing cluster at UNSW \cite{katana};
many thanks to Martin Thompson for technical support.
The author was supported by the Australian Research Council,
grants DP150101689 and FT160100219.

\bibliographystyle{amsplain}
\bibliography{truncmul}

\end{document}